\def\th@plain{%
  \thm@notefont{}% same as heading font
  \itshape % body font
}
\def\th@definition{%
  \thm@notefont{}% same as heading font
  \normalfont % body font
}
\newtheorem{theorem}{Theorem}[section]
\newtheorem{lemma}[theorem]{Lemma}
\newtheorem{corollary}[lemma]{Corollary}
\theoremstyle{definition}\newtheorem{definition}[lemma]{Definition}
\theoremstyle{remark}  
\newlist{thmparts}{enumerate}{1}
\setlist[thmparts]{labelindent=\parindent,leftmargin=*,itemsep=2pt,font=\normalfont,label=(\thetheorem.\arabic*)}
\Crefname{thmpartsi}{Subresult}{Subresults} % any better name? subtheorem?
\declaretheoremstyle[%
  spaceabove=-6pt,%
  spacebelow=6pt,%
  headfont=\normalfont\itshape,%
  postheadspace=1em,%
  qed=\qedsymbol%
]{mystyle} 
\algnewcommand{\LineComment}[1]{\State \textcolor{blue}{$\triangleright$ #1}}
\DeclareMathOperator{\EE}{\mathbb{E}}
\newcommand{\FF}{\mathbb{F}}
\newcommand{\NN}{\mathbb{N}}
\newcommand{\cA}{\mathcal{A}}
\newcommand{\cB}{\mathcal{B}}
\newcommand{\cC}{\mathcal{C}}
\newcommand{\cF}{\mathcal{F}}
\newcommand{\cG}{\mathcal{G}}
\newcommand{\cH}{\mathcal{H}}
\newcommand{\cI}{\mathcal{I}}
\newcommand{\cK}{\mathcal{K}}
\newcommand{\cN}{\mathcal{N}}
\newcommand{\cS}{\mathcal{S}}
\newcommand{\cT}{\mathcal{T}}
\newcommand{\tO}{\widetilde{O}}
\newcommand{\tOmega}{\widetilde{\Omega}}
\newcommand{\hatn}{\hat{n}}
\DeclareMathOperator{\poly}{poly}
\newcommand{\ceil}[1]{{\left\lceil{#1}\right\rceil}}
\newcommand{\floor}[1]{{\left\lfloor{#1}\right\rfloor}}
\newcommand{\eps}{\varepsilon}
\newcommand\squeezepar{\@startsection{paragraph}{4}{\z@}{1.5ex \@plus1ex \@minus.2ex}{-1em}{\normalfont\normalsize\bfseries}}
\renewcommand{\paragraph}[1]{\squeezepar{{#1}.}}
\newcommand{\pell}{{(\ell)}} % parenthesized \ell
\newcommand{\pellxi}{{(\ell,\xi)}} % parenthesized \ell,\xi
 \newcommand{\mypar}[1]{\medskip\noindent{\bfseries #1.}~}
\title{Low-Memory Algorithms for Online and W-Streaming Edge Coloring}
\author{Prantar Ghosh\thanks{DIMACS, Rutgers University. Research supported in part by a grant (820931) to DIMACS from the Simons Foundation.} \and 
Manuel Stoeckl\thanks{Department of Computer Science, Dartmouth College. This work was supported in part by the National Science Foundation under award 2006589.}}
\date{}
\begin{document}

\maketitle

\begin{abstract}
For edge coloring, the online and the W-streaming models
seem somewhat orthogonal: the former needs edges to be 
assigned colors immediately after insertion, typically 
without any space restrictions, while the latter limits 
memory to sublinear in the input size but allows an 
edge's color to be announced any time after its 
insertion. We aim for the best of both worlds by 
designing small-space online algorithms for edge 
coloring. We study the problem under both (adversarial) 
edge arrivals and vertex arrivals. Our results 
significantly improve upon the memory used by prior 
online algorithms while achieving an $O(1)$-competitive 
ratio. In particular, for $n$-node graphs with maximum 
vertex-degree $\Delta$ under edge arrivals, we obtain an 
online $O(\Delta)$-coloring in $\tilde{O}(n\sqrt{\Delta})$ space. 
This is also the first W-streaming edge-coloring algorithm using $O(\Delta)$ colors (in sublinear memory). All prior works either used linear memory or $\omega(\Delta)$ colors. We also achieve a smooth color-space tradeoff: for any $t=O(\Delta)$, we get an $O(\Delta t (\log^2 \Delta))$-coloring in $\tilde{O}(n\sqrt{\Delta/t})$ space, improving upon the state of the art that used $\tilde{O}(n\Delta/t)$ space for the same number of colors (the $\tO(.)$ notation hides polylog$(n)$ factors). The improvements stem from extensive use of random permutations that enable us to avoid previously used colors. Most of our algorithms can be derandomized and extended to multigraphs, where edge coloring is known to be considerably harder than for simple graphs.
\end{abstract}

\section{Introduction}

A proper edge-coloring of a graph or a multigraph colors its edges such that no two adjacent edges share the same color. The goal is to use as few colors as possible. Any graph with maximum vertex-degree $\Delta$ trivially requires $\Delta$ colors to be properly edge-colored. A celebrated theorem of Vizing \cite{Vizing64} says that $\Delta+1$ colors suffice for any simple graph.\footnote{For multigraphs, $3\Delta/2$ colors are necessary and sufficient. \cite{Shannon49}} There are constructive  polynomial time algorithms that achieve a $(\Delta+1)$-edge-coloring in the classical offline setting \cite{MisraG92}. These algorithms are likely to be optimal with respect to the number of colors: distinguishing between whether the edge-chromatic number (i.e., the minimum number of colors needed to edge-color a graph) of a simple graph is $\Delta$ or $\Delta+1$ is NP-hard \cite{Holyer81}.  

The edge-coloring problem has several practical 
applications, including in switch routing 
\cite{aggarwal2003switch}, round-robin tournament 
scheduling \cite{JanuarioURD16}, call scheduling 
\cite{ErlebachJ01}, optical networks \cite{RaghavanU94}, 
and link scheduling in sensor networks 
\cite{GandhamDP05}. In many of these applications, such 
as in switch routing, the underlying graph is built 
gradually by a sequence of edge insertions and the color 
assignments need to be done instantly and irrevocably. 
This is modeled by the \emph{online} edge coloring 
problem. Due to its restrictions, an online algorithm 
cannot obtain a $(\Delta+1)$-coloring \cite{BarNoyMN92}. 
Consider, however, the simple greedy algorithm that 
colors every edge with the first available color that is 
not already assigned to any of its neighbors. Since each 
edge can have at most $2\Delta-2$ adjacent edges, this 
algorithm achieves a $(2\Delta-1)$-coloring, i.e., a 
competitive ratio of $2-o(1)$ (since the optimum is 
$\Delta$ or $\Delta+1$). Bar-Noy, Motwani, and Naor 
\cite{BarNoyMN92} showed that no online algorithm can 
perform better than this greedy algorithm. However, they 
proved this only for graphs with max-degree 
$\Delta=O(\log n)$. They conjectured that for 
$\Delta=\omega(\log n)$, it is possible to get better 
bounds, and that, in particular, a $(1+o(1))\Delta$-coloring
is possible. Several works \cite{aggarwal2003switch, BahmaniMM12, CohenPW19, BhattacharyaGW21, SaberiW21, KulkarniLSST22, NaorSW23} have studied online edge 
coloring with the aim of beating the greedy algorithm and/or resolving the said conjecture. Other variants of the problem have also been studied \cite{FavrholdtN03, Mikkelsen16, FavrholdtM18}. However, all prior works assume that all graph edges are always stored in the memory along with their colors.

With the ubiquity of big data in the modern world, this assumption often seems fallacious. The graphs that motivate the study of edge coloring, such as communication and internet routing networks, turn out to be large-scale or massive graphs in today's world, making it expensive for servers to store them entirely in their memory. This has led to big graph processing models such as \emph{graph streaming} that, similar to the online model, have sequential access to the graph edges, but can only store a small summary of the input graph so as to solve a problem related to it. There is an immediate barrier for the edge coloring problem in this setting: the output size is as large as the input, and hence an algorithm must use space linear in the input size to present the output as a whole. To remedy this, one can consider the natural extension of the model where the output is also reported in streaming fashion: in the context of edge coloring, think of the algorithm having a limited working memory to store information about both the input graph and the output coloring; it periodically streams or announces the edge colors before deleting them from its memory. This is the so called W-streaming model. Unlike the online model, here we don't need to assign a color to the incoming edge right away, and can defer it to some later time. However, due to the space restriction, we are not able to remember all the previously announced colors. Note that this makes even the greedy $(2\Delta-1)$-coloring algorithm hard (or maybe impossible) to implement in this model.

 In this work, we aim to get the best of both worlds of the online and the streaming models: we focus on designing \emph{low-memory} online algorithms for edge coloring. This is motivated by modern practical scenarios that demand immediate color assignment as well as space optimization. We succeed in designing such algorithms and at the same time, the quality of our algorithms is close to optimal: we achieve an $O(1)$-competitive ratio, i.e., a color bound of $O(\Delta)$. Note that no prior work studying edge-coloring in the sublinear-space setting could attain an $O(\Delta)$-coloring W-streaming algorithm, let alone online. For adversarial edge-arrival streams, we get an online $O(\Delta)$-coloring in $O(n\sqrt{\Delta})$ space, significantly reducing the space used by prior online algorithms at the cost of only a constant factor in the number of colors. We can smoothly tradeoff space with colors to get an $O(\Delta t)$-coloring in $\tO(n\sqrt{\Delta/t})$ space. This improves upon the state of the art \cite{CharikarL21,AnsariSZ22} which obtained the same color bound using $\tO(n\Delta/t)$ space. Furthermore, for the natural and well-studied settings of vertex-arrival in general graphs and one-sided vertex arrival in bipartite graphs, we can improve the space usage to $O(n \text{ polylog }n)$, i.e., semi-streaming, which is the most popular memory regime for graph streaming problems. Most of our algorithms generalize to multigraphs and can be made deterministic.

\subsection{Our Results and Contributions}

We study edge-coloring in the online model with sublinear (i.e., $o(n\Delta)$) memory as well as in the W-streaming model and improve upon the state of the art.
These results are summarized in \Cref{table:online} and \Cref{table:wstream}. They also mention the state of the art, for comparison.

\begin{table}[h]
\centering
\begin{tabular}{ c c c c c l }
\toprule
  Arrival & Algorithm & Colors & Space & Graph & Reference \\ 
  \midrule
   Edge & Randomized & $\left(\frac{e}{e-1} + o(1)\right)\Delta$ & $\tO(n \Delta)$& Simple & \cite{KulkarniLSST22}\\
  Edge & Randomized & $O(\Delta )$& $\tO(n \sqrt{\Delta})$ & Simple & \Cref{thm:online-rand-ea-edgecol} \vspace{1mm}\\ 
  Edge  & Deterministic & $(2\Delta-1) t$ & $O(n\Delta/t)$& Multigraph &\cite{AnsariSZ22}\\
     
  Edge & Deterministic & $\tO(\Delta t)$ & $\tO(n \sqrt{\Delta/t})\star$ & Multigraph & \Cref{thm:online-det-ea-edgecol}\vspace{2mm}\\  
  
  Vertex & Randomized & $(1.9+o(1))\Delta$ & $\tO(n\Delta)$ & Simple & \cite{SaberiW21}\\
  Vertex & Randomized & $O(\Delta)$ & $\tO(n)\star$ & Multigraph & \Cref{thm:online-rand-va-edgecol} \vspace{1mm}\\
  Vertex & Deterministic & $2\Delta-1$& $O(n\Delta)$ & Multigraph & Greedy folklore\\
  Vertex & Deterministic & $O(\Delta)$& $\tO(n)\star$ & Multigraph & \Cref{thm:online-det-va-edgecol}\vspace{2mm}\\

 One-sided vertex & Randomized & $(1+o(1))\Delta $ & $\tO(n\Delta)$ & Simple & \cite{CohenPW19}\\
  One-sided vertex & Randomized & $1.533\Delta$ & $\tO(n\Delta)$ & Multigraph & \cite{NaorSW23}\\
  One-sided vertex & Randomized & $5\Delta$ & $\tO(n)\star$ & Multigraph & \Cref{lem:online-bpt-rand-va-edgecol}\vspace{1mm}\\
  One-sided vertex & Deterministic & $2\Delta-1$ & $O(n\Delta)$ & Multigraph & Greedy folklore\\
  One-sided vertex & Deterministic & $O(\Delta)$ & $\tO(n)\star$ & Multigraph & \Cref{lem:online-bpt-det-va-edgecol}\\
\bottomrule
\end{tabular}
\caption{Our results in the online model. Here, $t=O(\Delta)$ is any positive integer. Algorithms marked with a $\star$ require oracle randomness for randomized algorithms and advice computable in exponential time for deterministic.}
\label{table:online}
\end{table}

\begin{table}[h]
\centering
\begin{tabular}{ c c c c c l }
\toprule
 Algorithm & Colors & Space & Graph & Reference\\
\midrule
 Randomized & $O(\Delta^2/s)$ & $\tO(ns)$ & Simple &\cite{CharikarL21}\\
 Randomized & $O(\Delta^2/s)$& $\tO(n \sqrt{s})$ & Simple & \Cref{cor:wstream-rand-ea}\\
 Randomized & $O(\Delta^2/s)$& $\tO(n \sqrt{s})\star$ & Multigraph & \Cref{thm:Wstream-rand-ea-edgecol}\vspace{1mm}\\
 Deterministic & $(1-o(1))\Delta^2/s$ & $O(ns)$& Simple &\cite{AnsariSZ22}\\
 Deterministic & $\tO(\Delta^2/s)$ & $\tO(n \sqrt{s})\star$ & Multigraph & \Cref{cor:wstream-det-ea}\\
\bottomrule
\end{tabular}
\caption{Our results in the W-streaming model. Here, $s\leq \Delta/2$ is any positive integer. Results marked with $\star$ require oracle randomness for randomized algorithms and advice computable in exponential time for deterministic.}
\label{table:wstream}
\end{table}

We consider the problem under (adversarial) edge-arrivals as well as vertex-arrivals. We give an account of our results in each of these models below.   

\mypar{Edge-arrival model} Here we design both online and W-streaming algorithms. 

\begin{theorem}[Formalized in \Cref{thm:online-rand-ea-formal}]\label{thm:online-rand-ea-edgecol}
  Given any adversarial edge-arrival stream of a simple graph, there is a randomized algorithm for online $O(\Delta)$-edge-coloring using $\tO(n \sqrt{\Delta})$ bits of space. % note: only for nonadaptive streams
\end{theorem}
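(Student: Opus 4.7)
The plan is to design a randomized online algorithm that partitions the color palette of $C = \Theta(\Delta)$ colors into $B = \Theta(\sqrt{\Delta})$ sub-palettes of size $s = \Theta(\sqrt{\Delta})$ each, and, for each vertex $v$, independently samples a uniformly random permutation $\sigma_v$ of the sub-palette indices $[B]$. As the abstract signals, the role of these random permutations is to spread the colors used at each vertex so that a new edge can almost always be assigned to a sub-palette in which both endpoints find a free color, while keeping per-vertex state down to $\tO(\sqrt{\Delta})$ bits.

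At each vertex $v$ the algorithm would store: the seed that generates $\sigma_v$ (in $O(\log n)$ bits, via a suitable pseudorandom permutation); and, for every sub-palette $P_i$ currently \emph{active} at $v$ (meaning $v$ has used at least one but fewer than $s$ colors from $P_i$), an $s$-bit bitmask recording which colors of $P_i$ are already used at $v$. I would design the algorithm so that, with high probability over the random permutations, each vertex has only $\polylog(n)$ active sub-palettes simultaneously, capping the per-vertex memory at $\tO(\sqrt{\Delta})$ bits.

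When an edge $(u,v)$ arrives, the algorithm uses $\sigma_u$ and $\sigma_v$ to select a common sub-palette: for instance, the index $i^\ast$ that minimizes $\sigma_u^{-1}(i) + \sigma_v^{-1}(i)$ among indices $i$ not yet full at either endpoint. Within $P_{i^\ast}$, intersecting the bitmasks at $u$ and $v$ immediately yields an actual free color, which is output. A small reserved overflow palette of $O(\sqrt{\Delta})$ extra colors is used greedily for the few edges that cannot be colored in this manner, so the overall palette remains of size $Bs + O(\sqrt{\Delta}) = O(\Delta)$.

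The principal obstacle, and the step I would dedicate the most care to, is the probabilistic analysis of the permutation-based selection rule against an adversarial stream. First one must prove concentration: for each vertex $v$, the number of edges routed through any single sub-palette $P_i$ is $O(s)$ with high probability, even though the adversary can adapt to colors already announced. I would handle this by organizing the edges at each vertex into rounds of $\Theta(s)$ edges and arguing that within each round the adversary has too little information about the remaining structure of $\sigma_v$ to force imbalance, then taking a union bound across rounds and vertices. Second, one must argue that the subgraph of edges diverted to overflow has max degree $O(\sqrt{\Delta})$, so greedy on the reserved palette suffices. Combined, these two parts yield the claimed $O(\Delta)$-coloring in $\tO(n\sqrt{\Delta})$ bits of memory.
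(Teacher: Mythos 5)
There is a genuine gap, and it sits exactly where the real work of the paper's proof lies. Your scheme uses one \emph{global} partition of the palette into sub-palettes of size $s=\Theta(\sqrt{\Delta})$, with randomness entering only through each vertex's preference order over sub-palette \emph{indices}. Consequently, once an index $i^\ast$ is selected for edge $(u,v)$, the candidate color set is the same set $P_{i^\ast}$ at both endpoints, and whether a common free color exists depends purely on the usage bitmasks: two endpoints can each be ``not full'' in $P_{i^\ast}$ (each having used up to $s-1$ of its $s$ colors) yet have complementary used sets, so ``intersecting the bitmasks immediately yields an actual free color'' is precisely the claim that needs proof, and your selection rule (exclude only indices that are full at an endpoint) does nothing to prevent it. Everything then hinges on showing that such failures are rare per vertex (so the overflow subgraph has max degree $O(\sqrt{\Delta})$) and that only $\polylog(n)$ sub-palettes are simultaneously partially filled at each vertex (your space bound is itself only a w.h.p.\ claim), both against an adaptive adversary who sees every announced color and hence learns the bitmasks and much of each vertex's preference order. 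Your round-based ``the adversary has too little information'' sketch does not establish either invariant, and it is not clear it can: the randomness in your design cannot make the two endpoints' free sets overlap, because within the chosen sub-palette there is no randomness left in the color sets.

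Contrast this with \Cref{alg:rand-ea-edgecol}: there each vertex's current candidate set is an independent (almost) random $s$-subset of the \emph{entire} palette with $s=\Theta(\sqrt{\Delta\log n})$ and $C=\Theta(\Delta)$, so any two candidate sets intersect in $\Theta(s^2/C)=\Theta(\log n)$ colors w.h.p.; the color is then chosen \emph{uniformly at random} from that intersection, which is what lets the paper control adaptive depletion of intersections via the events $H_{\{u,v\}}$, $J_{u,v}$ and a Chernoff/Azuma-type supermartingale; and the tracker (\Cref{alg:ds-rand-blocks}) refreshes after only $s\Delta/C$ colors of a block are used, deliberately wasting colors so that the per-vertex state is $O(s+\log\Delta)$ bits \emph{in the worst case}, with refresh times independent of which colors were used. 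These three design choices (per-vertex random windows rather than a shared partition, random color choice within a large intersection, and refresh-by-count) are exactly what make the whp analysis go through, and none of them has an analogue in your proposal. A smaller, fixable point: storing $\sigma_v$ as an $O(\log n)$-bit seed of a ``suitable pseudorandom permutation'' gives only computational guarantees; for the unconditional statement of \Cref{thm:online-rand-ea-formal} you need oracle randomness or explicit $(\epsilon,k)$-wise independent permutations as in \Cref{lem:fast-permutations}, costing $\tO(\sqrt{\Delta})$ (or at least $\polylog$) bits of randomness per vertex.
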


Previously, there was no sublinear space online algorithm known for $O(\Delta)$-coloring. As observed in \Cref{table:online}, all prior algorithms need $\Theta(n\Delta)$ space in the worst case to achieve a color bound of $O(\Delta)$.

Note that \Cref{thm:online-rand-ea-edgecol} immediately implies a randomized W-streaming algorithm with the same space and color bounds. Although immediate, we believe that it is important to note it as a corollary.

\begin{corollary}\label{cor:wstream-rand-ea}
Given an adversarially ordered edge stream of any simple graph, there is a randomized W-streaming algorithm for $O(\Delta)$-edge-coloring using $\tO(n \sqrt{\Delta})$ bits of space.
\end{corollary}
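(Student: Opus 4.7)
The plan is to observe that the online edge-arrival model is a strictly more restrictive version of the W-streaming model: both process edges one at a time in input order, but an online algorithm must commit to each edge's color immediately upon seeing it, while a W-streaming algorithm may defer its color assignment (subject to its space budget). Consequently, any online algorithm can be simulated in the W-streaming setting without modification, by simply writing each color to the output stream at the moment the online algorithm produces it.

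More concretely, I would invoke \Cref{thm:online-rand-ea-edgecol} to obtain a randomized online algorithm $\cA$ that $O(\Delta)$-edge-colors adversarial edge-arrival streams of simple graphs using $\tO(n\sqrt{\Delta})$ bits of space. I would then define a W-streaming algorithm $\cA'$ whose internal state is exactly that of $\cA$: on each arriving edge $e$, it runs $\cA$ on $e$ to obtain a color $c(e)$, immediately appends the pair $(e, c(e))$ to the output stream, and retains nothing further about it. Since $\cA'$ keeps no persistent state beyond what $\cA$ does, its working memory is also $\tO(n\sqrt{\Delta})$ bits, satisfying the W-streaming space budget.

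Correctness is inherited directly: the output stream of $\cA'$ is exactly the sequence of colors chosen by $\cA$, which by \Cref{thm:online-rand-ea-edgecol} is a proper edge-coloring of the input graph using $O(\Delta)$ colors with high probability. There is no genuine obstacle in the argument; the corollary is recorded because it highlights that the stronger online guarantee of \Cref{thm:online-rand-ea-edgecol} also resolves the previously open W-streaming question of whether $O(\Delta)$ colors suffice in sublinear memory for simple graphs.
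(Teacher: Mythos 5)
Your proposal is correct and matches the paper's argument exactly: the corollary is stated as an immediate consequence of \Cref{thm:online-rand-ea-edgecol}, since an online algorithm that commits a color to each edge on arrival is in particular a W-streaming algorithm with the same space and color bounds. Nothing further is needed.
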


The above result improves upon the state of the art algorithms of \cite{CharikarL21,AnsariSZ22} which, as implied by \Cref{table:wstream}, only obtain $\omega(\Delta)$-colorings for $o(n\Delta)$ space (the non-trivial memory regime in W-streaming). In fact, we improve upon them by a factor of $\Omega(\sqrt{\Delta})$ in space for $O(\Delta)$-coloring.

We show that the above W-streaming algorithm can be made to work for multigraphs and against adaptive adversaries at the cost of $\tO(n\Delta)$ bits of oracle randomness.

\begin{theorem}[Formalized in \Cref{thm:Wstream-rand-ea-formal}]\label{thm:Wstream-rand-ea-edgecol}
  Given an adversarially ordered edge stream of any multigraph, there is a randomized W-streaming algorithm for $O(\Delta)$ edge-coloring using $\tO(n \sqrt{\Delta})$ bits of space and $\tO(n\Delta)$ bits of oracle randomness. The algorithm works even against adaptive adversaries.  
\end{theorem}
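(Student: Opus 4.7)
The plan is to upgrade Corollary~\ref{cor:wstream-rand-ea}, which already handles simple graphs against oblivious adversaries in $\tO(n\sqrt{\Delta})$ working space, to work on multigraphs and against adaptive adversaries, paying $\tO(n\Delta)$ bits of oracle randomness. I would split the lifting into two essentially independent steps.

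For the multigraph extension, I would exploit that the underlying algorithm's random choices (as the abstract hints) are driven by per-edge random permutations and hashes. For a simple graph these can be keyed on the unordered pair of endpoints, but in a multigraph one must distinguish parallel copies, and storing a counter per vertex pair would blow up the working memory. Instead, I would use the oracle tape as a collection of $\tO(n\Delta)$ independent random strings indexed by ``edge slots'', for instance pairs $(v, k)$ where $k$ is the rank of the incidence among the edges incident to $v$ in the stream order. On each arrival, the algorithm reads the slot associated with the new edge and uses that string as fresh per-edge randomness; this needs only $O(\log(n\Delta))$ bits of working memory per query and reduces the multigraph case to the simple-graph analysis edge by edge.

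For the adaptive-adversary extension, I would rely on the fact that the oracle bits are sampled before the stream and are never leaked except implicitly through the announced colors. Building the natural filtration $(\mathcal{F}_t)$ from the adversary's view (the stream prefix and the colors released so far), I would show that the oracle bits for edges not yet inserted remain independent of $\mathcal{F}_t$, and that every ``bad event'' appearing in the oblivious-adversary analysis is $\mathcal{F}_t$-measurable. Any probability bound proved under a fixed stream then transfers to the adaptive setting. This is the standard template for streaming algorithms whose randomness is read-only and indexed in a way that each coordinate is touched at most once.

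The main obstacle will be verifying that the simple-graph analysis factors cleanly through per-edge oracle lookups: if the original algorithm uses a single random object (for instance, a global permutation of the color palette) whose description is effectively revealed through its outputs, the filtration argument breaks. I would sidestep this by replacing any such global object with a family of independent per-edge random choices, which the $\tO(n\Delta)$ oracle budget easily accommodates. Once that factorization is verified, the $\tO(n\sqrt{\Delta})$ working-space and $O(\Delta)$ color bounds are inherited directly from Corollary~\ref{cor:wstream-rand-ea}.
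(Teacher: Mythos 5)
There is a genuine gap, and it is in the multigraph step. The algorithm behind \Cref{cor:wstream-rand-ea} (namely \Cref{alg:rand-ea-edgecol}, from \Cref{thm:online-rand-ea-formal}) does not fail on multigraphs because of how its randomness is keyed; it fails structurally. Every copy of a parallel edge $\{u,v\}$ that arrives before either free-color tracker refreshes must receive a distinct color from $F_u \cap F_v$, and the whole analysis only guarantees $|F_u \cap F_v| = \Theta(s^2/C) = \Theta(\log (n/\delta))$, whereas a tracker refreshes only after $\Theta(s\Delta/C) = \Theta(\sqrt{\Delta \log(n/\delta)})$ removals. So an adversary that simply repeats one edge $\omega(\log n)$ times exhausts the common palette and forces an abort, no matter how fresh the per-copy "edge slot" randomness is: the obstruction lives in the deterministic state (the trackers), not in correlations of the random bits. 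Your proposed sidestep for the adaptive case --- replacing the per-vertex permutations by independent per-edge random choices --- makes things worse, because the per-vertex permutation blocks are exactly the mechanism that prevents a vertex from ever repeating a color; with independent per-edge colors the properness guarantee (and hence the $O(\Delta)$ bound) no longer follows from anything you have argued, and you would need an essentially new algorithm and analysis. Your filtration argument for adaptivity is also not a routine patch here, since the paper's analysis of \Cref{alg:rand-ea-edgecol} fixes the stream in advance and defines its bad events relative to that fixed stream; the paper never claims that algorithm is robust.

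For contrast, the paper proves this theorem by a different composition that gives up the online property and exploits W-streaming deferral: it reduces general graphs to $O(\log n)$ bipartite instances (\Cref{lem:cvt-gen-to-bpt-ea}), then applies the vertex-arrival-to-edge-arrival conversion (\Cref{lem:va-to-ea-conversion}, \Cref{alg:va-to-ea-conversion}) on top of the randomized one-sided vertex-arrival algorithm (\Cref{lem:online-bpt-rand-va-edgecol}). That conversion buffers edges in a pool until a vertex reaches degree $\approx \sqrt{\Delta}$, diverts edges of multiplicity above $\approx \sqrt{\Delta}/\log(n/\delta)$ to a separate pool colored greedily at the end of the stream (this is how multiplicity is tamed), and ships each completed star to a uniformly random not-yet-used instance among $O(\sqrt{\Delta})$ copies of the vertex-arrival algorithm; robustness to adaptive adversaries comes from the fresh random instance choice per star (a martingale/Azuma bound that holds conditioned on any history) together with the fact that the vertex-arrival analysis conditions on arbitrary prefixes. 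If you want to salvage your plan, you would have to add some analogue of this multiplicity handling and deferred coloring rather than re-keying the randomness of \Cref{alg:rand-ea-edgecol}.
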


Further, we prove that we can make the above algorithms deterministic at the cost of only a polylogarithmic factor in space. Once again, the online algorithm immediately implies a W-streaming algorithm.  

\begin{theorem}[Formalized in \Cref{thm:online-det-ea-formal}]\label{thm:online-det-ea-edgecol}
  Given an adversarial edge-arrival stream of edges of any multigraph, there is a deterministic algorithm for online $O(\Delta (\log^2 \Delta))$-edge-coloring using $\tO(n \sqrt{\Delta})$ bits of space. 
\end{theorem}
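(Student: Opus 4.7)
The plan is to derandomize the randomized algorithm from \Cref{thm:online-rand-ea-edgecol} and simultaneously extend it to multigraphs, at a cost of an $O(\log^2 \Delta)$ factor in the number of colors but preserving the $\tO(n\sqrt{\Delta})$ space bound. As emphasized in the abstract, the randomized algorithm leans heavily on random permutations of the color palette (on a per-vertex basis), ensuring that enough ``free'' colors remain visible whenever a new edge arrives. The central task is to replace these random permutations with explicit deterministic substitutes that can be described and evaluated in small space.

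My first step would be to replace each random permutation with one drawn from an explicit family of $k$-wise almost-independent permutations or hash functions for $k = O(\log \Delta)$. Such families have polylogarithmic seed length, can be evaluated on the fly per vertex in small space, and provide enough spread and concentration to replicate the randomized analysis up to polylogarithmic losses. The high-level structure of the algorithm otherwise remains the same: each vertex maintains a sublinear-sized color pool (of size $\tO(\sqrt{\Delta})$), and each new edge is colored using an available color in the intersection of its endpoints' pools, with a fallback to a secondary palette when necessary.

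To extend the construction to multigraphs, I would partition the color palette into $O(\log \Delta)$ sub-palettes indexed by the ``level'' or copy index of a parallel edge, so that no two parallel edges compete for the same pool of colors; this accounts for one of the $\log \Delta$ factors. The second $\log \Delta$ factor arises from the derandomization itself: the tail bounds available for $k$-wise independent hashing are quantitatively weaker than Chernoff bounds, forcing moderately larger pools, and a deterministic union bound over the $O(n\Delta)$ possible edges contributes the remaining slack. Combining these two losses with the $O(\Delta)$-coloring baseline from \Cref{thm:online-rand-ea-edgecol} yields the claimed $O(\Delta \log^2 \Delta)$ bound.

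The main obstacle I anticipate is arguing correctness against an adaptive edge-arrival adversary. Since the hashing seed and color-pool structure are deterministic and hence fully known to the adversary, we must build pools whose intersection properties hold simultaneously for \emph{all} possible vertex-pair sequences, not just with high probability over a random seed. I expect this to require an explicit combinatorial object---such as a combinatorial design, an expander-based construction, or a code-based palette assignment---whose pairwise intersections are uniformly well-distributed, followed by a careful online scheduling argument that ensures no pool is exhausted prematurely. The $\tO(n\sqrt{\Delta})$ space budget is comfortable enough to absorb the descriptions of such explicit structures, so the bulk of the difficulty lies in the analysis rather than in space accounting.
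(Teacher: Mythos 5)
There is a genuine gap, and it sits exactly where you say the ``bulk of the difficulty'' lies: you never resolve how fixed, adversary-known color pools survive every input, you only name candidate objects (designs, expanders, codes) and a ``careful online scheduling argument.'' The paper's proof does \emph{not} make the online choices always succeed. Instead it (i) proves an offline statement (\Cref{lem:offline-ea-coloring}): with high probability over advice permutations, \emph{every} admissible configuration graph on $V\times[C/s]$ admits a list-edge-coloring from the prescribed block intersections; (ii) shows (\Cref{lem:ea-partial-coloring}) that even with arbitrary (hence adversarial) greedy color choices, the online partial-coloring routine still colors at least a $1/3$ fraction of its stream, via a potential argument in which each processed edge destroys at most $3$ edges that could still receive their color under the reference coloring; and (iii) repeats this over $O(\log\Delta)$ layers with fresh palettes so that only an $O(1/\Delta)$ fraction of edges survives, which is small enough ($O(n\log n)$ edges) to buffer and color greedily. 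That layering is the source of one of the two $\log\Delta$ factors; your attribution of it to ``weaker tail bounds for $k$-wise independent hashing plus a union bound over $O(n\Delta)$ edges'' is not a mechanism that produces a color blow-up, and a runtime union bound has no meaning for a deterministic algorithm anyway --- the union bound in the paper happens at advice-selection time, over all admissible configuration graphs, which is precisely what \Cref{lem:offline-ea-coloring} formalizes.

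Two further mismatches. First, $k$-wise almost-independence with $k=O(\log\Delta)$ and polylogarithmic seeds is far weaker than what the analysis needs: the paper uses $(\epsilon,s)$-wise independent permutations with $s=\Theta(\sqrt{\Delta\log n})$ (the block size), so the advice is $\tO(n\sqrt{\Delta})$ bits and is only guaranteed to exist/be computable in exponential time --- the result is deterministic \emph{with advice}, not via an explicit polylog-seed construction, and your stronger explicitness goal is neither achieved by the paper nor supported in your sketch. Second, your multigraph handling (sub-palettes indexed by copy index) is in the spirit of the paper's levels, but the paper's levels do more than separate parallel edges: level $\ell$ only receives an edge after $2^\ell$ copies have arrived and hands out blocks of $2^\ell$ colors, which is what enforces Property Z (no vertex sees the same edge twice within one block window of its free-color tracker) for the substream fed to each inner instance; without an analogue of this, the per-level analysis you would inherit from the simple-graph case does not apply.
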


\begin{corollary}\label{cor:wstream-det-ea}
    Given an adversarially ordered  edge stream of any multigraph, there is a deterministic W-streaming algorithm for $O(\Delta (\log^2 \Delta))$-edge-coloring using $\tO(n \sqrt{\Delta})$ bits of space.
\end{corollary}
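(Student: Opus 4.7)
The plan is to obtain the corollary as an immediate consequence of \Cref{thm:online-det-ea-edgecol}, by observing that the online model is strictly more restrictive than the W-streaming model: any online edge-coloring algorithm is automatically a valid W-streaming algorithm in which the output color of each edge just happens to be announced at the moment the edge is inserted. Concretely, I would take the algorithm guaranteed by \Cref{thm:online-det-ea-edgecol}, run it unchanged on the adversarial edge-arrival stream of the multigraph, and forward each color assignment to the W-streaming output tape as soon as the online algorithm commits to it. Because writing colors to the output tape is a side effect that does not occupy working memory, the resulting W-streaming algorithm inherits the same $\tO(n\sqrt{\Delta})$ space bound, the same $O(\Delta \log^2 \Delta)$ color bound, the same deterministic guarantee, and the same applicability to arbitrary multigraphs.

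The only thing to verify is that the W-streaming model, as used in this paper, indeed permits outputs to be interleaved with inputs rather than demanding that the entire input be consumed first; this is the standard convention and is already tacitly employed when deriving \Cref{cor:wstream-rand-ea} from \Cref{thm:online-rand-ea-edgecol}. There is no genuine technical obstacle here, and nothing is hidden behind the word ``immediate'': the entire content of the proof is the observation that the class of online algorithms is contained in the class of W-streaming algorithms with no space overhead. I would therefore write the proof as a one-line pointer to \Cref{thm:online-det-ea-edgecol}, exactly mirroring the treatment given to \Cref{cor:wstream-rand-ea}.
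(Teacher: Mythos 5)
Your proposal is correct and matches the paper's treatment exactly: the paper derives \Cref{cor:wstream-det-ea} immediately from \Cref{thm:online-det-ea-edgecol} by the same observation that an online edge-coloring algorithm is automatically a W-streaming algorithm with identical space and color bounds, just as \Cref{cor:wstream-rand-ea} is derived from \Cref{thm:online-rand-ea-edgecol}.
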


Furthermore, in each case, we can achieve a smooth tradeoff between the number of colors and the memory used. This is implied by a framework captured in the following lemma.

\begin{lemma}[Formalized and generalized in \Cref{lem:space-color-tradeoff-formal}]\label{lem:space-color-tradeoff}
  Suppose that we are given an $f(n,\Delta)$-space streaming algorithm $\cA$ for $O(\Delta)$-coloring any $n$-node multigraph with max-degree $\Delta$ under adversarial edge arrivals. Then, for any $s\geq 1$, there is a streaming algorithm $\cB$ for $O(s \Delta)$-coloring the same kind of graphs under adversarial edge arrivals using $f(n/s, s\Delta)+ \tO(n)$ bits of space.
\end{lemma}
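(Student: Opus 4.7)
The plan is to run $\cA$ on a coarsened multigraph obtained from $G$. Fix any balanced partition of $V$ into $n/s$ groups of size $s$, encoded by a function $\phi\colon V \to [n/s]$ that $\cB$ stores explicitly in $O(n \log n) = \tO(n)$ bits. Let $G'$ be the super-multigraph on vertex set $[n/s]$ obtained by replacing every arriving edge $uv$ of $G$ with the super-edge $\{\phi(u), \phi(v)\}$; this may create parallel edges as well as self-loops. Since the degree of super-vertex $i$ in $G'$ equals $\sum_{v \in \phi^{-1}(i)} \deg_G(v)$, which is at most $s\Delta$, the super-graph $G'$ has $n/s$ vertices and max-degree at most $s\Delta$.

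Algorithm $\cB$ then simulates $\cA$ on the super-edge stream: whenever an edge $uv$ arrives, $\cB$ feeds $\{\phi(u), \phi(v)\}$ to $\cA$ and copies $\cA$'s announced color onto $uv$. By the hypothesis on $\cA$, this uses $f(n/s, s\Delta)$ additional bits of working memory and yields an $O(s\Delta)$-edge-coloring of $G'$. The induced coloring of $G$ is proper: any two edges of $G$ incident to a common vertex $u$ give rise to two distinct super-edges, each of which is incident in $G'$ to the super-vertex $\phi(u)$, and $\cA$ therefore assigns them distinct colors. The total space used by $\cB$ is $f(n/s, s\Delta) + \tO(n)$ and the total number of colors is $O(s\Delta)$, matching the claim. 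Adversarial arrivals in $G$ translate to adversarial arrivals in $G'$, so the correctness guarantee of $\cA$ carries over.

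The main technical point to resolve in the formal proof is that $G'$ typically contains self-loops, one for every intra-group edge of $G$, and these need to be colored consistently with the surrounding edges at the same super-vertex. The cleanest resolution is to invoke $\cA$ under the standard multigraph convention, in which a self-loop at $\phi(u)$ receives a color distinct from every other edge at $\phi(u)$, making the proper-coloring argument above go through unchanged. If $\cA$ instead requires a loopless input, a simple fix is to attach one dummy twin super-vertex $i'$ to each group $i$ and re-route every intra-group edge as a parallel edge between $i$ and $i'$; this doubles the vertex count of $G'$, preserves the max-degree bound $O(s\Delta)$, and keeps the proper-coloring argument valid, so the final space and color bounds are unchanged. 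This is the only subtle point, the rest of the argument being just careful bookkeeping.
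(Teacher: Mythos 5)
Your reduction is correct, and its core — contracting groups of $s$ vertices and running $\cA$ on the resulting multigraph of max degree $\le s\Delta$ on $\approx n/s$ super-vertices — is exactly the paper's construction (\Cref{lem:space-color-tradeoff-formal}, \Cref{alg:space-color-exch}). The only place you diverge is the treatment of intra-group edges, which is also the only nontrivial point. The paper never sends loops to $\cA$: it colors intra-group edges with a separate explicit palette of $s\Delta$ colors, indexing by an $s$-edge-coloring of $K_s$ (to separate edge "types" within a group) plus a per-vertex degree counter (to separate parallel copies), which is what costs the extra $O(n\log\Delta)$ bits and the additive $s\Delta$ colors. You instead re-route each group's internal edges to a dummy twin super-vertex and let $\cA$ color them. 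That works and is arguably cleaner — properness pulls back correctly, the twin's degree is at most $s\Delta/2$, and no extra palette is needed — at the price of doubling the super-vertex count, so strictly you get $f(2n/s,s\Delta)$ rather than $f(n/s,s\Delta)$, which is immaterial for every application here. One caution: do not lean on your "cleanest resolution" of feeding genuine self-loops to $\cA$ under a "standard convention"; the hypothesis on $\cA$ is for (loopless) multigraphs, and proper edge coloring with loops is not a standard guarantee — this is presumably why the paper routes intra-group edges away from $\cA$ entirely. Your twin-vertex fallback is the argument that actually closes this, and with it the proof is complete.
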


For the online model, the above lemma combined with \Cref{thm:online-det-ea-edgecol} immediately gives the tradeoff of $\tO(\Delta t)$ colors and $\tO(n\sqrt{\Delta/t})$ space for any $t= O(\Delta)$, as claimed in \Cref{table:online}.  In other words, combined with \Cref{cor:wstream-rand-ea}, it implies the W-streaming bounds of $O(\Delta^2/s)$ colors and $O(n\sqrt{s})$ space for any $s= O(\Delta)$, as claimed in \Cref{table:wstream}. Note that our results match the tradeoff obtained by the state of the art for $t=\Theta(\Delta)$ and $s=O(1)$, and strictly improve upon them for $t=o(\Delta)$ and $s=\omega(1)$.

\mypar{Vertex-Arrival Model} We now turn to the weaker vertex-arrival model.  The online edge-coloring problem has been widely studied in this setting as well (see \Cref{sec:related} for a detailed discussion). Our online algorithms obtain significantly better space bounds than the edge-arrival setting. 

\begin{theorem}[Formalized in \Cref{thm:online-rand-va-formal}]\label{thm:online-rand-va-edgecol}
  Given any adversarial vertex-arrival stream of a multigraph, there is a randomized online $O(\Delta)$-edge coloring algorithm using $\tO(n)$ bits of space. It works even against an adaptive adversary and uses $\tO(n \Delta)$ oracle random bits.
\end{theorem}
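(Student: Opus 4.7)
My plan is to pre-compute, as oracle randomness, an independent uniformly random permutation $\pi_u \colon [C] \to [C]$ for each vertex $u$, where $C = c\Delta$ for a sufficiently large constant $c$; this accounts for the $\tO(n\Delta)$ bits of oracle randomness. In working memory I would keep, for each vertex $u$, a single counter $s_u \in \{0, 1, \ldots, C\}$ initially set to $0$, costing $O(n \log \Delta) = \tO(n)$ bits in total. When a vertex $v$ arrives with edges to previously-arrived vertices $u_1, \ldots, u_k$ (counted with multiplicity in the multigraph case), the algorithm processes these edges in arbitrary order while maintaining the set $U_v$ of colors already assigned at $v$ during this arrival: for each edge $(u_i, v)$, find the smallest $j \ge 1$ with $\pi_{u_i}(s_{u_i}+j) \notin U_v$, output $\pi_{u_i}(s_{u_i}+j)$ as its color, and update $s_{u_i} \leftarrow s_{u_i}+j$ and $U_v$ accordingly.

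Properness is immediate: at any vertex $u$, the slot values used for successive incident edges are strictly increasing, so the colors $\pi_u(\cdot)$ at $u$ are distinct; at any $v$, the explicit filter against $U_v$ enforces distinctness. The working-memory cost is $\tO(n)$ for the counters plus $\tO(\Delta) \le \tO(n)$ for the ephemeral $U_v$ (discarded after $v$ is processed), and colors are streamed out immediately after being computed.

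The heart of the proof is showing that, with high probability, $s_u$ never exceeds $C$ for any $u$, so every output color lies in $[C] = O(\Delta)$. Fix $u$ and let $j_1, j_2, \ldots$ be the successive increments of $s_u$. The key observation is that, conditional on the adversary's transcript just before the $t$-th edge at $u$, the unrevealed tail $\pi_u(s_u+1), \pi_u(s_u+2), \ldots, \pi_u(C)$ is uniformly distributed over permutations of the not-yet-revealed colors. Since $|U_{v_t}| \le \Delta - 1$ and the number of unrevealed colors is $C - s_u$, whenever $s_u \le C/2$ we get $\Pr[\pi_u(s_u+1) \in U_{v_t} \mid \text{history}] \le 2(\Delta-1)/C < 1/2$, and more generally $j_t$ is stochastically dominated by a $\mathrm{Geom}(1/2)$ random variable. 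A Chernoff-type tail bound for sums of (dominated) geometrics then yields $\sum_{t \le \deg(u)} j_t \le 2\Delta + O(\sqrt{\Delta \log n}) \le C/2$ with probability $1 - 1/\poly(n)$, and a union bound over the $n$ vertices completes the argument.

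The main obstacle will be making the $\mathrm{Geom}(1/2)$ domination rigorous against an \emph{adaptive} adversary. One must set up a filtration in which the adversary's observations depend on $\pi_u$ only through (i) the colors $\pi_u(s_u^{(1)}), \ldots, \pi_u(s_u^{(t-1)})$ of previously colored edges at $u$ and (ii) for each skipped slot $s$, the single bit recording that $\pi_u(s) \in U_{v_{t'}}$ at the moment it was skipped. Conditional on all these observations, together with the adversary's adaptive choice of $v_t$ and $v_t$'s prior-arriving neighbors (which is measurable with respect to this filtration), the identity of the next unrevealed color $\pi_u(s_u+1)$ remains uniform over the remaining unrevealed colors, so the per-step probability bound survives. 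Extending this analysis cleanly to multigraphs, where $U_{v_t}$ may already contain colors from parallel edges processed earlier in $v_t$'s batch, is where most of the technical bookkeeping will lie.
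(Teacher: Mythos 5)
There is a genuine gap: your properness claim is wrong for general (two-sided) vertex arrivals. At a typical vertex $v$, two kinds of incident edges get colored at different times: (i) the edges in $v$'s own batch $M_v$, whose colors are drawn from the permutations $\pi_{u_i}$ of $v$'s \emph{earlier} neighbors and are only filtered against the ephemeral set $U_v$, which you then discard; and (ii) edges $\{v,w\}$ colored later when a neighbor $w$ arrives, whose colors are drawn from $\pi_v(s_v+1),\pi_v(s_v+2),\ldots$ and are filtered only against $U_w$. Nothing prevents a type-(ii) color $\pi_v(s_v+j)$ from coinciding with a type-(i) color already placed on an edge at $v$ (e.g.\ $v$ arrives with an edge to $u$ colored $\pi_u(1)$; later $w$ arrives and the edge $\{v,w\}$ may receive the same color, since the only check is membership in $U_w$). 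Your two stated guarantees — increasing slots at the earlier endpoint, and the $U_v$ filter at the arriving endpoint — each cover only one of the two roles a vertex plays, and remembering the type-(i) colors explicitly would cost $\Theta(n\Delta\log\Delta)$ bits, destroying the $\tO(n)$ space bound. The concentration/adaptivity analysis you sketch is fine as far as it goes, but it is analyzing an algorithm that does not output a proper coloring.

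This is precisely why the paper does not run your algorithm directly on general graphs: it first proves the statement for \emph{one-sided} vertex arrivals on bipartite graphs (\Cref{lem:online-bpt-rand-va-edgecol}, \Cref{alg:bpt-rand-va-edgecol}), where every $B$-vertex only ever plays the ``earlier'' role and every arriving $A$-vertex only the ``arriving'' role, so the conflict above cannot occur — in that restricted setting your algorithm and analysis essentially coincide with the paper's. The general case is then obtained by a reduction: a deterministic code-based partition of the edges into $O(\log n)$ bipartite graphs with proportionally reduced degrees (\Cref{lem:edge-gen-to-bpt}), plus the standard one-sided-to-two-sided conversion that runs two palette-disjoint instances per bipartite graph (\Cref{lem:cvt-gen-to-bpt}), which is exactly the mechanism that separates the two roles of each vertex. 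To repair your proposal you would need to add this reduction (or some substitute idea of comparable weight), so the missing component is substantive, not bookkeeping.
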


% Formal version of thm:online-rand-va-edgecol
%\begin{theorem}\label{thm:online-rand-va-edgecol}
%   There is a randomized online $O(\Delta)$-edge coloring algorithm for vertex arrival streams over multigraphs using $O(n \log (n \Delta / \delta))$ bits of space, with error $\le \delta$ against any adaptive adversary. It uses $O(n \Delta \log \Delta)$ oracle random bits.
  
%   (This follows immediately by combining \Cref{lem:cvt-gen-to-bpt} with \Cref{lem:online-bpt-rand-va-edgecol}.)
% \end{theorem}

Thus, at the cost of only a constant factor in the number of colors, we can improve the memory usage from $\tO(n\Delta)$ to $\tO(n)$ for vertex-arrival streams. Since this algorithm immediately implies a W-streaming algorithm with the same bounds, we see that for vertex-arrival streams,
$O(\Delta)$-coloring can be achieved in semi-streaming space, the most popular space regime for graph streaming. Behnezhad et al.~\cite{BehnezhadDHKS19} mentioned that ``a major open question is whether [the number of colors for W-streaming edge-coloring] can be improved to $O(\Delta)$ while also
keeping the memory near-linear in $n$.'' Our results answer the question in the affirmative for vertex-arrival streams, which is a widely studied model in the streaming literature as well. 

Further, we show that the algorithm can be made deterministic using $\tO(n)$ bits of \emph{advice} instead of $\tO(n \Delta)$  bits of oracle randomness. By picking a uniformly random advice string, the same algorithm can alternatively be used as a robust algorithm with $1/\poly(n)$ error; the advice can also be computed in exponential time.

\begin{theorem}[Formalized in \Cref{thm:online-det-va-formal}]\label{thm:online-det-va-edgecol}
    Given any adversarial vertex-arrival stream of a multigraph, there is a deterministic online $O(\Delta)$-edge-coloring algorithm using $\tO(n)$ bits of space, using $\tO(n)$ bits of advice. 
\end{theorem}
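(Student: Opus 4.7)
The plan is to derandomize Theorem~\ref{thm:online-rand-va-edgecol}'s algorithm by compressing its randomness with pseudorandom objects and then fixing a single seed via the probabilistic method. That algorithm consumes $\tO(n\Delta)$ bits of oracle randomness, which the abstract describes as ``random permutations'' used on a per-vertex basis to avoid previously used colors. The target is to shrink the random input to $\tO(n)$ bits while preserving the $O(\Delta)$ color guarantee.

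First, I replace each truly uniform per-vertex permutation of the $O(\Delta)$-color palette with a $k$-wise independent pseudorandom permutation for $k=\polylog n$. Standard constructions give such a family with seed length $O(k\log\Delta)=\polylog n$ per vertex, or equivalently $\tO(n)$ bits of global seed from which each vertex derives its permutation (with its identity as a salt). Inspecting the proof of Theorem~\ref{thm:online-rand-va-edgecol}, each bad event---a palette collision at a vertex, or overflow of some per-vertex retry budget---will depend on only $\polylog n$ coordinates of the random input, so that $k$-wise independence reproduces the original analysis up to negligible loss.

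Next, I invoke the probabilistic method: with the seed chosen uniformly at random, the pseudorandom algorithm fails on any fixed adversarial input with probability below $1/|\mathcal{I}|$, where $\mathcal{I}$ is an appropriate set of ``distinguishable'' vertex-arrival streams. A union bound then produces a seed that works on all inputs simultaneously; this seed is our advice. Since the seed has length $\tO(n)$ and validating it against one input runs in polynomial time, exhaustive search finds a valid seed in $\exp(\tO(n))$ time, matching the $\star$ annotation of Table~\ref{table:online}. The robustness remark following the theorem then follows from the same per-input calculation: a uniformly chosen $\tO(n)$-bit string succeeds on any fixed input with probability $1-1/\poly(n)$.

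The main obstacle is closing the union bound: there are up to $2^{\Theta(n\Delta\log n)}$ distinct vertex-arrival streams, which a $\tO(n)$-bit seed cannot cover unless the per-input failure probability is exponentially small---far beyond what $\polylog n$-wise independence delivers off the shelf. To surmount this I plan to exploit the $\tO(n)$-space bound of the online algorithm itself: two inputs inducing the same sequence of internal configurations must behave identically, so the effective number of inputs the seed has to handle is $2^{\tO(n)}$ rather than $2^{\tO(n\Delta)}$. Combining this reduction with polynomially small per-input error from limited independence, the union bound closes with room to spare, yielding the required $\tO(n)$-bit deterministic advice.
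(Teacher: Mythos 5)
Your high-level plan (limited-independence permutations as advice, then the probabilistic method over a reduced set of ``effective'' inputs) is in the same spirit as the paper, but the two steps you lean on do not go through as stated, and the paper has to change the algorithm itself to make them work. First, the quantitative mismatch: with $k=\polylog n$-wise independent permutations, each bad event can only be shown to fail with probability about $\exp(-\Theta(\polylog n))$, i.e.\ $1/\poly(n)$-type bounds, while your union bound is over at least $2^{\tO(n)}$ objects; $1/\poly(n)$ per event cannot close a union bound over exponentially many events. The paper resolves this by proving anticoncentration bounds (\Cref{lem:approx-indep-large-anticoncentration}, \Cref{lem:approx-indep-small-anticoncentration}) whose failure probabilities scale as $\exp(-\Omega(s\,w))$ with the \emph{size} $w$ of the configuration being union-bounded over (and $s=\Theta(\log(n\Delta/\delta))$-wise, not $\polylog n$-wise, independence), so that larger configurations, which contribute more terms to the union bound, are exponentially less likely to fail. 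Relatedly, your claim that each bad event of \Cref{thm:online-rand-va-edgecol}'s analysis ``depends on only $\polylog n$ coordinates'' is not right: the per-vertex overflow event in \Cref{lem:online-bpt-rand-va-edgecol} is a sum of $\Delta$ retry counts and probes $\Theta(\Delta)$ positions of $\sigma_y$ together with colors chosen at up to $\Delta$ neighbors, so limited independence at level $\polylog n$ does not reproduce that Chernoff-style argument.

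Second, the input-counting step is not justified. ``Two inputs inducing the same sequence of internal configurations behave identically'' bounds the state space at one time by $2^{\tO(n)}$, but the union bound has to cover bad events per step, i.e.\ (state, arrival) pairs or state sequences, which are far more numerous; moreover, whether a given state is reachable depends on the seed itself, so conditioning on ``the algorithm is in state $\sigma$'' is not independent of the randomness you are union-bounding over. The paper sidesteps both issues by designing a different algorithm (\Cref{alg:bpt-det-va-edgecol}): each arriving vertex is colored by computing a matching between its edge multiset and candidate colors $\sigma_y[F_y]$, and the only bad event is a failure of Hall's condition, which depends solely on the advice permutations and a local ``configuration'' $(S,(d_y),(b_y),(Q_y))$ --- not on the history that produced it --- with the block counters $b_y$ and refresh rule for $Q_y$ engineered so that the number of possible configurations per vertex is at most about $\exp(s/2^{11})$, small enough for the $\exp(-\Omega(sw))$ bounds to absorb. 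Without an analogous restructuring (or some other mechanism giving per-event failure probabilities that decay with the configuration's contribution to the union bound), your derandomization of the greedy permutation-probing algorithm does not close.
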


An interesting special case of the vertex-arrival model is the one-sided vertex-arrival setting for bipartite graphs. Here, the vertices on one side of the bipartite graph are fixed, while the vertices on the other side arrive in a sequence along with their incident edges. A couple of works \cite{CohenPW19, NaorSW23} have studied online edge-coloring specifically in this model. We design low-memory online algorithms in this model (see \Cref{alg:bpt-rand-va-edgecol,alg:bpt-det-va-edgecol}) and use them as building blocks for our algorithms in the more general settings of vertex-arrival and edge-arrival. These algorithms maybe of independent interest due to practical applications of the one-sided vertex-arrival model; moreover, the randomized algorithm in this model uses only $5\Delta$ colors (as opposed to our algorithms where the hidden constant in $O(\Delta)$ is rather large).

Finally, we present a lower bound on the space requirement of a deterministic online edge-coloring algorithm.

\begin{theorem}[Formalized in \Cref{thm:det-online-va-lb-formal}]\label{thm:det-online-va-lb}
  For $\Delta\leq \eps n$ for a sufficiently small constant $\eps$, any deterministic online algorithm that edge-colors a graph using $(2-o(1))\Delta$ colors requires $\Omega(n)$ space. 
\end{theorem}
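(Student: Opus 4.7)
My plan is to establish the space lower bound via a fooling-set argument: I will exhibit a family of $2^{\Omega(n)}$ distinct valid prefix streams for which the algorithm must occupy pairwise distinguishable states, forcing its memory to be $\Omega(n)$ bits.

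\emph{Step 1: the prefix family.} Fix $\Delta \le \varepsilon n$ and set $C = (2-o(1))\Delta$. I construct a family $\{P_x\}_{x \in \mathcal{X}}$ of edge-stream prefixes with $|\mathcal{X}| = 2^{\Omega(n)}$, each valid (max degree at most $\Delta$) and saturating a linear-sized vertex subset $V_0 \subseteq V$ to degree exactly $\Delta-1$, with $x$ encoded in the choice of neighborhoods at each vertex of $V_0$. Since $|V_0| = \Theta(n)$ and each vertex's $\Delta-1$ neighbors are chosen from an $\Omega(n)$-sized pool, there are $2^{\Omega(n)}$ distinct prefixes.

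\emph{Step 2: pigeonhole and distinguishing suffix.} Suppose for contradiction that the algorithm $\mathcal{A}$ uses $s = o(n)$ bits of memory. Then pigeonhole yields two distinct prefixes $P_x \ne P_{x'}$ with the same post-prefix state $\sigma$, so $\mathcal{A}$'s coloring of any appended suffix is the same function of the suffix in both scenarios. Design a suffix $S$ that pushes every $v \in V_0$ from degree $\Delta-1$ to $\Delta$; each edge of $S$ then receives a single color determined by $(\sigma, S)$ alone. For this coloring to be simultaneously valid on $P_x \cdot S$ and $P_{x'} \cdot S$, the color of the new edge at each $v \in V_0$ must avoid both $C_{P_x}(v)$ and $C_{P_{x'}}(v)$, each of size $\Delta - 1$. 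Since the union of two $(\Delta-1)$-sets can exceed the budget $C = (2-o(1))\Delta$ whenever the two sets are sufficiently distinct, a suitable discrepancy between the color profiles forces the algorithm's unique deterministic color choice to conflict with one of the prefix colorings, contradicting the assumed $C$-coloring correctness.

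\emph{Main obstacle.} The delicate step is showing that some $v \in V_0$ must have $C_{P_x}(v) \ne C_{P_{x'}}(v)$ whenever $P_x \ne P_{x'}$, \emph{and} that this discrepancy can be magnified by a short, deterministic suffix into a genuine coloring conflict rather than merely a benign relabeling of colors. I would address this by weaving the Bar-Noy--Motwani--Naor adversarial color-halving strategy into the prefix family: each gadget in $P_x$ is arranged so that the tight $C = (2-o(1))\Delta$ budget, combined with the local gadget structure, essentially forces the used-color set at its spine vertex to a canonical subset determined by the gadget's identity. An alternative route is a direct reduction from the deterministic one-way communication complexity of \textsc{index} on $\Omega(n)$ bits: Alice encodes her input as such a prefix, and Bob extracts a single queried bit using a short test suffix, with the color-budget tightness ensuring that the algorithm's color choices reveal the bit.
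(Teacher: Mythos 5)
Your plan hinges entirely on Step 2: after pigeonholing two prefixes $P_x \ne P_{x'}$ onto the same memory state, you need some vertex $v \in V_0$ at which the two used-color sets $C_{P_x}(v)$ and $C_{P_{x'}}(v)$ (each of size $\Delta-1$) are nearly disjoint, so that their union covers essentially the whole $(2-o(1))\Delta$ palette and the single deterministic color assigned to the suffix edge at $v$ must clash with one of them. This is exactly the step you flag as the ``main obstacle,'' and it is a genuine gap rather than a technicality: the algorithm, not the adversary, chooses the colors, and nothing in a plain fooling-set/pigeonhole argument prevents it from using (nearly) the same palette at every vertex in every run. For instance, if your degree-$(\Delta-1)$ prefixes can be presented as a sequence of matching layers, an algorithm that colors the $i$-th layer with color $i$ produces identical color sets at all vertices of $V_0$ for every prefix while using almost no memory, and then no suffix can create a conflict; more generally, a ``first-fit within the palette'' strategy makes the color sets at any fixed vertex heavily overlap across runs. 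The same objection undermines the sketched \textsc{index} reduction: the colors emitted at a vertex need not encode the identities of its neighbors, so the queried bit need not be recoverable. With only two colliding runs, and a prefix that already spends $\Delta-1$ units of the degree budget at each vertex, near-disjointness of two $(\Delta-1)$-subsets of a $(2-o(1))\Delta$ palette cannot be forced, so the contradiction never materializes.

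The paper's proof avoids arguing about any single collision pair. It turns the algorithm into a $\Delta$-player one-way protocol in which each player feeds in a random partition of $B$ into stars of size $\Delta$ (so every $v\in B$ gains degree exactly one per round), and it tracks, for each message $m$, the set $S_{m,v}$ of \emph{all} colors that could have been used at $v$ over \emph{all} inputs consistent with $m$. The combinatorial core, \Cref{lem:lb-rand-part-matching}, shows that if $\sum_{v} |S_{m,v}| \le \beta n$ then a random star-partition admits a coloring avoiding these lists with probability at most $\exp(-\Omega((2-\beta)^3 n))$; a counting argument over the at most $2^S$ messages then forces the total mass of newly-possible colors to grow by at least $\beta n$ in every round. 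After $\Delta-1$ rounds the total budget $\beta\Delta n$ is essentially exhausted, and the last player's random graph cannot be colored from the leftover lists, giving the contradiction and the bound $S \ge \Omega((2-\beta)^3 n)$. In other words, the degree budget is spent one unit per round across $\Delta$ rounds, and the needed ``discrepancy'' is a union over exponentially many consistent inputs rather than a difference between two specific runs; repairing your proposal would require importing both of these ingredients, at which point it becomes the paper's argument.
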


To the best of our knowledge, this is the first non-trivial space lower bound proven for an online edge-coloring algorithm.

An outline of how the several building blocks are put together to obtain the above results is given in
\Cref{fig:overview}.

\begin{figure}[!htb]
  \centering
  \begin{tikzpicture}[
node distance = 4mm and 9mm,
  start chain = going below,
   arr/.style = {{Straight Barb[scale=0.8]}-, rounded corners=1ex, semithick},
   arrd/.style = {{Straight Barb[scale=0.8]}-, rounded corners=1ex, semithick, dashed},
     N/.style = {draw, thick, fill=#1,
                 text width=10em, align=center, inner ysep=2ex},
   N/.default = white,
every edge/.style = {draw, arr}]   
\node (t1) [N=red!30] {\Cref{thm:online-rand-va-edgecol}};
\node (t2) [N=red!30, below=of t1, yshift=-9ex] {\Cref{thm:online-det-va-edgecol}};
\node (t3) [N=red!30, below=of t2, yshift=-9ex] {\Cref{thm:Wstream-rand-ea-edgecol}};
\node (t4) [N=red!30, below=of t3] {\Cref{thm:online-rand-ea-edgecol}};
\node (t5) [N=red!30, below=of t4] {\Cref{thm:online-det-ea-edgecol}};
\node (t6) [N=red!30, below=of t5] {\Cref{lem:space-color-tradeoff}};
\node (t7) [N=red!30, below=of t6] {\Cref{thm:det-online-va-lb}};

\node (l1c) [N=orange!30, right=of t1] {\Cref{lem:online-bpt-rand-va-edgecol}};
\node (l2c) [N=orange!30, right=of t2, yshift=-9ex] {\Cref{lem:online-bpt-det-va-edgecol}};

\node (l2cs) [N=orange!30, right=of l2c, yshift=-4ex] {\Cref{lem:approx-indep-small-anticoncentration}};
\node (l2cb) [N=orange!30, right=of l2c, yshift=+4ex] {\Cref{lem:approx-indep-large-anticoncentration}};

\node (emoment) [N=orange!30, right=of l1c] {\Cref{lem:moment-gen-func}};

\node (lve) [N=orange!30, right=of t3] {\Cref{lem:va-to-ea-conversion}};
\node (lbpt) [N=orange!30, below=of l1c] {\Cref{lem:cvt-gen-to-bpt}};
\node (lbpte) [N=orange!30, below=of lbpt] {\Cref{lem:cvt-gen-to-bpt-ea}};

\node (extcode) [N=yellow!30, right=of lbpt, yshift=-4.5ex] {\cite{SipserS96}, as \Cref{cor:practical-binary-codes}};

\node (lpartial) [N=orange!30, right=of t5, yshift=-4ex] {\Cref{lem:ea-partial-coloring}};
\node (lperm) [N=orange!30, right=of t5, yshift=+4ex] {\Cref{lem:fast-permutations}};
\node (loffline) [N=orange!30, right=of lpartial] {\Cref{lem:offline-ea-coloring}};
\node (mor) [N=yellow!30, right=of lperm] {\cite{Morris13}};

\node (l7c) [N=orange!30, right=of t7] {\Cref{lem:lb-rand-part-matching}}; 

\draw[arr]  (t1)-- (l1c);
\draw[arr]  (l1c)-- (emoment);
\draw[arr]  ([yshift=-1ex] t2.east) -- ++ (0.3,0) |- (l2c);
\draw[arr]  ([yshift=-1ex] l2c.east) -- ++ (0.5,0) |- (l2cs);
\draw[arr]  ([yshift=+1ex] l2c.east) -- ++ (0.5,0) |- (l2cb);
\draw[arr]  (lbpt.east) -- ++ (0.5,0) |- (extcode);
\draw[arr]  (lbpte.east) -- ++ (0.5,0) |- (extcode);
\draw[arr]  ([yshift=-1ex] t1.east) -- ++ (0.3,0) |- (lbpt);
\draw[arr]  ([yshift=+1ex] t2.east) -- ++ (0.3,0) |- (lbpt);
\draw[arr]  (t3)-- (lve);
\draw[arr]  (lperm)-- (mor);
\draw[arr]  ([yshift=+2ex] t3.east) -- ++ (0.5,0) |- (lbpte);
\draw[arr]  ([yshift=+1ex] t3.east) -- ++ (0.6,0) |- (l1c);

% skipping arrow from lperm to lemma 3.8 -- technically it's not needed for the deterministic result
%\draw[arrd]  ([yshift=-1ex] t2.east) -- ++ (0.7,0) |- (lperm);
\draw[arr]  ([yshift=-1ex] t4.east) -- ++ (0.5,0) |- (lperm);
\draw[arrd]  ([yshift=+1ex] t5.east) -- ++ (0.5,0) |- (lperm);
\draw[arr]  ([yshift=-1ex] t5.east) -- ++ (0.5,0) |- (lpartial);
\draw[arr]  (lpartial)-- (loffline);
\draw[arr]  (t7)-- (l7c);

    \end{tikzpicture}
    \caption{Overview of how the results in this paper fit together. Primary results are in red; main supporting lemmas in orange; and specific external results in yellow. \label{fig:overview}}
\end{figure}

\subsection{Related Work}\label{sec:related}

\mypar{Online model} The edge-coloring problem has a rich literature in the online model \cite{aggarwal2003switch, AnsariSZ22, BarNoyMN92, BahmaniMM12, BhattacharyaGW21, CohenPW19, FavrholdtM18, FavrholdtN03, Mikkelsen15, Mikkelsen16, NaorSW23, KulkarniLSST22, SaberiW21}. The seminal work of Bar-Noy, Motwani, and Naor \cite{BarNoyMN92} showed that no online algorithm can do better than the greedy algorithm that obtains a $(2\Delta-1)$-coloring by assigning each edge the first available color that's not already used by any of its adjacent edges. However, this lower bound applies only to graphs with $\Delta=O(\log n)$. They conjectured that for $\Delta=\omega(\log n)$, there exist online $(1+o(1))\Delta$-coloring algorithms. Although this conjecture remains unresolved, there has been significant progress on it over the years. A number of works \cite{aggarwal2003switch, BahmaniMM12, BhattacharyaGW21} considered the problem under \emph{random-order} edge arrivals: Aggarwal et al.~\cite{aggarwal2003switch} showed that if $\Delta=\omega(n^2)$, then a $(1+o(1))\Delta$-coloring is possible. For $\Delta=\omega(\log n)$ (the bound in the said conjecture), Bahmani et al.~\cite{BahmaniMM12} obtained a $1.26\Delta$-coloring. Bhattacharya et al.~\cite{BhattacharyaGW21} then attained the ``best of both worlds'' by designing a $(1+o(1))\Delta$-coloring algorithm for $\Delta=\omega(\log n)$, resolving the conjecture for random-order arrivals. 

More relevant to our work is the setting of \emph{adversarial-order} edge arrivals. Cohen et al.~\cite{CohenPW19} were the first to make progress on \cite{BarNoyMN92}'s conjecture in this setting: they obtained a $(1+o(1))\Delta$-coloring for bipartite graphs under one-sided vertex arrivals (i.e., the nodes on one side are fixed, and the nodes on the other side arrive one by one with all incident edges). Their algorithm assumes a priori knowledge of the value of $\Delta$. For unknown $\Delta$, they prove that no online algorithm can achieve better than a $(e/(e-1))\Delta$-coloring, and also complement this result with a $(e/(e-1) +o(1))\Delta$-coloring algorithm for unknown $\Delta$. For  bipartite \emph{multigraphs} with one-sided vertex arrivals, Naor et al.~\cite{NaorSW23} very recently prove that $1.533\Delta$ colors suffice, while at least $1.207\Delta$ colors are necessary even for $\Delta=2$. Saberi and Wajc \cite{SaberiW21} showed that it is possible to beat the greedy algorithm for $\Delta=\omega(\log n)$ under vertex arrivals in general graphs: they design a $(1.9+o(1))\Delta$-coloring algorithm. Recently, Kulkarni et al.~\cite{KulkarniLSST22} made the first progress on the said conjecture in the general setting of adversarial edge arrivals: they obtained a $(e/(e-1) +o(1))\Delta$-coloring in this model. Note that the focus of all these works was on resolving \cite{BarNoyMN92}'s conjecture without any space limitations. Our focus is on designing low-memory online algorithms while staying within a constant factor of the optimal number of colors. The only prior sublinear-space online edge-coloring algorithm we know was given by Ansari et al.~\cite{AnsariSZ22}: a (deterministic) online $2\Delta t$-coloring in $O(n\Delta/t)$ space for any $t\leq \Delta$.

A number of works \cite{FavrholdtN03, EhmsenFKM10, FavrholdtM18} have studied the variant of the problem where given a fixed number of colors, the goal is to color as many edges as possible. Mikkelsen \cite{Mikkelsen15, Mikkelsen16} considered online edge-coloring with limited advice for the future.

\mypar{W-Streaming model} The W-streaming model \cite{DemetrescuFR06} is a natural extension of the classical streaming model for the study of problems where the output size is very large, possibly larger than our memory. While prior works have considered several graph problems in this model \cite{DemetrescuFR06, DemetrescuEMR10, LauraS11, GlazikSS22}, we are only aware of three papers \cite{BehnezhadDHKS19, CharikarL21, AnsariSZ22} that have studied edge-coloring here. Behnezhad et al.~\cite{BehnezhadDHKS19} initiated the study of W-streaming edge-coloring algorithms. They considered the problem for both adversarial-order and random-order streams: using $\tO(n)$ bits of working memory, they gave an $O(\Delta^2)$-coloring in the former setting, and a $(2e\Delta)$-coloring in the latter setting. Charikar and Liu \cite{CharikarL21} improved these results: for adversarial-order streams, for any $s=\Omega(\log n)$, they gave an $O(\Delta^2/s)$-coloring algorithm that uses $\tO(ns)$ space; and for random-order streams, they gave a $(1+o(1))\Delta$-coloring algorithm using $\tO(n)$ space. Both of the aforementioned algorithms for adversarial-order streams are, however, randomized. Ansari et al.~\cite{AnsariSZ22} gave simple deterministic algorithms achieving the same bounds of $O(\Delta^2/s)$ colors and $\tO(ns)$ space. Their algorithm can also be made online at the cost of a factor of $2$ in the number of colors. Note that parameterizing our results in \Cref{table:wstream} appropriately, our algorithms achieve $O(\Delta^2/s)$-colorings in $\tO(n\sqrt{s})$ space, matching the state of the art for $s=O(1)$, and strictly improving upon it for $s=\omega(1)$.

\mypar{Concurrent work} In an independent and parallel work, Behnezhad and Saneian \cite{BehnezhadS23} have designed a randomized $\tO(n\sqrt{\Delta})$-space W-streaming algorithm for $O(\Delta)$-edge-coloring for edge-arrival streams in simple general graphs. This matches our \Cref{cor:wstream-rand-ea}. Their result generalizes to give, for any $s \in [\sqrt{\Delta}]$, an $O(\Delta^{1.5} / s)$ coloring algorithm in $\tO(n s)$ space, while we achieve an $O(\Delta^2 / s)$-coloring in the same space. They also get an $O(\Delta)$-edge-coloring algorithm for vertex-arrival streams using $\tO(n)$ space, similar to our \Cref{thm:online-rand-va-edgecol}. Note that some of our edge-arrival algorithms have the additional strong feature of being online, while it is not clear if their edge-arrival algorithm can also be implemented in the online setting. 
In terms of techniques, while both works have some high level ideas in common, e.g., using random offsets/permutations to keep track of colors, or designing a one-sided vertex-arrival algorithm first and building on it to obtain the edge-arrival algorithm, the final algorithms and analyses in the two papers are fairly different.

Another independent work by Chechik, Mukhtar, and Zhang \cite{ChechikMZ23} obtains a randomized W-streaming algorithm that edge-colors an edge-arrival stream on general multi-graphs using $O(\Delta^{1.5} \log \Delta)$ colors in expectation\footnote{While \cite{ChechikMZ23} does not claim this, one can prove their algorithm uses $O(\Delta^{1.5} \log \Delta)$ colors with $\ge 1 - 1/\poly(n)$ probability.}, and $\tO(n)$ bits of space in expectation. Unlike us, they make no claims in the online model.

\section{Preliminaries}
\subsection{Notation}

Throughout the paper, logarithms are in base $2$. The notation $[t]$ indicates the set of integers $\{1,\ldots,t\}$. The notation 
 $\tO(x)$ ignores $\poly(\log(n),\log(\Delta))$ factors in $x$. $A \sqcup B$ gives the disjoint union of $A$ and $B$. $S_t$ is the set of permutations over $[t]$, and for any permutation $\sigma \in S_t$ and $X\subseteq [t]$, we denote $\sigma[X] := \{\sigma_i : i \in X\}$. For any set $X$, $\binom{X}{k}$ denotes the set of all $k$-sized subsets of $X$.

If not otherwise stated, $n$ is the number of vertices in a graph $G$, $V$ the set of vertices (or $A \sqcup B$ if the graph is bipartite), $E$ the (multi-)set of edges, and $\Delta$ is the maximum degree of the graph.

\subsection{Basic Definitions}

\begin{definition} 
  A random permutation $\sigma$ in $S_n$ is $k$-wise independent if, for all
  distinct $a_1,\ldots,a_k$ in $[n]$, and distinct $b_1,\ldots,b_k$ in $[n]$,
  we have:
  \begin{align*}
    \Pr\left[\bigwedge_{i \in [k]} \{ \sigma(a_i) = b_i \} \right] = \frac{1}{\prod_{i \in [k]} (n - i + 1)}
  \end{align*}
  
  A family of permutations is $k$-wise independent if the random variable for 
  a uniformly randomly chosen element of that family is $k$-wise independent.
  
  Per \cite{AlonL12}, while it is not known if there are nontrivial
  $k$-wise independent families of permutations for large $k$ and $n$, one can
  always construct weighted distributions which have support of size $n^{O(k)}$ and provide
  $k$-wise independence.

  A random permutation $\sigma$ is $(\epsilon,k)$-wise independent if for all distinct $a_1,\ldots,a_k$ in $[n]$, the distribution of $\sigma$ on $a_1,\ldots,a_k$ has total variation distance $\le \epsilon$ from uniform. In other words,
  \begin{align*}
      \frac{1}{2} \sum_{\text{distinct $b_1,\ldots,b_k$ in $[n]$}} \left| \Pr\left[\bigwedge_{i \in [k]} \{ \sigma(a_i) = b_i \} \right] - \frac{1}{\prod_{i \in [k]} (n - i + 1)} \right| \le \epsilon
  \end{align*}

  We say almost $k$-wise independent, when the random permutation is $(\epsilon,k)$-wise independent for sufficiently small $\epsilon$.
\end{definition}

\subsection{Models}

This paper will use the following models of presenting edges to an algorithm to be colored. In all cases, the set of vertices for the graph is known in advance. For general graphs, we call the set of vertices $V$; for bipartite graphs, $V$ is partitioned into two disjoint sets, which we typically call $A$ and $B$. Let $G$ be the (multi-) graph formed by taking the union of all edges in the stream.

We assume that the maximum degree $\Delta$ of $G$ is known in advance. An edge-coloring algorithm for which $\Delta$ is not known in advance can be converted to one which is, although one way to do this conversion (by running a new $2 \Delta$-coloring algorithm with a fresh set of colors whenever the maximum degree of graph formed by the input stream doubles) increases the total number of colors used by a constant factor, and requires  $O(n \log \Delta)$ bits of space to keep track of the maximum degree.
Since the algorithms in this paper already have large constant factors on number of colors used, it is not worth it to optimize the algorithms for the case where $\Delta$ is not known in advance.

\begin{definition}
    With an \emph{edge arrival stream}, the algorithm is given a sequence of edges in the graph. Each edge is provided as an ordered pair $\{x,y\}$ of vertices in $V$. In this paper, online algorithms processing edge arrival streams will implement a method $\textsc{Process}(\{x,y\})$ which returns the color assigned to the edge. For example, see  \Cref{alg:simple-greedy}, an implementation of the greedy edge coloring algorithm using $O(n \Delta)$ bits of space. W-streaming algorithms may assign the color for an edge at any time, although all edges must be given a color at the end of the stream.
\end{definition}

\begin{definition}
    In a \emph{vertex arrival stream}, the algorithm is given a sequence of (vertex,edge-set) pairs $(v,M_v)$, where the edge set $M_v$ contains all
    edges from $v$ to vertices that have been seen earlier in the stream. Online algorithms should report colors for all edges in $M_v$ when  $(v,M_v)$ is processed.

    A \emph{one-sided vertex arrival stream} on a bipartite graph with parts $A,B$ is like a vertex arrival stream, if which the vertices for one part ($B$) were all presented first, and then all the (vertex,edge-set) pairs for the other part ($A$) are given. For one-sided vertex arrival, we assume that the algorithm knows parts $A$ and $B$ in advance, and receives the
    (vertex,edge-set) pairs for $B$. The stream consists of pairs $(v,M_v)$, where each $v \in A$, and $M_v$ contains all edges from $v$ to $B$.
\end{definition}

An algorithm is said to be robust if it works with $\ge 1- \delta$ probability
even when its input streams are adaptively generated. By "adaptively generated",
we mean that the input is produced by an adaptive adversary that sees all outputs
of the online (or W-streaming) algorithm, and repeatedly 
chooses the next element of the stream based on what the algorithm has output so
far. See \cite{BenEliezerJWY20} for a more detailed explanation.

\begin{algorithm}[htb!]
  \caption{An implementation of a greedy $2\Delta-1$ online edge-coloring algorithm using $O(n \Delta)$ bits of space \label{alg:simple-greedy}}
  \begin{algorithmic}[1]
  \Statex \textbf{Input}: Stream of edges in an $n$-vertex graph $G = (V,E)$
    \Statex
    \Statex \underline{\textbf{Initialize}:}
    \For{$v \in V$}
    \State $U_v \gets \emptyset$ is a subset of $[2\Delta-1]$
    \EndFor
    \Statex
    
  \Statex \underline{\textbf{Process}(edge $\{x,y\}$) $\rightarrow$ color} 
    \State Let $c$ be arbitrary color in $[2\Delta-1] \setminus U_x \setminus U_y$
    \State Add $c$ to $U_x$ and to $U_y$
    \State \textbf{return} color $c$
  \end{algorithmic}
\end{algorithm}

\section{Edge coloring on vertex arrival streams}

\begin{lemma}[Deterministic general-to-bipartite partitioning]\label{lem:edge-gen-to-bpt}
   % see Lemma 38 or URN4
   For sufficiently large $n$, there is a set of $t = 4 \ceil{\log n}$ bipartite graphs $F_1,\ldots,F_t$, and an online algorithm $\cA$, which processes a stream of edges and assigns each edge to one of the $t$ graphs. The algorithm ensures that at each time, for each vertex $v$, $\deg_{F_i}(v) \le \frac{300}{\log n} \deg_G(x) + 1$. It uses $O(n (\log n) (\log \Delta))$ bits of space.
\end{lemma}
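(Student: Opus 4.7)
The plan is to identify each vertex $v \in V$ with a distinct codeword $c_v$ drawn from a binary code $\cC \subseteq \{0,1\}^t$ of length $t = 4 \lceil \log n \rceil$, size $n$, and relative distance $\delta \ge 0.1$, constructible in $\poly(n)$ preprocessing time by the Gilbert--Varshamov greedy (the rate is $1/4 < 1 - H(0.1)$, well within the achievable regime). For each $i \in [t]$, take $F_i$ to be the graph with fixed bipartition $A_i = \{v : c_v^{(i)} = 0\}$, $B_i = \{v : c_v^{(i)} = 1\}$. The online algorithm will assign each arriving edge $\{x, y\}$ to some $F_i$ for which $x$ and $y$ lie on opposite sides, i.e., $i \in D(x, y) := \{i : c_x^{(i)} \ne c_y^{(i)}\}$; by the code's distance, $|D(x, y)| \ge \delta t$.

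To choose the assignment, the algorithm maintains counters $\deg_{F_i}(v)$ for each $(v, i)$ pair (so that $d_v := \deg_G(v) = \sum_i \deg_{F_i}(v)$ is derivable). On arrival of $\{x, y\}$, it scans $D(x, y)$ and picks any $i^*$ satisfying both $\deg_{F_{i^*}}(x) \le C(d_x + 1)/t$ and $\deg_{F_{i^*}}(y) \le C(d_y + 1)/t$, where $C = 1200$. Such an $i^*$ exists by a Markov-style averaging: since $\sum_i \deg_{F_i}(x) = d_x$, the indices violating the $x$-bound number fewer than $t d_x / (C(d_x+1)) < t/C$, similarly for $y$, so the ``bad'' set has size less than $2t/C$, leaving $|D(x, y)| - 2t/C \ge \delta t - 2t/C = t \cdot (0.1 - 1/600) > 0$ admissible indices for sufficiently large $n$ (concretely $n \ge 6$).

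A straightforward induction over the stream then establishes the invariant $\deg_{F_i}(v) \le C d_v / t + 1 = 300\,d_v / \log n + 1$ at every time step: placement at $i^*$ preserves the bound for the new degrees of $x$ and $y$ by the choice criterion, while all other $(v, i)$ counters are unchanged and their associated $d_v$ only grow. The dominant space cost is the $nt$ degree counters of $O(\log \Delta)$ bits each, giving $O(n (\log n)(\log \Delta))$ bits total; the codeword table is an additional $O(n \log n)$ bits, which is subsumed. The main subtlety is just choosing $(C, \delta)$ so that both $C \cdot \log n / t \le 300$ (matching the stated constant) and $\delta > 2/C$ (so a valid $i^*$ always exists) hold simultaneously; $C = 1200$ with any code of relative distance $\ge 0.1$ does the job with room to spare.
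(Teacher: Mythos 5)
Your proposal is correct and follows essentially the same route as the paper: encode vertices with a constant-relative-distance binary code of length $t = 4\ceil{\log n}$ to define the $t$ bipartitions, and use a Markov/averaging argument over the per-vertex load counters to show some coordinate where the codewords differ always admits the edge, giving the $\deg_{F_i}(v) \le 300\deg_G(v)/\log n + 1$ invariant and the $O(n(\log n)(\log\Delta))$ space bound. The only (immaterial) difference is that you instantiate the code via a Gilbert--Varshamov greedy construction rather than the Sipser--Spielman expander codes the paper cites, with slightly different distance constants.
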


Using \Cref{lem:edge-gen-to-bpt} to route edges to $O(\log n)$ instances of an algorithm that
$f(\hat{\Delta})$ colors bipartite graphs of max degree $\le \hat{\Delta}$ implies the following corollary.

\begin{corollary}[Of \Cref{lem:edge-gen-to-bpt}]\label{lem:cvt-gen-to-bpt-ea}
  Say $f : \NN \mapsto \NN$ is a function for which $f(x) / x$ is monotonically
  increasing. Then given an algorithm $\cA$ for edge coloring with $f(\Delta)$
  colors on edge arrival streams over bipartite graphs of max degree $\Delta$,
  which uses $g(n,\Delta)$ bits of space, one can implement an algorithm
  $\cB$ for edge coloring with $O(f(\Delta))$ colors on general graphs,
  using $O((g(n,\Delta) + n \log \Delta) \log n)$ space.
\end{corollary}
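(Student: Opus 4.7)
The plan is to reduce to the bipartite edge-arrival setting by partitioning the incoming general-graph edge stream into $t = 4 \lceil \log n \rceil$ bipartite substreams using \Cref{lem:edge-gen-to-bpt}, and then running an independent copy of $\cA$ on each substream with its own disjoint color palette. Concretely, $\cB$ initializes the partitioning algorithm $\cA'$ from \Cref{lem:edge-gen-to-bpt} and $t$ instances $\cA_1,\ldots,\cA_t$ of $\cA$, each set up for an $n$-vertex bipartite graph of maximum degree $\hat\Delta := \lceil \tfrac{300}{\log n}\Delta \rceil + 1$. When an edge $\{x,y\}$ arrives, $\cB$ feeds it to $\cA'$, which deterministically routes it to some $F_i$; $\cB$ then forwards $\{x,y\}$ to $\cA_i$, obtains a color $c \in [f(\hat\Delta)]$, and outputs the pair $(i,c)$, which can be viewed as a single color in a palette of size $t \cdot f(\hat\Delta)$.

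Correctness is immediate: the output is a proper edge coloring because two adjacent edges routed to different $F_i$'s get colors with different first coordinates, while two adjacent edges routed to the same $F_i$ are handled correctly by $\cA_i$ (whose input is guaranteed by \Cref{lem:edge-gen-to-bpt} to be a bipartite graph of max degree at most $\hat\Delta$). For the color count, we use the hypothesis that $f(x)/x$ is monotonically increasing: since $\hat\Delta \le \Delta$, we have $f(\hat\Delta) \le \hat\Delta \cdot f(\Delta)/\Delta$, so
\begin{align*}
t \cdot f(\hat\Delta) \;\le\; 4\lceil \log n \rceil \cdot \left(\tfrac{300}{\log n}\Delta + 2\right) \cdot \frac{f(\Delta)}{\Delta} \;=\; O(f(\Delta)),
\end{align*}
where we also use that $f(\Delta) \ge \Delta$ to absorb the additive $+2$ term (or, for the corner case of very small $\Delta$, we may simply run the trivial greedy algorithm).

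For space, the $t$ instances of $\cA$ contribute $t \cdot g(n,\hat\Delta) \le O(g(n,\Delta) \log n)$ bits, assuming without loss of generality that $g$ is monotone in its second argument (otherwise we can inflate the degree bound fed to $\cA$ up to $\Delta$). The partitioning algorithm $\cA'$ uses $O(n (\log n)(\log \Delta))$ bits by \Cref{lem:edge-gen-to-bpt}. Adding an $O(\log t) = O(\log \log n)$ overhead per output for encoding the pair $(i,c)$, the total space is $O((g(n,\Delta) + n \log \Delta)\log n)$, as claimed.

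There is no serious obstacle here; the only subtlety is to correctly invoke the hypothesis that $f(x)/x$ is monotone (rather than $f$ itself) so that the $t$-fold blowup in number of copies is exactly canceled by the $1/\log n$ shrinkage in the per-copy degree bound guaranteed by \Cref{lem:edge-gen-to-bpt}.
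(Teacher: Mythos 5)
Your proposal is correct and matches the paper's approach exactly: the paper obtains this corollary precisely by using \Cref{lem:edge-gen-to-bpt} to route the edge stream to $t = 4\lceil \log n\rceil$ instances of $\cA$ with disjoint palettes, with the degree shrinkage $\hat\Delta = O(\Delta/\log n)$ and the monotonicity of $f(x)/x$ canceling the $O(\log n)$ blowup in the number of copies. Your filled-in details (color accounting, space bound, small-$\Delta$ fallback) are consistent with the paper's one-line derivation.
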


Combining the previous corollary with that fact that one can convert an algorithm for one-sided vertex arrival streams on bipartite graphs to general ("two-sided") vertex arrival streams on bipartite graphs, only doubling the number of colors used, gives the following:

\begin{corollary}[Of \Cref{lem:edge-gen-to-bpt}]\label{lem:cvt-gen-to-bpt}
  Say $f : \NN \mapsto \NN$ is a function for which $f(x) / x$ is monotonically
  increasing. Then given an algorithm $\cA$ for edge coloring with $f(\Delta)$ 
  colors on one-sided vertex arrival streams over bipartite graphs of max degree
  $\Delta$, which uses $g(n,\Delta)$ bits of space,
  one can implement an algorithm $\cB$ for edge coloring under vertex arrivals of general graphs using 
  $O(f(\Delta))$ colors and $O((g(n,\Delta) + n \log \Delta) \log n)$ space.
\end{corollary}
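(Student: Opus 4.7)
The plan is to compose two reductions. First, I convert the given one-sided vertex arrival bipartite algorithm $\cA$ into a two-sided vertex arrival bipartite algorithm. Then, mirroring the edge-arrival proof of \Cref{lem:cvt-gen-to-bpt-ea}, I use \Cref{lem:edge-gen-to-bpt} to reduce vertex arrival on general graphs to two-sided vertex arrival on $O(\log n)$ bipartite graphs, now routing entire vertex-arrival events rather than individual edges.

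For the first reduction, I run two independent copies $\cA_1, \cA_2$ of $\cA$ on disjoint palettes of $f(\Delta)$ colors each. In any two-sided arrival stream on a bipartite graph $A \sqcup B$, each edge $\{x,y\}$ with $x \in A$, $y \in B$ is presented exactly once, namely when its later endpoint arrives. I forward each arrival of $v \in A$ (together with its backward edges) to $\cA_1$, which treats $B$ as the fixed side, and forward each arrival of $v \in B$ to $\cA_2$, which treats $A$ as fixed. Each copy thus sees a valid one-sided stream, and disjoint palettes make the combined coloring proper. This uses $2f(\Delta)$ colors and $2g(n,\Delta)$ bits. For the second reduction, I invoke \Cref{lem:edge-gen-to-bpt} to obtain $t = 4\ceil{\log n}$ bipartite graphs $F_1,\ldots,F_t$ with fixed bipartitions and an online partitioner that routes each edge to some $F_i$, ensuring $\deg_{F_i}(v) \le \tDelta := \tfrac{300}{\log n}\Delta + 1$ for every $v$ at all times. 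I spawn $t$ copies of the two-sided algorithm from the first reduction, one per $F_i$, with mutually disjoint palettes. When a vertex $v$ arrives in the general graph with backward-edge set $M_v$, I feed the edges of $M_v$ one at a time to the partitioner, collect the induced subsets $M_v^i \subseteq M_v$ routed to $F_i$, and deliver the vertex arrival $(v, M_v^i)$ to the $i$-th bipartite copy. Because each $F_i$ has a fixed bipartition and each general-graph vertex arrives exactly once carrying only backward edges, every bipartite copy receives a legitimate two-sided vertex arrival stream on $F_i$.

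For the color count, the monotonicity of $f(x)/x$ yields $f(\tDelta) \le f(\Delta)\cdot \tDelta/\Delta = O(f(\Delta)/\log n)$, so the $t = O(\log n)$ bipartite copies collectively use $O(f(\Delta))$ colors. The space is $t \cdot 2\, g(n,\tDelta)$ for the bipartite copies plus the partitioner's $O(n(\log n)(\log\Delta))$ budget, totalling $O((g(n,\Delta) + n\log\Delta)\log n)$ as claimed (using monotonicity of $g$ in its degree parameter, which is standard for space bounds). The main subtlety is confirming that each bipartite instance indeed receives a valid two-sided vertex arrival stream on $F_i$: \Cref{lem:edge-gen-to-bpt} guarantees that only edges respecting $F_i$'s bipartition are ever routed to it, and since routing happens at the moment the general-graph vertex arrives, each instance's arrival order is the restriction of the master stream's arrival order to the vertices it sees, preserving the invariant that every edge in $M_v^i$ goes to a vertex already known to instance $i$.
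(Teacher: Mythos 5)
Your proposal is correct and follows essentially the same route as the paper: the paper obtains this corollary exactly by combining \Cref{lem:edge-gen-to-bpt} (routing each arriving vertex's backward edges to $O(\log n)$ degree-reduced bipartite instances, with the monotonicity of $f(x)/x$ absorbing the $\log n$ blow-up in the number of instances) with the standard two-copy conversion from one-sided to two-sided vertex arrivals that doubles the palette, which is precisely your construction with the details spelled out. The only minor caveat, shared with the paper's own sketch, is that the additive $+1$ in the degree bound of \Cref{lem:edge-gen-to-bpt} means the claimed $O(f(\Delta)/\log n)$ per-instance color bound holds only for $\Delta = \Omega(\log n)$; smaller $\Delta$ is handled by a trivial greedy fallback within the stated space budget.
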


\begin{proof}[Proof of \Cref{lem:edge-gen-to-bpt}]
  We claim \Cref{alg:gen-to-bpt} works for sufficiently large $n$.
    
  \begin{algorithm}[htb!]
    \caption{Algorithm to partition general graph edges into bipartite graphs\label{alg:gen-to-bpt}}
    \begin{algorithmic}[1]
    \Statex \textbf{Input}: Stream of edges in an $n$-vertex graph $G=(V,E)$
      \Statex
      \Statex \underline{\textbf{Initialize}:}
      \For{$v \in V$}
        \State $\deg_{F_i}(v) \gets 0$
      \EndFor
      \State Setup binary code $\cC$ of length $t := 4 \ceil{\log n}$ from \Cref{cor:practical-binary-codes}
      \For{$i \in [t]$}
        \State Let $F_i$ be the bipartition with parts $A_i = \{v \in V : \cC(v)_i = 0 \}$ and $B_i = \{v \in V : \cC(v)_i = 1 \}$
      \EndFor

    \Statex 
    \Statex\underline{\textbf{Process}(edge $\{x,y\}$)}
      \State Let $\deg(x) = \sum_{i \in [t]} \deg_{F_i}(x)$ and $\deg(y) = \sum_{i \in [t]} \deg_{F_i}(y)$.
      \For{$i \in [t]$}
          \If{$\cC(x)_i \ne \cC(y)_i$ and $\deg_{F_i}(x) \le 1200 \deg(x) / t$ and $\deg_{F_i}(y) \le 1200 \deg(y) / t$}\label{line:gen-to-bpt-edge-cond}
              \State Increase $\deg_{F_i}(x)$ and $\deg_{F_i}(y)$ by 1
              \State Assign edge $\{x,y\}$ to $F_i$
              \State \textbf{return}
          \EndIf
      \EndFor
      \State unreachable \label{line:gen-to-bpt-unreachable}
    \end{algorithmic}
  \end{algorithm}

  It is clear that at each point in time, for all $v \in V$ and $i \in [t]$, the algorithm will have $\deg_{F_i}(v)$ be the number of edges assigned to $F_i$ incident on $v$.
  
  Line \ref{line:gen-to-bpt-edge-cond} of the algorithm ensures that before
  edge $\{x,y\}$ is assigned, $\deg_{F_i}(x) \le 1200 \deg(x) / t$. Consequently, after the edge is assigned, $\deg_{F_i}(x) \le 1200 \deg(x) / t + 1 \le 300 \deg(x) / \log(n) + 1$. Similarly, we will have $\deg_{F_i}(y) \le 300 \deg(y) / \log(n) + 1$.
  
  It remains to prove that the algorithm will always assign an edge, and that Line \ref{line:gen-to-bpt-unreachable} is never reached. When processing edge an $\{x,y\}$,
  define $B_v := \{i \in [t]: \deg_{F_i}(v) > 1200 \deg(v) / t$. By Markov's inequality,
  since $\sum_{i\in[t]} \deg_{F_i}(v) = \deg(v)$, $|B_v| \le t / 1200$. Because the code
  $\cC$ has minimum distance $t / 400$, the set $K = \{i : \cC(x)_i \ne \cC(y)_i\}$ has size $\ge t/400$; and the set of $i \in [t]$ for which Line \ref{line:gen-to-bpt-edge-cond} passes has size $|K \setminus B_x \setminus B_y| \ge t/400 - t/1200 - t/1200 = t/1200$,
  and hence is nonempty.
\end{proof}

% note: this may be worth breaking up into correctness & probability of abort sublemmas

\begin{lemma}\label{lem:online-bpt-rand-va-edgecol}
  \Cref{thm:online-rand-va-edgecol} holds for one-sided vertex arrival streams on bipartite graphs.
\end{lemma}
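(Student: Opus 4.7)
The plan is to construct the online algorithm around one random permutation per vertex of $B$. For each $b\in B$ I would draw from the oracle randomness an independent uniformly random permutation $\pi_b$ of a palette $[C]$ with $C:=5\Delta$; this uses $\tO(n\Delta)$ oracle bits in total. The only state kept in working memory is a counter $d_b \in\{0,1,\dots,C\}$ per $b\in B$, giving $O(n\log\Delta) = \tO(n)$ bits. The invariant I would maintain is that at every point in time, the set of colors already assigned to edges incident to $b$ is contained in the prefix $\{\pi_b(1),\dots,\pi_b(d_b)\}$.

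When a vertex $a\in A$ arrives with its $k\le\Delta$ incident edges $e_1,\dots,e_k$ (where $e_i=\{a,b_i\}$, possibly with repeated $b_i$ in the multigraph case), I would process them one by one in an arbitrary fixed order and color them greedily: for each $e_i$, scan $\pi_{b_i}(d_{b_i}+1),\pi_{b_i}(d_{b_i}+2),\dots$ in order until hitting the first entry $c_i$ that differs from all previously chosen $c_1,\dots,c_{i-1}$; let $j_i$ be the position at which $c_i$ was found; output $c_i$ as the color of $e_i$; and then \emph{replace} $d_{b_i}$ by $j_i$ rather than merely incrementing it, which is what preserves the prefix invariant (at the cost of ``wasting'' any scanned entries of $\pi_{b_i}$ that happened to fall in the forbidden set at $a$). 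Since the forbidden set at $a$ has size $<\Delta$, the scan must succeed before exhausting $\pi_{b_i}$ as long as $d_{b_i} < C - \Delta$; so correctness reduces to the claim that no counter $d_b$ ever exceeds $2\Delta$ (say).

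The heart of the proof, and the main obstacle, is exactly this last claim. For a fixed $b\in B$ I would decompose $d_b = \sum_{t=1}^{\deg(b)} Y_t$, where $Y_t$ is the advancement $j - d_b^{(t-1)}$ incurred at the $t$-th arrival that brings an edge to $b$. Because of the prefix invariant, conditioned on everything revealed by the algorithm's outputs so far, the unseen suffix $\pi_b(d_b^{(t-1)}+1), \pi_b(d_b^{(t-1)}+2),\dots$ is a uniformly random ordering of the $C - d_b^{(t-1)}$ colors not yet used at $b$. At the current round at most $\Delta - 1$ of these colors are forbidden (because $a$ has $<\Delta$ other edges), so whenever $d_b^{(t-1)} \le 2\Delta$ the probability that any single suffix position is forbidden is at most $(\Delta-1)/(C-2\Delta) = (\Delta-1)/(3\Delta) < 1/3$. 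I would therefore stochastically dominate $Y_t$ by $1+\mathrm{Geom}(2/3)$ up to the stopping time at which $d_b$ first exceeds $2\Delta$, feed the resulting stopped sum into the moment-generating-function concentration bound of \Cref{lem:moment-gen-func}, and conclude $\Pr[d_b > 2\Delta] \le 1/n^{\Omega(1)}$; a union bound over $b\in B$ finishes the argument.

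The remaining features fall out naturally. Multigraphs are handled without modification, because two parallel edges of $a$ to the same $b$ simply invoke the greedy scan twice in sequence, advancing $d_b$ on both visits and thereby automatically getting distinct colors. Adaptive adversaries are handled because at any point the adversary's view of $\pi_b$ consists of at most $d_b \le 2\Delta$ revealed entries out of $5\Delta$, so the ``remaining suffix is uniformly random over the unused colors'' fact used in the tail bound continues to hold against adaptively generated input sequences.
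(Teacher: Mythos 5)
Your proposal is essentially the paper's own proof: \Cref{alg:bpt-rand-va-edgecol} uses exactly this scheme (palette of $5\Delta$, one oracle-random permutation $\sigma_b$ per $b\in B$, a single forward-moving pointer per vertex, greedy scanning past colors already used at the arriving vertex), and the paper's analysis likewise bounds the per-edge skip count by a geometric-type tail conditioned on the revealed prefix, feeds it into \Cref{lem:moment-gen-func} with a Chernoff/MGF argument, and union-bounds over $B$. The one point you gloss over is that the $\exp(-\Omega(\Delta))$ tail is only $1/\poly(n)$ when $\Delta=\Omega(\log n)$; for $\Delta=O(\log(n/\delta))$ your claimed bound $\Pr[d_b>2\Delta]\le n^{-\Omega(1)}$ does not hold, and the paper covers this regime by falling back to the greedy algorithm (\Cref{alg:simple-greedy}), which then uses only $O(n\Delta)=O(n\log(n/\delta))$ bits, so adding that case makes your argument complete.
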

% Similar to: URN4-36
\begin{proof}[Proof of \Cref{lem:online-bpt-rand-va-edgecol}]
  Consider \Cref{alg:bpt-rand-va-edgecol}. This algorithm will have the required properties if $\Delta \ge 6 \ln \frac{n}{\delta}$; if $\Delta$ is smaller, convert the vertex arrival stream to an edge arrival stream and pass it to \Cref{alg:simple-greedy}, which guarantees a $2\Delta-1$ coloring of the graph using $O(n \Delta) = O(n \ln \frac{n}{\delta})$ bits of space.

  \begin{algorithm}[htb!]
    \caption{Randomized algorithm for $5 \Delta$ edge coloring for (adversarial) one sided vertex arrival bipartite streams\label{alg:bpt-rand-va-edgecol}}
    \begin{algorithmic}[1]
    \Statex \textbf{Input}: Stream of vertex arrivals $n$-vertex graph $G=(A \sqcup B,E)$
      \Statex
      \Statex \underline{\textbf{Initialize}:}
      \State Let $C = 5 \Delta$.
      \For{$v \in B$}
        \State Let $\sigma_v$ be a uniformly random permutation over $[C]$ \Comment{constructed on demand from random oracle bits}.
        \State $h_v \gets 1$.
            
      \EndFor
      
    \Statex \underline{\textbf{Process}(vertex $x$ with multiset $M_v$ of edges to $B$)}
      \State Let $S \gets \emptyset$ \Comment{Set of colors $M_x$ will have used so far}
      \For{$e = \{x,y\}$ in $M_x$, in arbitrary order}
        \While{$h_y \le C \land \sigma_y[h_y] \in S$}\label{step:bpt-rand-va-edgecol-pick-loop}
          \State $h_y \gets h_y + 1$
        \EndWhile
        \If{$h_y > C$}\State abort \EndIf
        \State Assign color $\sigma_y[h_y]$ to $e$\label{step:bpt-rand-va-edgecol-color-assigned}
        \State $S \gets S \cup \{\sigma_y[h_y] \}$
        \State $h_y \gets h_y + 1$ \label{step:bpt-rand-va-edgecol-step-hy}
      \EndFor
    \end{algorithmic}
  \end{algorithm}

  This algorithm will never assign the same color to any pair of edges adjacent to
  the same vertex; at worst, it will abort. The condition on
  Line \ref{step:bpt-rand-va-edgecol-pick-loop} ensures
  that when a vertex $x$ is processed, no two edges will be assigned the same color.
  On the other hand, after Line \ref{step:bpt-rand-va-edgecol-color-assigned} assigns
  a color to an edge, Line \ref{step:bpt-rand-va-edgecol-step-hy} increases $h_y$;
  because $\sigma_v$ is a permutation, this prevents the algorithm from ever
  assigning the same color twice to edges incident on some vertex $y$ in $B$.
  
  For the rest of the proof, we will argue that the algorithm never aborts; equivalently,
  that $h_y \le C$ always holds for all $y \in B$. In fact, we shall prove the stronger
  claim, that $h_y \le C - 2 \Delta$ holds with probability $\ge 1 - \delta / n$
  for each individual $y \in B$. Consider a specific vertex $y \in B$.
  For each $i \in [\Delta]$, let $V_{y,i}$
  be the random variable counting the number of times that the loop starting at
  Line \ref{step:bpt-rand-va-edgecol-pick-loop} ran, when the $i$th edge adjacent
  to $y$ was processed. If there was no $i$th edge (or the algorithm already aborted), % alt: or if h_y >= C - 2Delta
  we set $V_{y,i} = 0$; then at the end of the stream, we will have
  $h_y \le \Delta + \sum_{i\in\Delta} V_{y,i}$.
  
  We now consider the distribution of $V_{y,i}$, conditioned on both the value of the
  variable $S$ at the time the $i$th edge was processed, and on the parts of
  the permutation $\sigma_y$ which the algorithm has read so far, $\sigma_y[1..{h_y-1}]$.
  \begin{align*}
    % V_{y,i} >= k implies that the next k elements are _all_ contained in S
    \Pr[V_{y,i} \ge k \mid S, \sigma_y[1..{h_y-1}]] &= \Pr[\sigma_y[h_y,\ldots,h_y + k - 1] \subseteq S \mid S, \sigma_y[1..{h_y-1}]] \\
      &= \binom{|S \cap \sigma_y[h_y,...,C]|}{k} / \binom{C - h_y - 1}{k}  \\
      & \le \binom{\Delta}{k} / \binom{2 \Delta}{k} \qquad\qquad \text{since $|S|\le\Delta$, $h_y \le C - 2\Delta$} \\
      &\le \frac{\Delta\cdot (\Delta-1) \cdots (\Delta -k +1)}{2\Delta \cdot (2\Delta-1) \cdots (2\Delta -k +1)} \le \frac{1}{2^k}
  \end{align*}
  
  Since this bound holds for all values of $S$ and all $\sigma_y[1..{h_y-1}]$, in particular we
  have
  \begin{align*}
    \Pr[V_{y,i} \ge k | (V_{y,j})_{j < i}] = \EE_{\text{$S,\sigma_y[1..{h_y-1}]$ compat with $(V_{y,j})_{j < i}$}} \Pr[V_{y,i} \ge k |S,\sigma_y[1..{h_y-1}]] \le \frac{1}{2^k}
  \end{align*}
  By \Cref{lem:moment-gen-func}, for $e^t \in [1,2)$, we have $\EE[e^{t V_{y,i} } \mid (V_{y,j})_{j < i}] \le 1 / (2 - e^t)$, and $\EE[V_{y,i} \mid (V_{y,j})_{j < i}] \le 1$.
  
  By a slight variation on the Chernoff bound:
  \begin{align*}
    \Pr[\sum_{i=1}^{\Delta} V_{y,i} \ge 2 \Delta] &\le \inf_{t\ge 0} \Pr\left[\prod_{i=1}^{\Delta} e^{t V_{y,i}} \ge e^{2 t \Delta}\right] \\
     &\le \inf_{t\ge 0} \frac{1}{e^{2 t \Delta}} \EE[e^{t V_{y,1}} \cdots \EE\left[e^{t V_{y,\Delta}} \mid (V_{y,j})_{j<\Delta}\right] \cdots ]  \\
     &\le \min_{t : e^t \in [1,2)} \frac{1}{e^{2 t \Delta}} \left(\frac{1}{2 - e^t}\right)^\Delta = \left( \min_{t : e^t \in [1,2)} \frac{1}{e^{2t} (2-e^t)}\right)^\Delta \\
     &= \left( \frac{1}{\max_{x \in [1,2)} x^2 (2-x)}\right)^\Delta = \left(\frac{27}{32}\right)^\Delta \le \exp(-\Delta/6)
  \end{align*}
  Since $h_v \le \Delta + \sum_{i\in\Delta} V_{y,i}$, this implies that
  \begin{align*}
    \Pr[h_v \ge C] \le \Pr[h_v \ge C - 2 \Delta] = \Pr[h_v \ge 3 \Delta] \le \Pr[\sum_{i=1}^{\Delta} V_{y,i} \ge 2 \Delta] \le  \exp(-\Delta/6)
  \end{align*}
  Thus, by a union bound, the probability that \emph{any} vertices $v \in B$ will have $h_v > C$ at the end of the algorithm will be $\le n \exp(-\Delta/6)$. In particular, if $\Delta \ge 6 \ln(n / \delta)$, the algorithm will the guaranteed to abort with probability $\le \delta$.
\end{proof}

Observe that \Cref{lem:cvt-gen-to-bpt} and \Cref{lem:online-bpt-rand-va-edgecol} collectively imply \Cref{thm:online-rand-va-edgecol}.

\begin{theorem}[Formal version of \Cref{thm:online-rand-va-edgecol}]\label{thm:online-rand-va-formal}
   There is a randomized online $O(\Delta)$-edge coloring algorithm for vertex arrival streams over multigraphs using $O(n \log (n \Delta / \delta))$ bits of space, with error $\le \delta$ against any adaptive adversary. It uses $O(n \Delta \log \Delta)$ oracle random bits.
\end{theorem}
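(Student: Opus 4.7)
The plan is to derive \Cref{thm:online-rand-va-formal} as the direct composition of the bipartite one-sided-vertex-arrival algorithm of \Cref{lem:online-bpt-rand-va-edgecol} with the general-to-bipartite reduction of \Cref{lem:cvt-gen-to-bpt}, exactly as the remark following the two lemmas indicates.

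First I would instantiate the inner algorithm. \Cref{lem:online-bpt-rand-va-edgecol} provides a $5\Delta$-edge-coloring procedure for one-sided bipartite vertex-arrival streams that succeeds with probability $\ge 1-\delta$. Inspecting \Cref{alg:bpt-rand-va-edgecol}, the only explicit per-vertex state is a counter $h_v\in[5\Delta{+}1]$, so for $\Delta\ge 6\ln(n/\delta)$ it fits in $O(n\log\Delta)$ bits; in the complementary small-$\Delta$ regime, the $O(n\Delta)$-space fallback of \Cref{alg:simple-greedy} still fits in $O(n\log(n/\delta))$ bits. Each permutation $\sigma_v\in S_{5\Delta}$ is read lazily from the oracle, costing $O(\Delta\log\Delta)$ bits per $v\in B$ and $O(n\Delta\log\Delta)$ bits of oracle randomness in total.

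Next I would apply \Cref{lem:cvt-gen-to-bpt} with $f(\Delta)=5\Delta$ (so that $f(x)/x$ is monotone as required) and $g(n,\Delta)=O(n\log(n\Delta/\delta))$. The corollary outputs an $O(\Delta)$-coloring algorithm for vertex-arrival streams on general multigraphs using $O\bigl((g(n,\Delta)+n\log\Delta)\log n\bigr)=\tO(n\log(n\Delta/\delta))$ bits of working memory. Because the deterministic partitioner \Cref{alg:gen-to-bpt} routes the stream into $t=4\lceil\log n\rceil$ bipartite subproblems, each of max-degree $\le\Delta$ and at most $n$ vertices, the randomness budgets simply sum and the total oracle randomness is $\tO(n\Delta\log\Delta)$.

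Finally I would verify adaptive-adversary robustness. Since \Cref{alg:gen-to-bpt} is fully deterministic, the only randomness an adaptive adversary can probe is that of the per-vertex permutations used inside each bipartite instance. The concentration argument in the proof of \Cref{lem:online-bpt-rand-va-edgecol} bounds $\Pr[V_{y,i}\ge k\mid S,\sigma_y[1..h_y-1]]\le 2^{-k}$ conditionally on arbitrary values of the already-revealed prefix $\sigma_y[1..h_y-1]$ and the current color set $S$; these quantities encapsulate everything an adaptive adversary could deduce from prior colors, so the bound transfers to the adaptive setting. A union bound over the $O(n\log n)$ ``right'' vertices across all $t$ bipartite subproblems, with $\delta$ rescaled by a $\polylog(n)$ factor, closes the argument. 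The main obstacle I anticipate is purely bookkeeping: threading the failure-probability rescaling and the various $\log n$, $\log\Delta$, $\log(1/\delta)$ factors through the composition so that the final space bound collapses to the stated $O(n\log(n\Delta/\delta))$ under the $\tO$ convention.
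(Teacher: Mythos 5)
Your proposal is correct and follows exactly the paper's route: the paper proves \Cref{thm:online-rand-va-formal} simply by observing that \Cref{lem:cvt-gen-to-bpt} (the deterministic general-to-bipartite reduction) combined with \Cref{lem:online-bpt-rand-va-edgecol} (the $5\Delta$ one-sided bipartite algorithm) yields the result, which is precisely your composition. Your added verification of the space/randomness accounting and the adaptive-adversary transfer (via the conditional bound on $V_{y,i}$ and the determinism of the partitioner) fills in the same details the paper leaves implicit.
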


Let us now turn to the deterministic version of the problem. We will introduce an a deterministic algorithm which uses advice, and show that if the advice is chosen randomly, the algorithm will work with high probability. This algorithm can also be used as a randomized algorithm, if we choose the advice uniformly at random, and we use a somewhat complicated analysis to show that this can be achieved using only $\tO(n)$ (random) bits of advice.

\begin{lemma}
\label{lem:approx-indep-large-anticoncentration}
  Let $C,t,w$ be integers, with $C \ge t \ge 512$, and $8 | C$. Let $\epsilon \le C^{-t - 1}$. Say that $F_1,\ldots,F_w$ are subsets of $[C]$, and define $s_i = |F_i|$ for all $i \in [w]$. We furthermore require $\min_{i \in [w]} s_i \ge \frac{1}{2} t$, and $\min_{i \in [w]} s_i \ge \frac{1}{2} \max_{i\in[w]} s_i$. Let $X \subseteq [C]$ satisfy $|X| \le \frac{1}{8} C$, and let $\sum_{i \in [w]} s_i \le \frac{1}{2} C$. Then if $\sigma_1,\ldots,\sigma_w$ are $(\epsilon,t)$-wise independent random permutations over $[C]$,
  \begin{align*}
    \Pr\left[\left| \bigcup_{i \in [w]} \sigma_i[F_i] \setminus X \right| < \frac{1}{8} \sum_{i \in [w]} s_i \right] \le \exp(- \frac{1}{2^9} t w) \,.
  \end{align*}
\end{lemma}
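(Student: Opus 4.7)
The plan is to process the $w$ permutations sequentially and track the running union. Let $Y_i := \bigcup_{j\le i} (\sigma_j[F_j]\setminus X)$, $Z_i := X \cup Y_i$, $T_i := \sigma_i[F_i]\setminus Z_{i-1}$, and $M_i := s_i - |T_i| = |\sigma_i[F_i]\cap Z_{i-1}|$; then $|Y_w| = \sum_i |T_i|$. Write $\mathcal{F}_i := \sigma(\sigma_1,\ldots,\sigma_i)$ and denote by $\mathcal{E}$ the failure event $\{|Y_w| < \tfrac18\sum_i s_i\}$. As a first step, I would pass from $(\epsilon,t)$-wise to genuinely $t$-wise independent permutations: every per-permutation quantity I need to bound (a $t$-th factorial moment of $M_i$, say) is a function of $\sigma_i$ at most $t$ points of $[C]$, of which there are at most $C^t$ tuples; since $\epsilon\le C^{-t-1}$ and there are $w$ permutations, the total perturbation is at most $w/C$ and is absorbed into a final constant.

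The key structural observation will be that $\mathcal{E}$ forces the history to stay small: by monotonicity $|Y_{i-1}|\le|Y_w|<\tfrac18\sum_j s_j\le C/16$, hence $|Z_{i-1}|\le |X|+|Y_{i-1}|\le C/8+C/16=3C/16$ for every $i$. Consequently, on any $\mathcal{F}_{i-1}$-history compatible with $\mathcal{E}$, the conditional mean of $M_i$ is at most $s_i|Z_{i-1}|/C\le 3s_i/16$. I then define $B_i := \{M_i>3s_i/4\}\cap\{|Z_{i-1}|\le 3C/16\}$, i.e.\ the event that $|T_i|<s_i/4$ while the history is still ``small''. The crucial per-permutation estimate I aim to establish is that, uniformly for $s_i\in[t/2,C/2]$,
\begin{align*}
\Pr[B_i \mid \mathcal{F}_{i-1}] \;\le\; p \;:=\; \exp(-t/c_0)
\end{align*}
for some absolute constant $c_0$. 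This is a $(1+\delta)\mu$-deviation with $\delta=3$, exactly the regime for a large-deviation $t$-wise Chernoff--Hoeffding bound obtained from a moment-generating-function or higher-moment computation (plausibly via \Cref{lem:moment-gen-func}): for $s_i$ of order $t$ I would apply the large-deviation branch directly (exponent $\Omega(\delta\mu)\ge\Omega(t)$), while for $s_i\gg t$ I would pass to the smaller proxy deviation $\sqrt{t\mu}$ (still dominated by $3s_i/4-\mu$ in that regime) and use the small-deviation branch.

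The aggregation step is standard. On $\mathcal{E}$, $\sum_i|T_i|<\tfrac18\sum_i s_i$; letting $I_B:=\{i:|T_i|<s_i/4\}$, we get $\sum_{i\notin I_B}s_i/4\le\sum_i|T_i|<\sum_i s_i/8$, hence $\sum_{I_B}s_i>\tfrac12\sum_i s_i$. Combined with $s_i\in[\tfrac12 s^\star,s^\star]$ for $s^\star:=\max_j s_j$ and $\sum_j s_j\ge ws^\star/2$, this forces $|I_B|>w/4$. Since $\mathcal{E}$ simultaneously implies $|Z_{i-1}|\le 3C/16$ throughout, $\mathcal{E}$ is contained in $\{\sum_i\mathbf{1}_{B_i}\ge\lceil w/4\rceil\}$. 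Iterated conditioning and the per-permutation bound yield $\Pr[\bigcap_{i\in S}B_i]\le p^{|S|}$ for every $S\subseteq[w]$, so a union bound over the $\binom{w}{\lceil w/4\rceil}$ such subsets gives
\begin{align*}
\Pr[\mathcal{E}] \;\le\; \binom{w}{\lceil w/4\rceil}\,p^{w/4} \;\le\; (4e)^{w/4}\exp\left(-\tfrac{tw}{4c_0}\right) \;\le\; \exp\left(-\tfrac{tw}{2^9}\right),
\end{align*}
the last inequality using $t\ge 512$ and a sufficiently small absolute $c_0$.

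The main obstacle I anticipate is the per-permutation tail bound in the second paragraph. The hypothesis $s_i\ge t/2$ sits right at the boundary where off-the-shelf $t$-wise Chernoff inequalities stop producing $e^{-\Omega(t)}$ decay for large relative deviations, and one must carefully patch together the large- and small-deviation branches of a $t$-wise concentration bound so that the exponent stays proportional to $t$ uniformly in $s_i$. I expect this is where \Cref{lem:moment-gen-func} and the numeric constraint $t\ge 512$ are used decisively.
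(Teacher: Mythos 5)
Your outline is sound and reaches the stated bound, but it takes a genuinely different route from the paper. The paper never conditions on the failure event to keep the ``used'' set small: it fixes an arbitrary superset $B_1\supseteq X$ of size $C/8$, lets $B_i=B_{i-1}\cup\sigma_{i-1}[F_{i-1}]$ (so $|B_w|\le \tfrac58 C$ deterministically), and bounds the single aggregate overlap $\sum_i|\sigma_i[F_i]\cap B_i|$ exceeding $\tfrac78\sum_i s_i$ by a Chernoff--Azuma computation, where $(\epsilon,t)$-wise independence enters through a truncated moment-generating-function estimate (its Eq.~\ref{eq:moment-bound}). You instead exploit that on the failure event $|Z_{i-1}|\le \tfrac{3}{16}C$ throughout, define per-round bad events $B_i=\{M_i>\tfrac34 s_i\}\cap\{|Z_{i-1}|\le\tfrac3{16}C\}$, show failure forces more than $w/4$ bad rounds (this is where you use $\min_i s_i\ge\tfrac12\max_i s_i$; the paper uses that same hypothesis instead to control its MGF truncation error), and finish with iterated conditioning plus a $\binom{w}{\lceil w/4\rceil}$ union bound. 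Your aggregation and counting steps are correct as written, and the large relative deviation (mean $\le\tfrac3{16}s_i$ vs.\ threshold $\tfrac34 s_i$) makes the per-round step cleaner than the paper's $7/5$-factor deviation; the cost is that all the real work is concentrated in that per-round tail bound, which you leave as a target.

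Two caveats on the parts you leave sketched. First, the per-round bound does hold, but not via \Cref{lem:moment-gen-func} (that lemma handles geometric tails and is the wrong tool here); the natural argument is the $k$-th binomial/factorial moment method with $k=\min\bigl(t,\lfloor \tfrac38 s_i\rfloor\bigr)$, giving $\Pr[M_i\ge \tfrac34 s_i\mid\cF_{i-1}]\le \binom{s_i}{k}(\tfrac{3}{16})^k/\binom{3s_i/4}{k}\le 2^{-k}\le 2^{-3t/16+1}$ uniformly over $s_i\ge t/2$, which comfortably beats $\ln(4e)+t/128$ for $t\ge 512$; this single estimate covers both of your ``branches'', so no patching is needed. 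Second, your accounting of the $(\epsilon,t)$-wise error is stated too loosely: an additive $w/C$ in the \emph{final} probability would not in general be absorbable, since the target $\exp(-tw/2^9)$ can be far smaller than $w/C$ (e.g.\ $C=t=512$, small $w$). The error must be kept where you first introduce it, as a perturbation of $\le t$-point statistics: the additive $C^t\epsilon\le 1/C$ in $\EE\bigl[\binom{M_i}{k}\bigr]$ gets divided by $\binom{3s_i/4}{k}\ge 2^k$, so it only inflates each per-round probability by a $(1+1/C)$ factor, which is then genuinely absorbable --- this mirrors how the paper folds its $\epsilon t C\le\tfrac12$ term into the MGF before exponentiating. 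With those two repairs your proof goes through.
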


\begin{proof}[Proof of \Cref{lem:approx-indep-large-anticoncentration}]
  Since the $(\sigma_i)_{i\in [w]}$ are $(\epsilon,t)$-wise independent, in particular we have for any $i\in [w]$,$j \in [C]$, that $\Pr\left[j \in \sigma_i[F_i]\right] \le s_i / C + \epsilon$, and for any $Q \subseteq [C]$ with $|Q| \le t$, that 
  \begin{align}
    \Pr[Q \subseteq \sigma_i[F_i]] = \frac{s_i \cdot (s_i-1) \cdots (s_i - |Q| + 1)}{C \cdot (C-1) \cdots (C - |Q|+1)} + \epsilon \le \left(\frac{s_i}{C}\right)^{|Q|}  \label{eq:apx-comb-bound}
  \end{align}
  Let $U_i$ be a random subset of $[C]$ in which each element in $C$ is included independently with probability $s_i / C$. Eq. \ref{eq:apx-comb-bound} thus implies $\Pr[Q \subseteq \sigma_i[F_i]] \le \Pr[Q \subseteq U_i]$. Now, for any fixed set $H \subseteq [C]$, let $Y_{H,i} := |\sigma_i[F_i] \cap H|$, and $W_{H,i} = |U_i \cap H|$.  Then $\EE[Y_{H,i}] \le s_i |H| / C + \epsilon C = \EE[W_{H,i}] + \epsilon C$, and as a consequence of Eq. \ref{eq:apx-comb-bound}, we have for all $k \le t$, that $\EE[Y_{H,i}^k] \le \EE[W_{H,i}^k] + \epsilon C$. This lets us bound the moment generating function of $Y_{H,i}$, for nonnegative $z$:
  \begin{align}
    \EE e^{z Y_{H,i}} &\le \sum_{k = 0}^{\infty} \frac{1}{k!} (\EE (z Y_{H,i})^k) \nonumber\\
      &\le \sum_{k=0}^{t} \frac{1}{k!} (\EE (z Y_{H,i})^k) + \sum_{k=t+1}^{\infty} \frac{1}{k!} (\EE (z Y_{H,i})^k) \nonumber\\
      &\le \sum_{k = 0}^{t} \frac{1}{k!} (\EE (z W_{H,i})^k) + \sum_{k=t+1}^{\infty} \frac{1}{(k-t-1)!(t+1)!} ((z s_i)^{k}) + \epsilon t C \nonumber\\
      &\le \EE e^{z W_{H,i}} + \frac{(z s_i)^{t+1}}{(t+1)!} e^{z s_i}  + \epsilon t C \nonumber\\
      &\le \exp\left({\frac{s_i |H|}{C} (e^z - 1)} \right) +\frac{(z s_i)^{t+1}}{(t+1)!} e^{z s_i} +  \epsilon t C \label{eq:moment-bound}
  \end{align}
  
  We will use this to upper bound the probability that a supermartingale, which sums
  how many elements in each $\sigma_i[F_i]$ were present in $\bigcup_{j < i} \sigma_j[F_j]$, grows too large. Let $B_1$ be an arbitrary set of size $\frac{1}{8} C$ which contains $X$. (This definition will make the following analysis simpler than if we had set $B_1 = X$). For each $i \in \{2,\ldots, w\}$, define $B_i = B_{i-1} \cup \sigma_{i-1}[F_{i-1}]$. Note that $|B_w| \le \frac{5}{8} C$. We have $\EE[\sum_{i \in [w]} Y_{B_i,i}] \le \sum_{i \in [w]}{\left(\frac{|B_i|}{C} s_i + \epsilon C\right)} \le \frac{|B_w|}{C} \sum_{i \in [w]} s_i \le \frac{5}{8} \sum_{i \in [w]} s_i$.
  
  Note that
  \begin{align}
    \Pr\left[\left| \bigcup_{i \in [w]} \sigma_i[F_i] \setminus X \right|  < \frac{1}{8} \sum_{i \in [w]} s_i \right] & \le \Pr\left[\sum_{i \in [w]} Y_{B_i,i} \ge \frac{7}{8} \sum_{i\in[w]} s_i \right] \label{eq:ybi-excess}
  \end{align}
  Let $\gamma = (\frac{7}{8} \sum_{i\in[w]} s_i) / \EE[\sum_{i \in [w]} Y_{B_i,i}] $; this is $\ge 7/5$. Applying a modified proof of the Chernoff bound/Azuma's inequality to the right hand side of Eq. \ref{eq:ybi-excess} gives:
  \begin{align*}
    &= \inf_{z > 0} \Pr\left[\exp\left(z \sum_{i \in [w]} Y_{B_i,i}\right) \ge \exp\left(\gamma z \EE[\sum_{i \in [w]} Y_{B_i,i}]\right) \right] \\
    &\le  \inf_{z > 0} \frac{\exp\left(z \sum_{i \in [w]} Y_{B_i,i}\right) }{\exp\left(\gamma z \EE[\sum_{i \in [w]} Y_{B_i,i}]\right)} \qquad\qquad \text{by Markov's inequality} \\ 
    &= \inf_{z > 0} \exp\left(z \sum_{i \in [w]} (Y_{B_i,i} - \gamma \EE[Y_{B_i,i} ])\right) \\
    &= \inf_{z > 0} \EE\left[\exp\left(z (Y_{B_1,1} - \gamma \EE[Y_{B_1,1}])\right) \cdots \EE\left[\exp\left(z (Y_{B_w,w} - \gamma \EE[Y_{B_w,w}])\right) \big\vert Y_{B_1,1}, \ldots, Y_{B_{w-1},w-1}\right] \cdots  \right] \\
    &\le \inf_{z > 0} \prod_{i \in [w]} \max_{B: \frac{1}{8} C \le |B| \le \frac{5}{8} C} \EE \exp\left(z (Y_{B,i} - \gamma \EE[Y_{B,i}])\right) \qquad\qquad \text{bounding terms from the inside out} \\
    &= \inf_{z > 0} \prod_{i \in [w]} \max_{B: \frac{1}{8} C \le |B| \le \frac{5}{8} C} \frac{\EE \exp\left(z Y_{B,i}\right)}{\exp\left(\gamma z\EE[Y_{B,i}])\right) } \\
    &\le \inf_{z > 0} \prod_{i \in [w]} \max_{B: \frac{1}{8} C \le |B| \le \frac{5}{8} C} \frac{\exp\left({\frac{s_i |B|}{C} (e^z - 1)} \right) +\frac{(z s_i)^{t+1}}{(t+1)!} e^{z s_i} +  \epsilon t C}{\exp\left(\gamma z\EE[Y_{B,i}])\right) } \qquad\qquad \text{by Eq. \ref{eq:moment-bound} }
  \end{align*}
  Now set $z = \frac{t}{8 \frac{1}{w} \sum_{i\in[w]} s_i}$. Since $t \le 2 \min_{i \in [w]} s_i$, it follows $z \le \frac{2 \min_{i \in [w]} s_i}{8 \min_{i \in [w]} s_i} \le \frac{1}{4}$. Since $\max_{i \in [w]} s_i \le 2 \min_{i \in [w]} s_i$, we have $z \le t / (4 s_i)$ for all $i \in [w]$. This implies $(zs_i)^{t+1}/(t+1)! e^{z s_i} \le {(t/4)}^{t+1} e^{t/4} / (t+1)! \le \frac{1}{2}$.
  Since $\epsilon \le C^{-t-1}$, we also have $\epsilon t C \le \frac{1}{2}$. Continuing the upper bound of Eq. \ref{eq:ybi-excess}:
  \begin{align*}
    &\le \prod_{i \in [w]} \max_{B: \frac{1}{8} C \le |B| \le \frac{5}{8} C} \frac{\exp\left({\frac{s_i |B|}{C} (e^z - 1)} \right) + 1}{\exp\left(\gamma ((z s_i |B| / C) - \epsilon) \right) } \hspace{-2cm} \\
    &\le e^{\gamma w \epsilon} \prod_{i \in [w]} \frac{\exp\left({\frac{s_i}{8} (e^z - 1)} \right) + 1}{\exp\left(\gamma z s_i / 8 \right) }  &\text{maximum occurs at $|B| = \frac{1}{8} C$}\\
    &\le e^{w (\ln(2) + \gamma \epsilon)} \prod_{i \in [w]} \exp( (e^z - 1 - \gamma z) s_i / 8 )  &\text{since $(x+1)/x^\gamma \le 2/x^{\gamma -1}$} \\
    &\le e^{w (\ln(2) + \gamma \epsilon)} \prod_{i \in [w]} \exp\left( -\frac{1}{4} z \frac{s_i}{8} \right)  &\hspace{-4cm}\text{since $z \le \frac{1}{4}$ implies $\frac{1}{4} z \le \frac{7}{5} z - (e^z-1)$} \\
    & = \exp\left( - \frac{1}{4} \frac{t w}{8 \sum_{i \in [w]} s_i} \sum_{i \in [w]} \frac{s_i}{8}  + w (\ln 2 + \gamma \epsilon)\right) \hspace{-2cm} \\
    &\le \exp( - \frac{t w}{2^8}  + w (\ln 2 + \gamma \epsilon)) \\
    &\le \exp( - \frac{t w}{2^9})  &\text{since $\frac{t}{2^9} \ge 1 \ge \ln 2 + \gamma \epsilon$}
  \end{align*}
  This completes the proof of the lemma.
\end{proof}

\begin{lemma}\label{lem:approx-indep-small-anticoncentration}
  Let $C,s,w$ be integers, with $s \ge 4$. Let $\epsilon \le C^{-s - 1}$. Say that $F_1,\ldots,F_w$ are subsets of $[C]$, with each $|F_i| \le s$, $|F_i| \ge 2$ and $\sum_{i \in [w]} |F_i| \ge \frac{1}{2} C$. Furthermore, let $X \subseteq [C]$ satisfy $|X| \le \frac{1}{8} C$. Then if $\sigma_1,\ldots,\sigma_w$ are $(\epsilon,s)$-wise independent random permutations over $[C]$,
  \begin{align*}
    \Pr\left[\left| \bigcup_{i \in [w]} \sigma_i[F_i] \setminus X \right| < \frac{1}{16} C \right] \le \exp(- \frac{1}{2} \sum_{i \in[w]} |F_i|) \,.
  \end{align*}
  
  This lemma differs from \Cref{lem:approx-indep-large-anticoncentration} in that the sets $(\sigma_i[F_i])_{i \in [w]}$ are now smaller, and $\sum_{i\in[w]} |F_i|$ is $\Omega(C)$.
\end{lemma}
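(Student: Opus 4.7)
The plan is to adapt the framework of \Cref{lem:approx-indep-large-anticoncentration}---bound a conditional MGF and apply a Chernoff-type inequality---with two crucial modifications tailored to the small-$|F_i|$ regime. First, since we are now \emph{lower}-bounding $|\bigcup_i \sigma_i[F_i] \setminus X|$ rather than upper-bounding it, I will pair the MGF argument with a stopping time that keeps the absorbing set $X \cup B_i$ small throughout the process. Second, because each $|F_i| \le s$ is at most the independence parameter, the Taylor expansion of $e^{zY_i}$ is a \emph{finite} sum, yielding a much cleaner MGF bound than in the previous lemma.

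Set up the sequential process: let $B_i := \bigcup_{j < i} \sigma_j[F_j]$, $Y_i := |\sigma_i[F_i] \cap (X \cup B_i)|$ (collisions), $Z_i := |F_i| - Y_i$ (newly covered elements), and $m := \sum_i |F_i|$. Then $|\bigcup_i \sigma_i[F_i] \setminus X| = \sum_i Z_i$, and since $m \ge C/2$ gives $C/16 \le m/8$, the bad event $\cE := \{\sum_i Z_i < C/16\}$ entails $\sum_i Y_i > \tfrac{7}{8}m$. Introduce the stopping time $\tau := \min\{i : \sum_{j \le i} Z_j \ge C/16\} \wedge w$; on $\cE$ we have $\tau = w$ and $|X \cup B_i| \le |X| + C/16 \le \tfrac{3}{16}C$ for every $i \le w$. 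Setting $\hat Y_i := Y_i \, \mathbbm{1}[i \le \tau]$ gives $\cE \subseteq \{\sum_i \hat Y_i > \tfrac{7}{8}m\}$.

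The key MGF bound uses the \emph{finite} identity $e^{zY_i} = \sum_{k=0}^{|F_i|} \binom{Y_i}{k}(e^z - 1)^k$, valid because $Y_i \le |F_i|$. By $(\epsilon, k)$-wise independence of $\sigma_i$ for each $k \le |F_i| \le s$, the factorial moment is bounded by $\EE[\binom{Y_i}{k} \mid \cH_{i-1}] \le \binom{|F_i|}{k}[(|X \cup B_i|/C)^k + \epsilon]$, which after collapsing the binomial series yields
\begin{align*}
  \EE\!\left[e^{zY_i} \mid \cH_{i-1}\right] \le \exp\!\left(\tfrac{|F_i|\,|X\cup B_i|}{C}(e^z - 1)\right) + \epsilon\, e^{z|F_i|}.
\end{align*}
This is much cleaner than the analogous bound in \Cref{lem:approx-indep-large-anticoncentration}, where applying independence only up to degree $t < |F_i|$ forced a looser $\tfrac{(z|F_i|)^{t+1}}{(t+1)!}e^{z|F_i|}$ tail. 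On $\{i \le \tau\}$ the first term is $\le \exp(\tfrac{3|F_i|}{16}(e^z - 1))$; on $\{i > \tau\}$, $\hat Y_i = 0$ so the MGF is $1$, still dominated by the same expression.

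Finally, by Markov applied to $\exp(z \sum_i \hat Y_i)$ together with the tower property,
\begin{align*}
  \Pr[\cE] \le e^{-\frac{7}{8}zm} \prod_i \left(\exp\!\left(\tfrac{3|F_i|}{16}(e^z - 1)\right) + \epsilon\, e^{z|F_i|}\right).
\end{align*}
Choosing $z = \ln(14/3)$---the minimizer of $-\tfrac{7z}{8} + \tfrac{3}{16}(e^z - 1)$, with value $\approx -0.66$---the dominant term contributes $\exp(-0.66\,m)$. Factoring out $\exp(\tfrac{3|F_i|}{16}(e^z - 1))$ produces a multiplicative correction $\prod_i \bigl(1 + \epsilon\, e^{|F_i|(z - \tfrac{3}{16}(e^z - 1))}\bigr)$; at this $z$ the inner exponent is $\approx 0.85\,|F_i|$, so using $\epsilon \le C^{-s-1}$, $|F_i| \le s$, and $w \le m/2$ (from $|F_i| \ge 2$), the accumulated error adds at most $O(m/C)$ to the exponent, which is absorbed by the gap between $-0.66\,m$ and $-m/2$ once $C$ exceeds a small absolute constant. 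The main obstacle is precisely this error-balancing: the approximate independence of the $\sigma_i$ forces a tradeoff between a small Taylor-tail error and a large Chernoff exponent, and the finite-Taylor identity unique to the $|F_i| \le s$ regime is what makes both conditions simultaneously satisfiable.
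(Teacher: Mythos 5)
Your proposal is correct in outline, but it proves the lemma by a genuinely different route than the paper. The paper's proof is a short direct argument: if the bad event occurs there is a witness set $T \in \binom{[C]}{\lfloor C/16\rfloor}$ with $\bigcup_i \sigma_i[F_i] \subseteq T \cup X$; for each fixed $T$, mutual independence of the $\sigma_i$ plus $(\epsilon,s)$-wise approximate uniformity gives $\Pr[\sigma_i[F_i]\subseteq T\cup X] \le (|T\cup X|/C)^{|F_i|} \le (3/16)^{|F_i|}$ (here $2 \le |F_i| \le s$ and $\epsilon \le C^{-s-1}$ absorb the additive error), and a union bound over $T$ costs only $\binom{C}{\lfloor C/16\rfloor} \le 2^{0.338C}$, which is paid for by $\sum_i|F_i| \ge C/2$. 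You instead port the supermartingale/MGF machinery of \Cref{lem:approx-indep-large-anticoncentration}, with two sound new ingredients: the stopping time $\tau$ keeping $|X\cup B_i| \le \tfrac{3}{16}C$ on the relevant event (and the indicator $\mathbbm{1}[i\le\tau]$ is indeed $\cH_{i-1}$-measurable, so the peeling argument goes through), and the finite identity $e^{zY_i}=\sum_{k\le |F_i|}\binom{Y_i}{k}(e^z-1)^k$, which together with the conditional factorial-moment bound for $k \le |F_i| \le s$ yields the clean MGF estimate and avoids the Taylor-tail term of the large-set lemma. Your constants check out: with $z=\ln(14/3)$ the main exponent is about $-0.66\sum_i|F_i|$, and the $\epsilon$-correction contributes at most roughly $w/C \le \sum_i|F_i|/(2C)$ to the exponent, which is below the $0.16\sum_i|F_i|$ slack once $C \ge 4$ (for smaller $C$ the bad event forces $\bigcup_i\sigma_i[F_i]\subseteq X$ with $|X|<|F_1|$, so its probability is $0$ and the bound is vacuous — worth stating explicitly rather than waving at "a small absolute constant"). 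The trade-off: the paper's witness-set union bound is a few lines and exploits the regime $\sum_i|F_i|=\Omega(C)$ head-on, while your argument is longer but unifies the two anticoncentration lemmas under one Chernoff-type template and handles the approximate-independence error more transparently via the finite binomial expansion.
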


\begin{proof}[Proof of \Cref{lem:approx-indep-small-anticoncentration}]
  Because the $(\sigma_i)_{i \in [w]}$ are $(\epsilon,s)$-wise independent, for each $i \in [w]$, the random variable $\sigma_i[F_i]$ has total variation distance $\epsilon$ from being a uniform random subset of $[C]$ of size $|F_i|$. Let $\tau = \floor{\frac{1}{16} C}$. We will bound the probability that there exists any set $T \subseteq [C]$ of size $\tau$ for which $\bigcup_{i \in [w]} \sigma_i[F_i] \subseteq T \cup X$. Observe:
  \begin{align*}
    \Pr&\left[\exists T \in \binom{[C]}{\tau} : \bigcup_{i \in [w]} \sigma_i[F_i] \subseteq T \cup X \right] \\
      &\le \sum_{T \in \binom{[C]}{\tau}} \Pr\left[\bigcup_{i \in [w]} \sigma_i[F_i] \subseteq T \cup X \right] \\
      &= \sum_{T \in \binom{[C]}{\tau}} \prod_{i \in [w]} \Pr\left[\sigma_i[F_i] \subseteq T \cup X \right] & \text{since $\sigma_i$ are independent} \\
      &\le \sum_{T \in \binom{[C]}{\tau}} \prod_{i \in [w]} \left(\frac{\binom{|T \cup X|}{|F_i|}}{\binom{C}{|F_i|}} + \epsilon\right) & \hspace{-2cm}\text{since $F_i$ approximately uniform} \\
      &= \sum_{T \in \binom{[C]}{\tau}} \prod_{i \in [w]} \left(\frac{|T\cup X| \cdots (|T \cup X| - |F_i| + 1)}{C \cdots (C - |F_i| + 1)} + \epsilon\right)\hspace{-4cm} \\
      &\le \sum_{T \in \binom{[C]}{\tau}} \prod_{i \in [w]} \left(\frac{|T\cup X|}{C}\right)^{|F_i|} & \hspace{-2cm}\text{since $2 \le |F_i| \le s$ and $\epsilon \le C^{-s - 1}$} \\
      &\le \binom{C}{\tau} \left(\frac{|T\cup X|}{C}\right)^{\sum_{i \in [w]} |F_i|} \\
      &\le 2^{0.338 C} \left(\frac{3}{16}\right)^{\sum_{i \in [w]} |F_i|} & \text{using $|T|+|X| \le \left(\frac{1}{16} + \frac{1}{8}\right) C$ and $\tau \le \frac{C}{16}$}\\
      &\le \exp( 0.235 C - \ln(16/3) \sum_{i \in [w]} |F_i|) \\
      &\le \exp( - \frac{1}{2} \sum_{i \in [w]} |F_i| ) & \text{since $\sum_{i \in [w]} |F_i| \ge C / 2$ }
  \end{align*}
  This completes the proof.
\end{proof}

\begin{lemma}\label{lem:online-bpt-det-va-edgecol}
  \Cref{thm:online-det-va-edgecol} holds for one-sided vertex arrival streams on bipartite graphs.
\end{lemma}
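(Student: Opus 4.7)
The plan is to run the same algorithm as \Cref{alg:bpt-rand-va-edgecol}, but with the uniformly random permutations $\sigma_v$ for $v \in B$ replaced by $(\epsilon, k)$-wise independent permutations over $[C] = [O(\Delta)]$, where $k = \Theta(\log n)$ and $\epsilon = C^{-k-1}$. By the almost $k$-wise independent permutation constructions noted in the preliminaries (following \cite{AlonL12}), each such permutation is specified by $\tO(1)$ bits of seed, so storing an independent seed per $v \in B$ uses $\tO(n)$ bits of advice in total. Choosing this advice uniformly at random also yields a randomized algorithm using $\tO(n)$ random bits; extracting a single good advice string by the probabilistic method and exhaustive search gives the claimed deterministic algorithm computable in exponential time.

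The harder part is the analysis. The moment generating function argument in \Cref{lem:online-bpt-rand-va-edgecol} crucially uses full independence of $\sigma_y$ --- the inductive bound $\EE[e^{t V_{y,i}}\mid\text{history}] \le 1/(2 - e^t)$ appeals to \emph{all} positions of $\sigma_y$ simultaneously --- which almost $k$-wise independence does not directly support. I would replace this with a window-based analysis that invokes \Cref{lem:approx-indep-large-anticoncentration} and \Cref{lem:approx-indep-small-anticoncentration}. Modify the algorithm so that when processing a vertex $x \in A$ with neighbors $y_1,\ldots,y_{d_x}$, it exposes a window $F_i := \{h_{y_i},\ldots,h_{y_i}+L-1\}$ of each $\sigma_{y_i}$ of length $L = \tO(1)$, and advances each $h_{y_i}$ by $L$ at the end of the step. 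Taking $X$ to be the set of colors already assigned to $x$'s edges at this step (so $|X| \le \Delta - 1 \le C/8$ once $C \ge 8\Delta$), the two anti-concentration lemmas give that $\bigl| \bigcup_i \sigma_{y_i}[F_i] \setminus X \bigr|$ is $\Omega(\sum_i |F_i|)$ or $\Omega(C)$ with probability $1 - 1/\poly(n)$: use \Cref{lem:approx-indep-large-anticoncentration} when the windows are individually $\ge k/2$ in size, and \Cref{lem:approx-indep-small-anticoncentration} when they are short but their total is $\Omega(C)$.

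From this anti-concentration guarantee, a Hall-type matching (or a simple greedy choice in order) assigns each edge $\{x,y_i\}$ a color inside $\sigma_{y_i}[F_i]\setminus X$, after which each $h_{y_i}$ advances by $L$, so it never exceeds $C$ throughout the stream. A union bound over the $\le n$ vertices $x \in A$ yields overall failure probability $1/\poly(n)$. The main obstacle will be reconciling the sequential greedy structure of \Cref{alg:bpt-rand-va-edgecol} with this window-based viewpoint: one must decide exactly how to advance the $h_{y_i}$ pointers when only a few positions of a window are actually needed, must carefully split into the large-window and small-window regimes of the two anti-concentration lemmas (so that their hypotheses on $|F_i|$, $\sum|F_i|$, and $|X|$ are simultaneously satisfied), and must verify that the resulting per-vertex matching step can still be implemented online in $\tO(n)$ working space given the permutation seeds as advice.
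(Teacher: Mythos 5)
Your high-level plan is the same as the paper's (replace the uniformly random permutations by $(\epsilon,s)$-wise independent ones given as advice, and use \Cref{lem:approx-indep-large-anticoncentration} and \Cref{lem:approx-indep-small-anticoncentration} together with a Hall-type matching per arriving vertex), but two of the issues you defer are exactly where the real work lies, and as sketched your argument does not go through. First, your final union bound is over the $\le n$ vertices of a \emph{fixed} input stream. That only yields a randomized guarantee against an oblivious input; the lemma claims a deterministic algorithm with advice, so the probabilistic method requires a single advice string that works for \emph{all} inputs simultaneously. The paper therefore union bounds over all possible \emph{configurations} $(S,(d_y),(b_y),(Q_y))$ — adversarial neighborhoods, multiplicities, and internal states of the color trackers — and making that sum converge is what forces both the level decomposition of $S$ (levels $L_\ell$ weighted by $(3/2)^{i-\ell}$, so that each level's $\exp(-\Omega(s|L_\ell|))$ failure probability beats the $(nC\cdot\#\{Q_y\})^{|L_\ell|}$ configuration count) and an explicit bound on the number of reachable states $Q_y$ (possible only because $Q_y$ is refreshed after a $2^{-17}$ fraction of the block is consumed, giving $\le \exp(s/2^{11})$ values). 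None of this is present, or replaceable by a per-input union bound, in your sketch.

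Second, your pointer/window rule is quantitatively inconsistent with the $O(\Delta)$ color budget. For the anti-concentration lemmas to apply per configuration with failure probability $1/\poly(n)$ (and to survive $|X|$ as large as $\Delta-1$), each window must have size $\Omega(\log n)$; but if you advance $h_y$ by the full window length $L$ every time a neighbor of $y$ arrives, then over up to $\Delta$ arrivals you consume $\Delta L = \Omega(\Delta\log n)$ colors, contradicting $C=O(\Delta)$, while $L=O(1)$ windows give only constant per-configuration success probability and can even be entirely swallowed by $X$. The paper resolves this tension precisely with the partially-consumed-block bookkeeping ($b_v$ and $Q_v$, block advanced only after $s/2^{17}$ positions are used), plus a separate enlarged candidate set $F_y$ when the edge multiplicity $d_{x,y}$ is large — the multigraph case your proposal does not address at all. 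So the approach is the right one, but the gaps you flagged as "obstacles" (pointer advancement, regime splitting, and the scope of the union bound) are the crux of the proof, not routine details.
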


% Compare with URN4-48
\begin{proof}[Proof of \Cref{lem:online-bpt-det-va-edgecol}]
  $\delta \in (0,1)$ is a parameter governing the probability that the  \Cref{alg:bpt-det-va-edgecol} will fail to be a correct deterministic algorithm, if its advice is chosen randomly; if one just wants a deterministic algorithm, setting $\delta = 1/2$ suffices. If $\Delta \le \log\frac{n \Delta}{\delta}$, use the simple greedy algorithm (\Cref{alg:simple-greedy}). Otherwise, use \Cref{alg:bpt-det-va-edgecol}.
  
  \begin{algorithm}[htb!]
    \caption{Deterministic algorithm for $O(\Delta)$ edge coloring for (adversarial) one sided vertex arrival bipartite streams, using $\tO(n)$ bits of advice\label{alg:bpt-det-va-edgecol}}
    \begin{algorithmic}[1]
    \Statex \textbf{Input}: Stream of vertex arrivals for $n$-vertex graph $G=(A \sqcup B,E)$ of max degree $\Delta$
      \Statex
      \Statex $\delta \in (0,1)$ is a parameter so that, if the advice is chosen randomly, it will work for all inputs with probability $\ge 1 - \delta$
      \Statex
      \Statex \underline{\textbf{Initialize}:}
      \Statex Let $C = 2^{18} \Delta$. % also force power of two, for bounded indep case?
      \Statex Let $s = \ceil{2^{18} \log\frac{n \Delta}{\delta}} $.
      \Statex Advice: $(\sigma_v)_{v \in B}$, where each $\sigma_v$ is a permutation over $[C]$. If chosen randomly, each is $(\epsilon,s)$-wise independent for $\epsilon \le C^{-s - 1}$
      
      \For{$v \in B$}
        \State $b_v \gets 1$.
        \State $Q_v \gets [s]$.
      \EndFor
      
    \Statex \underline{\textbf{Process}(vertex $x$ with multiset $M_x$ of edges to $B$)}
      \Statex Let $d_{x,y}$ be the number of times edge $\{x,y\}$ is in $M_x$
      \For{each $y \in B$ with $d_{x,y} > 0$}
        \If{$d_{x,y} < \frac{1}{16} s$}
          \State Let $F_y = (b_y -1) s + Q_y$\label{step:bpt-det-va-edgecol-low-deg-fy}
        \Else
          \State Let $F_y = ((b_y -1) s + Q_y) \sqcup [b_y s, b_y + \ceil{\frac{64 d_{a,b}}{s}} s]$\label{step:bpt-det-va-edgecol-high-deg-fy}
        \EndIf
      \EndFor
      \State Construct bipartite graph $H$ from $M_x$ to $[C]$, edge $e \in M_x$
      is linked to all $c \in \sigma_y[F_y]$.\label{step:bpt-det-va-edgecol-match-graph}
      \State Compute an $M_x$-saturating matching $P$ of $H$. \label{step:bpt-det-va-edgecol-compute-matching}
      \For{each $e \in M_x$}
        \State Assign color $P(e)$ to $e$
        \If{$d_{x,y} < \frac{1}{16} s$}
          \State Remove $\sigma_y^{-1} - (b_y - 1)s$ from $Q_y$
        \EndIf
      \EndFor
      \For{each $y \in B$ with $d_{x,y} > 0$}
        \If{$d_{x,y} < \frac{1}{16} s$}
          \If{$|Q_y| \le s - \frac{1}{2^{17}} s$}
            \State $b_y \gets b_y + 1$ \label{step:bpt-det-va-edgecol-small-block}
            \State $Q_y \gets [s]$
          \EndIf
        \Else
          \State $Q_y \gets [s]$
          \State $b_y \gets b_y + \ceil{\frac{2 d_{x,y}}{s}} + 1$\label{step:bpt-det-va-edgecol-large-block}
        \EndIf
      \EndFor

    \end{algorithmic}
  \end{algorithm}
  
  This algorithm maintains, for each vertex $v\in B$, two variables $b_v$ and $Q_v$ that
  indicate which colors in $[C]$ are certainly available for that vertex. It ensures that
  that none of the colors in the set $\Xi_v = \{\sigma_v[(b_v - 1) s)] : i \in Q_v\} \sqcup \{\sigma_v[j] : j > b_v s\}$ have been used. When a new vertex $x$ in $A$ arrives, along with the multiset $M_x$
  of edges adjacent to it, the algorithm selects a set $F_y$ indicating candidate colors $\sigma_y[F_y]$ for each $y$ adjacent to $x$, and computes a matching between the edges in $M_x$ and the
  set of all colors, allowing each edge in $M_x$ only the colors corresponding to the edges endpoint in $B$. This matching ensures that all edges incident to $x$ receive different colors;
  and for all $y \in B$, the use of the set $F_y$ to constrain the set of candidate colors
  to a subset of $\Xi_v$ ensures that all incident to $y$ receive different colors.
  
  For a given vertex $y$, as the algorithm runs, $b_y$ will be increased, by either Line \ref{step:bpt-det-va-edgecol-small-block} or \ref{step:bpt-det-va-edgecol-large-block}. Line \ref{step:bpt-det-va-edgecol-small-block} only triggers when $|Q_y| \le s - \frac{1}{2^{17}} s$, which requires that vertex $y$ has received $\ge \frac{1}{2^{17}} s$ incident edges since the last time $b_y$ was increased. Since there will be at most $\Delta$ edges incident to $y$, the total increase to $b_y$ from this line over the course of the algorithm will be $\le \Delta / (\frac{1}{2^{17}} s) = 2^{17} \Delta / s$. On the other hand, Line \ref{step:bpt-det-va-edgecol-small-block} only triggers when $d_{x,y}\ge \frac{1}{16} s$, and then increases $b_y$ by $\ceil{64 d_{x,y} / s} + 1$. Since $\sum_{x \in A} d_{x,y} \le \Delta$,
  \begin{align*}
    \sum_{x \in A : d_{x,y} > \frac{1}{16} s} \left(\ceil{\frac{64 d_{x,y}}{s}} + 1\right)
    \le \sum_{x \in A : d_{x,y} > \frac{1}{16} s} \left(\frac{64 d_{x,y}}{s} + 2\right) \le \sum_{x \in A : d_{x,y} > \frac{1}{16} s} \frac{96 d_{x,y}}{s}  \le \frac{96 \Delta}{s}
  \end{align*}
  Thus, the total increase in $b_y$ will be $\le (2^{17}+96) \Delta / s$, and $b_y$ will always be $\le (2^{17}+97) \Delta / s$. Looking at the construction
  of the set $F_y$ on Lines \ref{step:bpt-det-va-edgecol-low-deg-fy} and \ref{step:bpt-det-va-edgecol-high-deg-fy}, we see that it will only ever contain elements which are $\le (2^{17}+98) \Delta$. Since $C = 2^{18} \Delta \ge (2^{17}+98) \Delta$, it follows that computing $\sigma_y[F_y]$ will never index out of range.
  
  The only remaining way this algorithm could fail is if Line \ref{step:bpt-det-va-edgecol-compute-matching}
  were to report that no $M_x$-saturating matching exists. We will show that, if the 
  $(\sigma_v)_{v \in B}$ are drawn from $(\epsilon,s)$-wise independent distributions,
  then probability $\ge 1 - \delta$, for all possible
  sets $M_x$ and combinations of "free slots", $(F_y)_{y : d_{x,y} >0}$, Hall's condition will
  hold on the graph $H$ constructed on Line \ref{step:bpt-det-va-edgecol-match-graph}.
  
  Since whether the constructed graph $H$ has a matching does not depend on the value of $x$, only on the number of edges arriving at a given $y \in B$, we do not need to take a union bound over all possible sets $M_x$. Instead, define a configuration by a tuple $(S,(d_y)_{y \in S},(b_y)_{y \in S},(Q_y)_{y \in S})$. The set $S$ gives the neighborhood of $x$, and for each $y \in S$ we set $d_y = d_{x,y}$. The values $b_y,Q_y$ match the values from the algorithm at the time $M_x$ arrives. Note that the set $F_y$ is a function of $b_y$, $Q_y$, and $d_y$, and for fixed $b_y,Q_y$ is monotone increasing as a function of $d_y$. We do \emph{not} need any extra cases to handle Hall's condition for subsets of $M_x$; consider any subset $M_x' \subseteq M_x$, and let $S',d_{y}',F_y'$ correspond to $M_x'$. Then because $F_y' \subseteq F_y$ for each $y \in S'$, if Hall's condition holds for the configuration $(S',(d_y')_{y \in S'},(b_y)_{y \in S'},(Q_y)_{y \in S'})$, then
  \begin{align*}
    \left|\bigcup_{y \in S'} F_{y}\right| \ge \left|\bigcup_{y \in S'} F_{y}'\right| \ge \sum_{y \in S'} d_y'
  \end{align*}
  which implies that Hall's condition also holds for the subset $M_x'$ within the bipartite graph $H$ constructed for $M_x$.

%   While a subset $T$ of vertices on the left ($M_x$) side of $H$ does not exactly correspond to a configuration whose vector $(d_{x,y})_{y \in B}$ is smaller, if we let $R = \{y \in B : \{x,y\} \in T\}$, then Hall's condition will hold for $T$ because it holds for the original configuration restricted to vertices in $R$ (i.e, $(R,(d_y)_{y \in R},(b_y)_{y \in R},(Q_y)_{y \in R}$), and $T$ does not induce greater demand on any vertex in $R$ than did $M_x$.
  
  Let $d = \sum_{y \in S} d_y$. If the permutations $(\sigma_y)_{y\in B}$ were each chosen uniformly at random, it would be straightforward to prove that Hall's condition fails for each configuration $(S,(d_y)_{y \in S},(b_y)_{y \in S},(Q_y)_{y \in S}$ with probability $\le e^d (d/C)^{\Theta(|S| s + d)}$, after which a union bound over configurations gives a $\le \delta$ total failure probability. However, because we assume the $(\sigma_y)_{y\in B}$ are only $(\epsilon,s)$-wise independent, we will need a more precise argument.
  
  Each configuration $(S,(d_y)_{y \in S},(b_y)_{y \in S},(Q_y)_{y \in S}$ can be split into $O(\log \Delta)$ different "level configurations". Let $\tau = s / 16$; then for each vertex $y$, if $d_y < \tau$, the algorithm will choose $F_y$ using Line \ref{step:bpt-det-va-edgecol-low-deg-fy} to be a subset of size $\le s$ and $\ge s (1 - 2^{-17}) \ge s / 2$; and if $d_y \ge \tau$, the algorithm will choose $F_y$ using Line \ref{step:bpt-det-va-edgecol-high-deg-fy}, to be a subset of size $\ge 64 d_y$. Define $L_0 = \{y \in S : d_y < \tau\}$. For each $\ell \in \{1,\ldots,\lambda\}$, for $\lambda = \ceil{\log{\Delta}}$ let $L_\ell = \{y \in S : 2^{\ell - 1} \tau \le d_y < 2^{\ell} \tau\}$. Also write $L_{> \ell} = \bigcup_{j > \ell} L_j$. Within each "level", the values of $d_y$ are either all small ($< \tau$), or all within a factor $2$ of each other. We will show that with high probability, the following two conditions hold:
  \begin{align}
    \forall \ell \ge 1, \forall (L_\lambda,\ldots,L_{\ell+1}) \text{ where } \max_{i > \ell} {(3/2)^{i - \ell} |L_i|} \le |L_\ell|,& \forall {d_y,b_y,Q_y \text{ for } y \in \bigsqcup_{j \ge \ell} L_\ell} : \nonumber\\
    & \left| \left(\bigcup_{y \in L_\ell} \sigma_y[F_y]\right) \setminus \left(\bigcup_{y \in L_{> \ell}} \sigma_y[F_y]\right) \right| \ge 8 \sum_{y \in L_0} d_y \label{eq:hall-abundant} \\
    \forall (L_\lambda,\ldots,L_1) \text{ where } \max_{i > 0} {(3/2)^{i} |L_i|} \le |L_0|, & \forall {d_y,b_y,Q_y \text{ for } y \in \bigsqcup_{j \ge 0} L_\ell} : \nonumber \\
    & \left| \left(\bigcup_{y \in L_0} \sigma_y[F_y]\right) \setminus \left(\bigcup_{y \in L_{> 0}} \sigma_y[F_y]\right) \right| \ge \sum_{y \in L_0} d_y \label{eq:hall-end-ok}
  \end{align}
  Then for the specific configuration $(S,(d_y)_{y \in S},(b_y)_{y \in S},(Q_y)_{y \in S})$, let $A \subseteq \{0,\ldots,\lambda\}$ contain all $\ell$ for which $|L_\ell| \ge \max_{i > \ell} (3/2)^{i-\ell}|L_i|$. Because the condition of Eq. \ref{eq:hall-abundant} holds for all $i \in A$ with $i > 0$, these "levels" of the configuration are associated with enough entries of $C$ that they "pay for" all levels with smaller degrees that also do not have many more vertices. Level $L_0$ pays for itself if $0 \in A$, by Eq. \ref{eq:hall-end-ok}. Formally, we have:
  \begin{align*}
    |\bigcup_{y \in S} \sigma_y[F_y]| &= \sum_{i \in \{0,\ldots,\lambda \}} | (\bigcup_{y \in L_i} \sigma_y[F_y]) \setminus (\bigcup_{y \in L_{> i}} \sigma_y[F_y]) |  \\
      &\ge \sum_{i \in A} | (\bigcup_{y \in L_i} \sigma_y[F_y]) \setminus (\bigcup_{y \in L_{> i}} \sigma_y[F_y]) | \\
      &\ge 1_{0 \in A} \left(\sum_{y \in L_0} d_y\right) + \sum_{i \in A \setminus \{0\}} 8 \sum_{y \in L_i} d_y \\
      &\ge 1_{0 \in A} \left(\sum_{y \in L_0} d_y\right) + \sum_{i \in A \setminus \{0\}} 8 \cdot 2^{i - 1} \tau |L_i| \\
      &= 1_{0 \in A} \left(\sum_{y \in L_0} d_y\right) + \sum_{i \in A \setminus \{0\}} 4 \cdot 2^{i} \tau |L_i| \\
      &\ge 1_{0 \in A} \left(\sum_{y \in L_0} d_y\right) + \sum_{i \in A \setminus \{0\}} \sum_{j \le i} \left(\frac{3}{4}\right)^{i - j} \left(2^{i} \tau |L_i|\right)  \\
      &\ge 1_{0 \in A} \left(\sum_{y \in L_0} d_y\right) + \sum_{i \in A \setminus \{0\}} \sum_{j \le i : |L_j| \le (3/2)^{i - j}|L_i|}  2^{j} \tau \left(\frac{3}{2}\right)^{i - j} |L_i|  \\
      &\ge 1_{0 \in A} \left(\sum_{y \in L_0} d_y\right) + \sum_{i \in A \setminus \{0\}} \sum_{j \le i : |L_j| \le (3/2)^{i - j}|L_i|}  2^{j} \tau |L_j| \\
      &\ge 1_{0 \in A} \left(\sum_{y \in L_0} d_y\right) + \sum_{i \in A \setminus \{0\}} \sum_{j \le i : |L_j| \le (3/2)^{i - j}|L_i|}  \sum_{y \in L_j} d_y  \\
      &=  \sum_{i \in \{0,\ldots,\lambda\}} \sum_{y \in L_j} d_y = \sum_{y \in S} d_y
  \end{align*}
  
  We first observe that Eq. \ref{eq:hall-abundant} matches the conditions for \Cref{lem:approx-indep-large-anticoncentration}. Specifically, if $y \in L_\ell$ for $\ell > 1$, then $d_y \ge s / 16$, and we have both $|F_y| \ge s$ and $|F_y| \ge 64 d_y$. Also, since the $d_y \in [2^{\ell-1} \tau, 2^\ell \tau)$, we will have $\max_{y \in L_\ell} |F_y| \le 2 \min_{y \in L_\ell} F_y$. Letting $X = \bigcup_{y \in L_{> i}} \sigma_y[F_y]$, we have $|X| \le \sum_{y \in L_{> i}} |F_y| \le \sum_{y \in L_{> \ell}} (2s + 64 d_y) \le 96 \Delta \le \frac{1}{8} C$ . Similarly, $\sum_{y \in L_{\ell}} \sigma_y[F_y] \le 96 \Delta \le \frac{1}{2} C$. Thus, \Cref{lem:approx-indep-large-anticoncentration} applies, and gives an $\exp(- O(s |L_\ell|))$ upper bound on the probability that $|\bigcup_{y \in L_\ell} \sigma_y[F_y] \setminus X| \ge 4 \sum_{y \in L_0} d_y$.
  
  For Eq. \ref{eq:hall-end-ok}, if $\sum_{y \in L_0}{|F_y|} \ge \frac{1}{2} C$, we apply \Cref{lem:approx-indep-small-anticoncentration}. As argued above, the set $X = \bigcup_{y \in L_{> 0}} \sigma_y[F_y]$ will have size $\le \frac{1}{8} C$. If the bad event in \Cref{lem:approx-indep-small-anticoncentration} does not hold, then the condition in Eq. \ref{eq:hall-end-ok} will, since $\sum_{y \in L_0} d_y \le \Delta \le \frac{1}{16} C$. On the other hand, if $\sum_{y \in L_0}{|F_y|} < \frac{1}{2} C$, we apply \Cref{lem:approx-indep-large-anticoncentration}; this works because we have $|F_y| \ge (1 - 2^{-17}) s \ge \frac{1}{2} s$.

  Since \Cref{lem:approx-indep-large-anticoncentration}  and \Cref{lem:approx-indep-small-anticoncentration} both ensure a $\exp( - \Omega(s |L_i|))$ type upper bound for the probability of conditions from Eqs. \ref{eq:hall-abundant} and \ref{eq:hall-end-ok}, we can bound the probability that none of the individual event fails in a single sum. Since for each $y$, $Q_y$ is refreshed after at least $s/2^{17}$ elements are removed from it, there are only $\sum_{i=0}^{\ceil{s/2^{17}}} \binom{s}{i} \le 2^{s H(1/2^{16})} \le \exp(s / 2^{11})$ possible values for $Q_y$; here $H$ is the binary entropy function.
  \begin{align*}
    \Pr &[\text{Eqs. \ref{eq:hall-abundant} and \ref{eq:hall-end-ok} hold}]  \\
      &\le \sum_{\ell \in \{0,\ldots,\lambda\}} \sum_{\underset{\text{where $|L_\ell| = w$}}{w \in \{1,\ldots,\Delta\}}} \sum_{\underset{\text{all $|L_j| \le (2/3)^{j - \ell} w$}}{L_\lambda,\ldots,L_\ell}} \sum_{(d_y,b_y,Q_y)_{y \in \bigsqcup_{j \ge \ell} L_j}} \exp(-  s w / 2^9) \\
      &\le \sum_{\ell \in \{0,\ldots,\lambda\}} \sum_{\underset{\text{where $|L_\ell| = w$}}{w \in \{1,\ldots,\Delta\}}} \prod_{j \ge \ell} \left((n+1) C^2 \exp(s/2^{11})\right)^{(2/3)^{j - \ell} w} \exp(-  s w / 2^9) \\
      &\le \sum_{\ell \in \{0,\ldots,\lambda\}} \sum_{\underset{\text{where $|L_\ell| = w$}}{w \in \{1,\ldots,\Delta\}}} \left((n+1) C^2 \exp(s / 2^{11}) \right)^{3 w} \exp(- s w / 2^9)  \\ 
      &\le \sum_{\ell \in \{0,\ldots,\lambda\}} \sum_{\underset{\text{where $|L_\ell| = w$}}{w \in \{1,\ldots,\Delta\}}} \left(n C\right)^{6 w} \exp(- s w / 2^{11}) \qquad  \text{since $(n+1) \le n^2$ and $2^{-9} - 3 \cdot 2^{-11} = 2^{-11}$} \\ 
      &\le \sum_{\ell \in \{0,\ldots,\lambda\}} \sum_{\underset{\text{where $|L_\ell| = w$}}{w \in \{1,\ldots,\Delta\}}} \exp(- s w / 2^{12}) \qquad \text{since $s \ge 6 \cdot 2^{12} (18 + \log(n \Delta)) \ge 6 \cdot 2^{12} \ln(n C)$} \\
      &\le (\lambda + 1) \Delta \exp(-s / 2^{12}) \le \delta \qquad \text{since $s \ge 2^{13} \log(\Delta/\delta) \ge 2^{13} \ln(\Delta/\delta)$}
  \end{align*}
  Thus,
  \begin{align*}
    \Pr[\text{any configuration fails Hall's condition}]  \le \delta \,.
  \end{align*}

\end{proof}

If the $(\epsilon,s)$-wise random permutations over $[C]$ are constructed using \Cref{lem:fast-permutations} (assuming $\Delta$ is a power of two), then the total number of bits of randomness needed to sample advice for the algorithm will be $O(n s (\log C)^4 \log\frac{1}{\epsilon}) = O(n s^2 (\log C)^5) = O\left(n \left(\log \frac{n\Delta}{\delta}\right)^2 (\log \Delta)^5 \right) $.
% 
% In the above proof, the only place where the fact that the $\sigma_y$ are chosen uniformly at random from the set of all permutations are used is Eq. \ref{eq:use-of-rand-perm}, where
% it is argued that for any fixed $F_y$, $\sigma_y[F_y]$ will be uniformly random. If we specialize
% \Cref{alg:bpt-det-va-edgecol} to only work with simple graphs, then the algorithm will only
% ever construct sets $F_y$ using Line \ref{step:bpt-det-va-edgecol-low-deg-fy}, ensuring that
% $F_y \le s$. Consequently, for the proof to work we only require that $\sigma[F_y]$ is s-wise independent, so instead of making the $\{\sigma_y\}_{y\in B}$ be uniformly random permutations, we could instead use $(\frac{1}{C^s},s)$-wise independent permutations from constructed by \Cref{lem:fast-permutations},
% using only $O((\log n)^5 \log \frac{n}{\delta})$ bits of randomnness each.

% We suspect, but have not proven, that when the advice is chosen randomly from distributions of $O(\log n)$-wise almost independent permutations, with high probability \Cref{alg:bpt-det-va-edgecol} will still work for all multigraphs.

Combining \Cref{lem:cvt-gen-to-bpt} with \Cref{lem:online-bpt-det-va-edgecol}, we immediately get the following theorem.

\begin{theorem}[Formal version of \Cref{thm:online-det-va-edgecol}]\label{thm:online-det-va-formal}
  There is a deterministic online $O(\Delta)$-edge coloring algorithm for vertex arrival streams over multigraphs using $O(n \log (n\Delta))$ bits of space, using $\tO(n)$ bits of advice. (By picking a uniformly random advice string, the same algorithm can alternatively be used as a robust algorithm with $1/\poly(n)$ error; the advice can also be computed in exponential time.) 
  \end{theorem}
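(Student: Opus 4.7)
\begin{prf}
My plan is to derive this theorem essentially as an immediate corollary of the two building blocks already assembled in the paper, following the same pattern used to deduce \Cref{thm:online-rand-va-formal} from \Cref{lem:cvt-gen-to-bpt} and \Cref{lem:online-bpt-rand-va-edgecol}. Concretely, \Cref{lem:online-bpt-det-va-edgecol} supplies a deterministic $O(\Delta)$-edge-coloring algorithm for one-sided vertex-arrival streams on bipartite multigraphs, using $\tO(n)$ bits of working space together with $\tO(n)$ bits of advice (namely, the almost-$s$-wise independent permutations $(\sigma_v)_{v \in B}$ constructed via \Cref{lem:fast-permutations}). To lift this to (two-sided) vertex arrivals on general multigraphs, I would invoke \Cref{lem:cvt-gen-to-bpt} with $f(\Delta) = c\Delta$, where $c$ is the absolute constant hidden in the $O(\Delta)$ bound of \Cref{lem:online-bpt-det-va-edgecol}. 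Here $f(x)/x = c$ is (weakly) monotonically increasing, so the hypothesis of the corollary is satisfied and the resulting color count remains $O(f(\Delta)) = O(\Delta)$.

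For the quantitative bookkeeping, the bipartite algorithm has space footprint $g(n,\Delta) = \tO(n)$, so after the reduction the general-graph algorithm uses $O((g(n,\Delta) + n \log \Delta)\log n) = \tO(n)$ bits of space. The advice sums analogously: the reduction runs $t = O(\log n)$ independent copies of the bipartite algorithm (one per bipartite piece $F_i$ produced by \Cref{alg:gen-to-bpt}), each requiring $\tO(n)$ bits of advice, so the combined advice length is still $\tO(n)$.

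For the two remaining claims in the theorem statement, I would argue as follows. To obtain the robust randomized interpretation, simply sample the advice string uniformly at random and instantiate \Cref{lem:online-bpt-det-va-edgecol} with the failure parameter $\delta = 1/\poly(n)$; a union bound over the $O(\log n)$ bipartite instances used by the reduction yields overall error $1/\poly(n)$ against any adaptive adversary. To exhibit an explicit advice string computable in exponential time, I would enumerate all candidate advice strings of length $\tO(n)$ and, for each one, verify correctness against every possible vertex-arrival stream (of which there are only $2^{O(n^2 \log n)}$ many); since the randomized version succeeds with positive probability, such a uniformly good advice string must exist, and brute-force search finds it in exponential time.

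The main potential pitfall is the very mild one of checking that $f(x) = cx$ meets the monotonicity hypothesis of \Cref{lem:cvt-gen-to-bpt}, which it does trivially. All genuine technical difficulty is shouldered by \Cref{lem:online-bpt-det-va-edgecol} (the Hall's-condition union bound driven by \Cref{lem:approx-indep-large-anticoncentration,lem:approx-indep-small-anticoncentration}) and by \Cref{lem:edge-gen-to-bpt} (the binary-code-based partition into bipartite subgraphs), so the present proof is a short combination step.
\end{prf}
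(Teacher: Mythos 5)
Your proposal is correct and follows exactly the paper's route: the paper obtains \Cref{thm:online-det-va-formal} by the same immediate combination of \Cref{lem:cvt-gen-to-bpt} (with the trivially admissible $f(x)=cx$) and \Cref{lem:online-bpt-det-va-edgecol}, with the same space/advice bookkeeping over the $O(\log n)$ bipartite instances and the same remarks about random advice giving a $1/\poly(n)$-error robust algorithm and exponential-time computability of a good advice string.
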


\section{Edge coloring on edge arrival streams}

First we prove the general version of \Cref{lem:space-color-tradeoff}.

\begin{lemma}[Generalized \Cref{lem:space-color-tradeoff}]\label{lem:space-color-tradeoff-formal}
  Let $f, g$ be functions from $\NN \mapsto \NN$.
 Given a streaming algorithm $\cA$ for $g(\Delta)$-coloring over edge arrival streams on multigraphs of max degree $\Delta$, using $f(N,\Delta)$ bits of space, for any positive integer $s$, there is a streaming algorithm $\cB$ for $(g(s \Delta) + s\Delta)$-coloring edge arrival streams for multigraphs of max degree $\Delta$, using $f(N / s, s \Delta) + O(n \log \Delta)$ bits of space.
\end{lemma}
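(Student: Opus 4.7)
The plan is to partition $V$ into $\lceil n/s \rceil$ groups of size at most $s$, build a super-multigraph $G'$ by contracting each group to a single super-vertex, and split the incoming edges of $G$ into \emph{crossing} edges (endpoints in different groups) and \emph{internal} edges (endpoints in the same group). Crossing edges will be forwarded to an instance of $\cA$ run on $G'$ (whose max degree is at most $s\Delta$), while internal edges will be colored by a separate low-space scheme on a disjoint palette.

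I would fix a canonical partition such as $V_i = \{(i-1)s+1,\ldots,\min(is,n)\}$, so each vertex's group and its in-group label in $[s]$ are computable from its identifier with no stored state. When an edge $\{u,v\}$ arrives with $u \in V_i$, $v \in V_j$, $i \ne j$, treat it as the (possibly parallel) edge $\{i,j\}$ in $G'$ and feed it to $\cA$; since $G'$ has $\lceil n/s \rceil$ vertices and max degree $\le s\Delta$, $\cA$ uses $f(n/s, s\Delta)$ space and outputs a proper coloring from a palette $P_1$ of $g(s\Delta)$ colors. Each crossing edge inherits the color that $\cA$ assigns to its image in $G'$; properness of this part of the coloring follows from correctness of $\cA$ on the multigraph $G'$, since any two crossing edges of $G$ sharing a vertex correspond to two edges of $G'$ incident to the same super-vertex.

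For internal edges I would use a disjoint palette $P_2 = [s] \times [\Delta]$ of size $s\Delta$. Fix once-and-for-all a proper edge coloring $\phi$ of $K_s$ in at most $s$ colors (e.g.\ standard round-robin), and maintain for each $v \in V$ an internal-degree counter $d_v \in \{0,\ldots,\Delta\}$ of $O(\log \Delta)$ bits. When an internal edge $\{u,v\}$ arrives with labels $a < b$ (so $u$ has label $a$, $v$ has label $b$), assign the color $(\phi(\{a,b\}),\, d_u+1) \in P_2$ and then increment both $d_u$ and $d_v$. Properness follows by a case analysis on how two adjacent internal edges can share a vertex $v$: if $v$ plays the same role (smaller or larger label) in both edges, then either the other endpoints carry different labels and $\phi$ differentiates the first coordinate, or the edges are parallel and the counter at the shared smaller-label endpoint has advanced between their processings, giving distinct second coordinates; and if $v$'s role flips across the two edges, $\phi$ still separates the first coordinates by properness of $\phi$ at the label of $v$. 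Since different groups are vertex-disjoint, the palette $P_2$ can be safely shared across all groups.

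Combining the two parts yields $g(s\Delta) + s\Delta$ colors using $f(n/s, s\Delta) + O(n \log \Delta)$ bits of space, as required. The main obstacle is staying within the $O(n \log \Delta)$-bit additional budget for the internal coloring: a naive per-group greedy would need to track $\Omega(\Delta)$ used colors per vertex and blow up the space. The round-robin-with-counter scheme above sidesteps this by hard-coding the conflict resolution into the fixed function $\phi$ on $[s]$ and storing only a single $O(\log \Delta)$-bit degree counter per vertex.
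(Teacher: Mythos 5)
Your proposal is correct and follows essentially the same route as the paper's proof: partition the vertices into consecutive blocks of size $s$, contract to feed crossing edges to $\cA$ (on a multigraph of max degree $\le s\Delta$), and color internal edges from a disjoint palette of size $s\Delta$ using a fixed proper edge coloring of $K_s$ on the in-block labels combined with a per-vertex $O(\log\Delta)$-bit counter indexed by the smaller endpoint. The space and color accounting matches the paper's \Cref{alg:space-color-exch} exactly.
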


\begin{proof} Pseudocode for algorithm $\cB$ is given by \Cref{alg:space-color-exch}.

  \begin{algorithm}[htb!]
    \caption{Adapting edge coloring algorithm $\cB$ to use more colors and less space, with parameter $s$ \label{alg:space-color-exch}}
    \begin{algorithmic}[1]
    \Statex \textbf{Input}: Stream of edge arrivals for $n$-vertex graph $G=(V,E)$
      \Statex Assume $V = [n]$
      \Statex
      \Statex \underline{\textbf{Initialize}:}
      \Statex Let $\chi : K_s \mapsto [s]$ give an $s$-edge coloring of $K_s$.\footnotemark
      \State $A \gets $ instance of $\cA(\ceil{n/s}, \Delta s)$.
      \For{$v \in [n]$}
          \State $d_v \gets 0$
      \EndFor
      
    \Statex \underline{\textbf{Process}(edge $\{x,y\}$) $\rightarrow$ \textbf{color}}
      \State $d_x \gets d_x + 1$
      \State $d_y \gets d_y + 1$

      \If{ $\ceil{x / s} = \ceil{y / s}$}
        \State Let $c \gets \Delta \cdot (\chi(\{x \bmod s, y \bmod s\}) - 1) + d_{\min(x,y)}$ 
        
        \State \textbf{return} color $(0, c)$
      \EndIf
      
      \State Let $c \gets A.\textsc{Process}( {\ceil{x/s}, \ceil{y/s}} )$
      \State \textbf{return} color $(1,c)$
    \end{algorithmic}
  \end{algorithm}
  
  This algorithm partitions the set of all vertices into sets $S_1,\ldots,S_{\ceil{n/\delta}}$,
  where set $S_i$ contains the $s$ vertices $\{s (i - 1) + 1, \ldots, s i - 1, s i\}$.It provides the nested algorithm instance $A$ with the (non-loop) edges in the graph $H$ formed by contracting these sets. Edges entirely inside one of the $S_i$ are colored using a separate set of $\Delta s$ colors.

  As the total number of edges incident on a set of $s$ vertices in $G$ is $\le \Delta s$,
  the maximum degree of $H$ will also be $\le \Delta s$. Since instance $A$ is guaranteed
  to correctly edge color all multigraphs on $[\ceil{n / s}]$ of maximum degree $\le \Delta s$,
  no two edges adjacent to a vertex in $H$ will be assigned the same color. Consequently, the
  edges from each individual vertex $v \in S_i$ to vertices outside $S-I$ will all be 
  given different colors.
  
  Consider one of the vertex sets $S_i$; a given edge $\{x,y\}$ with $x,y\in S_i$
  will be assigned a color which, due to the use of $\chi$ to partition edges, will differ from the colors assigned to all other edge
  types between vertices in $S_i$; and if the edge $\{x,y\}$ was processed in the past,
  this time will assign a different color since $d_{\min(x,y)}$ has been increased since then.
  
  The algorithm will require $f(\ceil{N/s}, \Delta s)$ bits of space to store $A$, and
  $n \log \Delta$ bits of state to keep track of all vertex degrees. The total number of
  colors used will be $g(s \Delta) + s \Delta$; if $g(x) = O(x)$, this will be $O(s \Delta)$.
  
  % typographical hack to get footnote to work inside algorithm. This should be moved to match the float
  \footnotetext{While it is possible to implement this more efficiently, this function can also be evaluated by running the Misra-Gries algorithm in $O(s^3)$ time..\cite{MisraG92}}
\end{proof}

% Like URN4-37
\begin{lemma}\label{lem:va-to-ea-conversion}
  Given a streaming algorithm $\cA$ for $O(\Delta)$ edge coloring for one-sided vertex
  arrival streams over bipartite multigraphs using $\le f(n, \Delta)$ space, we
  can construct a streaming algorithm $\cB$ for $O(\Delta)$
  edge coloring of edge arrival streams over bipartite multigraphs using 
  $O(\sqrt{\Delta} f(n, O(\sqrt{\Delta})) + n \sqrt{\Delta} (\log n \Delta) \log(n/\delta))$ bits of space. The new streaming algorithm
  $\cB$ is randomized, runs in polynomial time, and has additional $\le \delta$ probability of error,
  even if the input stream is adaptively generated.
\end{lemma}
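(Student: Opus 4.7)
The plan is to run $k = \Theta(\sqrt{\Delta})$ parallel instances of $\cA$, each on a bipartite multigraph of maximum degree $O(\sqrt{\Delta})$, with disjoint color palettes, yielding $k \cdot O(\sqrt{\Delta}) = O(\Delta)$ colors total. I fix the input bipartition as $A \sqcup B$ and treat $A$ as the ``arriving'' side in each instance's simulated one-sided vertex arrival stream. For every $a \in A$, I sample, from a compact source of almost $k$-wise independent randomness, a permutation $\pi_a$ over $[k]$, and maintain a buffer $B_a$ of pending edges incident to $a$ together with a block counter $j_a$.

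When an edge $\{a,b\}$ arrives, append it to $B_a$. As soon as $|B_a|$ reaches $\sqrt{\Delta}$, present the contents of $B_a$ to instance $\pi_a[j_a]$ of $\cA$ as the arrival of a fresh virtual $A$-side vertex $(a, j_a)$; record the colors returned, clear $B_a$, and increment $j_a$. At the end of the stream, any non-empty residual buffers are released as (possibly smaller) virtual vertex arrivals. The final color assigned to an original edge is the pair (instance index, color returned by that instance), and by construction it uses $O(\Delta)$ distinct colors.

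Correctness inside a single instance is inherited from $\cA$: all edges adjacent to a common (virtual $A$-side or $B$-side) vertex inside one instance receive distinct colors, and across instances the colors differ in the instance-index coordinate. The only nontrivial step is showing that each instance really is a bipartite multigraph of maximum degree $O(\sqrt{\Delta})$. The virtual $A$-side is automatic: each virtual vertex $(a, j)$ carries exactly $\sqrt{\Delta}$ edges (or fewer if residual). For the $B$-side, the degree of $b$ in instance $i$ equals $\sum_{(a, j) : \pi_a[j] = i} m_{ab}^{(j)}$, where $m_{ab}^{(j)}$ is the number of $a$--$b$ copies placed in the $j$-th block of $a$. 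I would show, via a Bernstein-type concentration inequality exploiting the independence of the $\pi_a$ across $a \in A$ and the budget $\sum_{(a,j)} m_{ab}^{(j)} = \deg_G(b) \le \Delta$, that $\Pr[\deg_i(b) > C\sqrt{\Delta}] \le \delta/(nk)$ for a suitable constant $C$; a union bound over $b \in B$ and $i \in [k]$ then caps the overall failure probability at $\delta$ even against an adaptive adversary, provided enough limited-independence randomness is used.

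The principal obstacle is precisely this concentration step in the multigraph setting: an adversarial edge ordering can place all $m_{ab}$ parallel copies of a pair $\{a,b\}$ inside a single block, so an individual block's contribution to $\deg_i(b)$ may be as large as $\sqrt{\Delta}$, which inflates the variance and the per-block Bernstein ``max-value'' term. To control this I would add a second layer of randomness, either by pseudorandomly permuting the slots within each $a$'s edge stream (via a permutation $\tau_a$ constructed as in \Cref{lem:fast-permutations}) so that copies of a multi-edge are spread across distinct blocks, or by taking buffers of size $\Theta(\sqrt{\Delta} \log(n/\delta))$ and then subdividing into instances via limited-independence hashing in the spirit of \Cref{lem:approx-indep-large-anticoncentration}; either route is what forces the $\log(n/\delta)$ factor into the space accounting. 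Totalling up: the $k$ instances of $\cA$ use $k \cdot f(n, O(\sqrt{\Delta})) = O(\sqrt{\Delta}\, f(n, O(\sqrt{\Delta})))$ bits, the buffers use $O(n \sqrt{\Delta} \log(n\Delta))$ bits, and the permutations $\pi_a$ (and $\tau_a$) plus ancillary state use $O(n \sqrt{\Delta} (\log n \Delta) \log(n/\delta))$ bits, matching the claim.
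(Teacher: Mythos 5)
Your overall architecture is the same as the paper's: run $\Theta(\sqrt{\Delta})$ instances of $\cA$ with disjoint palettes, each configured for max degree $O(\sqrt{\Delta})$; buffer the edges at each $A$-vertex until a star of size $\approx\sqrt{\Delta}$ accumulates, dispatch that star as a one-sided vertex arrival to a randomly chosen instance that has not yet seen that vertex (your permutation $\pi_a$ plays the role of the paper's availability vectors $x^{(i)}$), flush residuals at the end, and prove by concentration plus a union bound that every $B$-vertex has degree $O(\sqrt{\Delta})$ in every instance. However, the step you yourself flag as the principal obstacle -- controlling the $B$-side degree in the presence of parallel edges -- is exactly where the proof is incomplete, and neither of your proposed remedies closes it. The first (permuting the slots of $a$'s edge sequence via $\tau_a$) fails outright: take $\sqrt{\Delta}$ vertices $a_1,\ldots,a_{\sqrt{\Delta}}$, each joined to a single common vertex $b$ by $\sqrt{\Delta}$ parallel copies. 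Each $a_i$ fills exactly one block consisting entirely of copies of $\{a_i,b\}$, so permuting within $a_i$'s stream changes nothing; the degree of $b$ in instance $i$ is then $\sqrt{\Delta}$ times a sum of indicators with mean $\Theta(1)$, whose tail at $C\sqrt{\Delta}$ is only $e^{-\Theta(C)}$, so you cannot get failure probability $\delta/(nk)$ without letting the per-instance degree (and hence the total color count) grow by a $\log(n/\delta)$ factor. The second remedy (larger buffers subdivided by hashing) is left vague, and as stated it either re-creates blocks that can still be dominated by one multi-edge, or it splits one original vertex $a$ across several virtual vertices inside the same instance, which breaks properness at $a$ since the instance has no way to know these virtual vertices coincide.

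The paper's proof supplies precisely the missing mechanism: it caps the multiplicity of any edge inside the pending pool at $\tau=\lfloor\sqrt{\Delta}/(9\ln(n/\delta))\rfloor$, diverting all copies of any edge that exceeds this threshold to a separate pool $L$ that is stored explicitly (only $O(n\sqrt{\Delta}/\log n)$ distinct edges can be that heavy, so $L$ fits in the space budget) and colored greedily at the end with a fresh set of $2\Delta-1$ colors. With this cap, each dispatched star contributes at most $\tau$ edges to any fixed $B$-vertex, so the martingale of per-star contributions has increments bounded by $\tau$ and conditional means $\le 2\tau/s$ (the dispatch target is drawn fresh, uniformly over the $\ge s/2$ unused instances, which is also what makes the argument robust to an adaptive adversary -- a point your up-front, limited-independence $\pi_a$ would need separate care for). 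Azuma/Chernoff then gives failure probability $\delta/n^2$ per (vertex, instance) pair at degree threshold $4\Delta/s=O(\sqrt{\Delta})$, without inflating the number of colors. Without some equivalent of this multiplicity cap and overflow pool, your scheme does not achieve the claimed $O(\Delta)$ colors with error $\delta$ on multigraphs.
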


\begin{proof}[Proof of \Cref{lem:va-to-ea-conversion}]
  The W-streaming edge-arrival algorithm is given by \Cref{alg:va-to-ea-conversion}. The algorithm uses $s = O(\sqrt{\Delta})$ instances of $\cA$. This algorithm maintains a pool $P$ of edges, and whenever it receives a new edge it adds it to the pool. Edges with high multiplicity ($\tOmega(\sqrt{\Delta}/\log(n^2/\delta))$ in $P$ are moved to a different pool $L$; since there are not many of this type, they can be stored using only $\tO(n\sqrt{\Delta})$ space. When a vertex $v$ reaches a high degree ($\ge\sqrt{\Delta}$) in the pool, it and its incident edges are removed from $P$ and assigned to a random instance of $\cA$ which has not yet received $v$. At the end of the stream, all edges still stored in either $P$ or $L$ are colored.
  
  \begin{algorithm}[htb!]
    \caption{W-streaming algorithm for $O(\Delta)$ edge coloring on edge-arrival stream given black-box access to algorithm $\cA$ for $C \Delta$ edge coloring on vertex-arrival stream\label{alg:va-to-ea-conversion}}
    \begin{algorithmic}[1]
    \Statex \textbf{Input}: Stream of edge arrivals for $n$-vertex graph $G=(A \sqcup B,E)$
      \Statex
      \Statex \underline{\textbf{Initialize}:}
      \State Let $s = 2 \ceil{\sqrt{\Delta}}$ % number of sketches
      \State Let $\tau = \floor{\frac{\sqrt{\Delta}}{9 \ln(n/\delta)}}$ % number of sketches
      
      \State $P \gets \emptyset$ is a multiset of edges -- used to cache all arriving edges
      \State $L \gets \emptyset$ is a multiset of edges -- used to efficiently store certain edge types which have high multiplicity % O(log nDelta + Delta) space used per edge
      
      \For{$i \in [s]$}
        \State $\cI^{(i)} \gets$ instance of algorithm $\cA$ for graphs of max degree $\ceil{4 \Delta / s}$; this will use $C \ceil{4 \Delta / s}$ colors
        \State $x^{(i)} \gets [0,\ldots,0] \in \{0,1\}^A$, tracks for which vertices $w$ in $A$ the instance $\cI^{(i)}$ has received $(w,M_w)$
      \EndFor
      
    \Statex \underline{\textbf{Process}(edge $\{x,y\}$)}
      \State $P \gets P \cup \{\{x,y\}\}$.
      
      \If{edge $\{x,y\}$ has multiplicity > $\tau$ in $P$}
        \State Remove all copies of $\{x,y\}$ from $P$, and add them to $L$\label{step:va-to-ea-conversion-hide-highmult}
        \State \textbf{return}
      \EndIf
      
      \If{$\exists v \in A$ with degree $\ge \ceil{\sqrt{\Delta}}$ in $P$}
          \State Pick random $i$ from $\{ j \in [s]: x^{(j)}_v = 0\}$\label{step:va-to-ea-conversion-pick-star}
          \State $x^{(i)}_v \gets 1$
          \State Let $M_v$ be edges incident on $v$ in $P$
          \State Send $(v,M_v)$ to $\cI^{(i)}$ to be colored\label{step:va-to-ea-conversion-send-to-va}
          \State Remove $M_v$ from $P$
      \EndIf
      
    \Statex \underline{\textbf{End of Stream}}
      \State Color edges in $P \cup L$ greedily using an independent set of $2\Delta - 1$ colors
    \end{algorithmic}
  \end{algorithm}
  
  \Cref{alg:va-to-ea-conversion} requires $s f(n)$ bits of space to store the instances $\cI^{(1)},\ldots,\cI^{(s)}$, and $s n$ bits to keep track of the vectors $x^{(1)},\ldots,x^{(s)}$. Since the edges adjacent to a vertex in $A$ are removed from $P$ as soon as it reaches degree $\ceil{\sqrt{\Delta}}$, the total number of edges in $P$, counting multiplicity, will be $\le |A| (\ceil{\sqrt{\Delta}} - 1) = O(n\sqrt{\Delta})$. Thus, $P$ can be stored using $O(n \sqrt{\Delta} \log(n^2 / \delta))$ bits of space. Finally, since $L$ receives only edges whose multiplicity was at least $\sqrt{\Delta} / \log n$, it will contain at most $(n \Delta / 2) / (\sqrt{\Delta} / \log n) = n \sqrt{\Delta} / (2 \log n)$ distinct edges; keeping track of them and their multiplicity can be done in $O(n \sqrt{\Delta} \log (n \Delta) / \log n)$ space. In total, \Cref{alg:va-to-ea-conversion}  will require $O(\sqrt{\Delta} (f(n) + n (\log (n \Delta)) \log(n/\delta)))$ bits of space in total.
  
  The total number of colors used is $2 \ceil{\sqrt{\Delta}} \cdot C \ceil{4 \Delta / s} + (2 \Delta - 1) = O(\Delta)$.
  
  Because \Cref{alg:va-to-ea-conversion} only sends a star around a vertex $v$ to an instance $\cI^{(i)}$ (Line \ref{step:va-to-ea-conversion-send-to-va}) when the vertex $v$ has degree $= \ceil{\sqrt{\Delta}}$ in $P$, the maximum degree of arriving vertices that any instance of $\cA$ will process will be $\ceil{\sqrt{\Delta}}$. However, it is still possible that for some sketch $\cI^{(i)}$, a vertex $v \in B$ will receive a too many edges from vertices in $A$ that the sketch $\cI^{(i)}$ receives later.
    
  For some pair $i \in [s]$, $z \in B$, we will show that sketch $\cI^{(i)}$ receives $\le 4 \Delta / s$ edges (counting multiplicity) for $z$, with $\ge \frac{\delta}{n^2}$ probability. Let $X_1,\ldots,X_\Delta$ be random variables, where $X_i$ is the number of edges that are sent to $\cI^{(i)}$ when the $j$th star adjacent to $z$ is removed. If the stream ends before an $j$th star is removed, then $X_j = 0$. Because Line \ref{step:va-to-ea-conversion-hide-highmult} removes all edges with multiplicity $> \tau$ in $P$, $z$ will have at most $\tau$ edges between it and the center of the $j$th star, so $X_j \le \tau$. Furthermore, at the time the $j$th star is selected, the algorithm makes a random decision on Line \ref{step:va-to-ea-conversion-pick-star} to choose which sketch will receive it. Because $s = 2\ceil{\sqrt{\Delta}}$, and each star has root degree only $\ceil{\sqrt{\Delta}}$, there will always be $\ge s/2$ instances that have not received a given vertex as the root of a star, so the probability that $\cI^{(i)}$ will receive the $j$th star is $\le \frac{2}{s}$. Thus $\EE[X_j | X_1,\ldots,X_{j-1}] \le 2 \tau / s$. This bound holds even if the input stream is produced by an adaptive adversary. Since the degree of $z$ will be less than $\le \Delta$ at the end of the stream, we also have $\EE[\sum_{j \in \Delta} X_j] \le 2 \Delta / s$.
  
  We now apply the multiplicative (Chernoff-like) form of Azuma's inequality, on the $[0,1]$ random variables $Y_1,\ldots,Y_\Delta$, defined by $Y_j := X_j / \tau$. Let $\alpha = \EE[\sum_{j \in \Delta} Y_j]$.
  \begin{align*}
    \Pr\left[\sum_{i \in [\Delta]} X_j \ge 4 \Delta / s\right] &= \Pr\left[ \sum_{i \in [\Delta]} Y_j \ge \frac{4 \Delta}{s \tau} \right] = \Pr\left[ \sum_{i \in [\Delta]} Y_j \ge \left(1 + \left(\frac{4 \Delta}{s \tau \alpha} - 1\right)\right) \alpha \right] \\
      &\le \exp\left( - \frac{1}{3} \left(\frac{4 \Delta}{s \tau \alpha} - 1\right) \cdot \alpha \right) \qquad\qquad \text{since $\frac{4 \Delta}{s \tau \alpha} - 1 > 1$} \\
      &\le \exp\left( - \frac{1}{3} \frac{2 \Delta}{s \tau \alpha}\cdot \alpha \right) =  \exp\left( - \frac{2}{3} \frac{\Delta}{s \tau} \right) \\
      &= \exp\left( - \frac{2 \Delta}{3 \cdot 2 \ceil{\sqrt{\Delta}} \floor{\sqrt{\Delta} / (9 \ln(n/\delta))}} \right) \\
      &\le \exp\left( - 2 \ln(n/\delta)\right) \le \frac{\delta}{n^2} \qquad\qquad \text{since $2 \ceil{\sqrt{\Delta}} \le 3 \sqrt{\Delta}$ and $1/\floor{x} \ge 1/x$}
  \end{align*}

  By a union bound over all $\le n$ vertices $v \in B$, and all $\le n$ instances in $\{\cI^{(j)}\}_{j\in [s]}$, we have that the total probability of any vertex $v$ in an instance $\cI^{(i)}$ receiving more than $4 \Delta / s$ edges is $\le \delta$.
\end{proof}

% If we restrict ourselves to \emph{simple} graphs, algorithm \Cref{alg:va-to-ea-conversion} can likely be derandomized to work deterministically. 

Combining \Cref{lem:va-to-ea-conversion} with \Cref{lem:online-bpt-rand-va-edgecol}, and then applying \Cref{lem:cvt-gen-to-bpt-ea} proves the following.

\begin{theorem}[Formal version of \Cref{thm:Wstream-rand-ea-edgecol}]\label{thm:Wstream-rand-ea-formal}
  There is a randomized W-streaming algorithm for $O(\Delta)$ edge coloring on edge arrival streams for multigraphs which uses $O(n \sqrt{\Delta} (\log (n \Delta))^2)$ bits of space, with error $\le 1/\poly(n)$ against any adaptive adversary. The algorithm also requires $\tO(n\Delta)$ bits of oracle randomness.
\end{theorem}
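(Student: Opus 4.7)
The plan is to obtain this theorem by chaining together three previously established results: Lemma~\ref{lem:online-bpt-rand-va-edgecol}, Lemma~\ref{lem:va-to-ea-conversion}, and Lemma~\ref{lem:cvt-gen-to-bpt-ea}. The proof itself should be short; the substantive work all lies in the lemmas, and the proposal here is essentially to verify that the parameters compose correctly.

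First, I would invoke Lemma~\ref{lem:online-bpt-rand-va-edgecol} to obtain a randomized online $5\Delta$-coloring algorithm for one-sided vertex-arrival streams over bipartite multigraphs that uses $\tO(n)$ bits of working space and $\tO(n\Delta)$ bits of oracle randomness (in the form of the independent permutations $\sigma_v$ for $v \in B$). Because the permutations are sampled in advance and are never revealed by the algorithm's output, correctness against an adaptive adversary is immediate from the existing analysis.

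Next, I would feed this algorithm into Lemma~\ref{lem:va-to-ea-conversion} with $f(n,\Delta) = \tO(n)$. The resulting W-streaming edge-arrival algorithm for bipartite multigraphs uses $O(\sqrt{\Delta}\,f(n,O(\sqrt{\Delta})) + n\sqrt{\Delta}(\log(n\Delta))\log(n/\delta)) = \tO(n\sqrt{\Delta})$ bits of space, produces an $O(\Delta)$-coloring, and fails with probability at most $\delta$ against adaptive adversaries; setting $\delta = 1/\poly(n)$ yields the desired error bound. The oracle randomness requirement from the base algorithm is preserved, giving $\tO(n\Delta)$ oracle random bits in total (the extra coin flips on Line~\ref{step:va-to-ea-conversion-pick-star} are low-order and can be absorbed into the $\tO$).

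Finally, I would apply Lemma~\ref{lem:cvt-gen-to-bpt-ea} with the function $f(\Delta) = c\Delta$ (noting $f(x)/x$ is monotone), which converts the bipartite edge-arrival algorithm into one for general multigraphs at the cost of a $\log n$ factor in space and a constant factor in colors. The space becomes $O((\tO(n\sqrt{\Delta}) + n\log\Delta)\log n) = \tO(n\sqrt{\Delta})$, matching the stated bound up to polylog factors; since the partitioning scheme is a deterministic preprocessing layer, the error probability and adaptive-adversary robustness are inherited unchanged. The step I expect to require the most care is tracking how the error parameter $\delta$ propagates and verifying that the oracle randomness from the bipartite instances (shared across the $O(\log n)$ parallel copies spawned by Lemma~\ref{lem:cvt-gen-to-bpt-ea}) still totals $\tO(n\Delta)$, but this is routine bookkeeping rather than a genuine obstacle.
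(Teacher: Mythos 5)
Your proposal is correct and follows exactly the paper's route: the paper proves this theorem by combining \Cref{lem:va-to-ea-conversion} with \Cref{lem:online-bpt-rand-va-edgecol} and then applying \Cref{lem:cvt-gen-to-bpt-ea}, which is precisely your chain of lemmas. The parameter and error-propagation bookkeeping you describe is the same (and is in fact spelled out in more detail by you than by the paper, which states the combination in one line).
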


The following online edge coloring algorithms will both use the same core primitive; a pool of
random colors, which is periodically refreshed, along with data to keep track of which colors
in the pool have been used so far. The times at which the pool are refreshed only
depend on the number of colors that were used, and not which colors where used; this property makes
the primitive easier to handle in proofs.

\begin{algorithm}[htb!]
  \caption{Storing free regions from a permutation\label{alg:ds-rand-blocks}}
  \begin{algorithmic}[1]
    \Statex \underline{$F \gets $\textbf{InitFreeTracker}($C$,$s$,$\Delta$,$\sigma$\textbf{):}} \Comment{Assume $C$,$s$,$\Delta$ are powers of two, and $\sigma$ permutation of $[C]$, and $C \ge \Delta$}
    \State $H \gets [s]$ be a subset of $[s]$
    \State $b \gets 1$ be a counter between $1$ and $C/s$
    %\State $\sigma$ is referenced
    \State \textcolor{PineGreen}{Optional: $Q \gets \emptyset$ is a set of references to objects}
%     \State Define $P_i := \sigma[[s] + (i -1) s]$ for $i=1,\ldots,C/s$ \label{step:ds-rand-blocks-set-pi}% notation: \sigma[x,....,y] vs \sigma[x + [y-x] - 1]

    \Statex
    
    \Statex \underline{\textbf{Interpreting $F$ as subset of $[C]$}}
    \State \textbf{return} $\sigma[H + (b -1) s]$
    \Statex
    
    % todo: it may be better to define this as a generic function?
    \Statex \underline{$F$\textbf{.RemoveAndUpdate(}$c$, \textcolor{PineGreen}{optional: $o$}\textbf{)}} \Comment{Requires $c \in F_v$}
    \State $H \gets H \setminus \{ \sigma^{-1}(c)  \}$
    \State \textcolor{PineGreen}{Optional: Add a reference to $o$, and store it in $Q$}
    \If{$|H| \le s - s\Delta /C$} \Comment{Switch to next block}
      \State $H \gets [s]$
      \State $b \gets b + 1$
      \State \textcolor{PineGreen}{Optional: Drop all references in $Q$ and set $Q \gets \emptyset$}
    \EndIf
  \end{algorithmic}
\end{algorithm}

We are now ready to state and prove the formal version of \Cref{thm:online-rand-ea-edgecol}.

\begin{theorem}[Formal version of \Cref{thm:online-rand-ea-edgecol}]\label{thm:online-rand-ea-formal}
Given any adversarial edge-arrival stream of a simple graph, there is a randomized algorithm for online $O(\Delta)$-edge-coloring using $O(n \sqrt{\Delta \log n})$ bits of space
and $\tO(n \sqrt{\Delta})$ oracle random bits.
\end{theorem}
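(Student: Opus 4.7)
The plan is to have each vertex $v$ maintain a Free tracker $F_v$ as defined in \Cref{alg:ds-rand-blocks}, keyed on an (almost-)uniform random permutation $\sigma_v$ of $[C]$, with parameters $C = \Theta(\Delta)$ and block size $s = \Theta(\sqrt{\Delta\log n})$. On an incoming edge $\{x,y\}$, the algorithm computes $F_x \cap F_y$, picks an arbitrary color $c$ in the intersection, assigns it to the edge, and calls $F_x.\textsc{RemoveAndUpdate}(c)$ and $F_y.\textsc{RemoveAndUpdate}(c)$; if ever the intersection is empty, the algorithm aborts. Since each $F_v$ is stored as a bitmap $H_v \subseteq [s]$ together with a block counter $b_v \in [C/s]$, the total working space is $O(n(s + \log C)) = O(n\sqrt{\Delta\log n})$ bits.

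Correctness (conditional on never aborting) is immediate from the design of \Cref{alg:ds-rand-blocks}: every $c \in F_v$ equals $\sigma_v[(b_v-1)s + j]$ for some $j \in H_v$, and $\sigma_v$ being a permutation means $c$ is disjoint from every color already removed from $F_v$, i.e., from every color previously assigned to an edge at $v$. A capacity check shows that $b_v$ never exceeds $C/s$: each block absorbs at least $s\Delta/C$ removes before a switch, vertex $v$ suffers at most $\Delta$ removes (since $G$ is a simple graph of max degree $\Delta$), hence at most $\Delta/(s\Delta/C) = C/s$ blocks are ever consumed, within $\sigma_v$'s domain.

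The crux will be to show that $F_x \cap F_y \neq \emptyset$ for every incoming edge with probability $1 - 1/\poly(n)$ against an adaptive adversary. Heuristically, $|F_v|$ always satisfies $s - s\Delta/C + 1 \le |F_v| \le s$, so $|F_v| = \Omega(s)$; if $F_x$ and $F_y$ were independent uniformly random $\Omega(s)$-subsets of $[C]$, then $\EE[|F_x \cap F_y|] = \Omega(s^2/C) = \Omega(\log n)$, which is positive with probability $1 - 1/\poly(n)$ by standard anti-concentration. The main technical step, then, is to justify this independent-uniform approximation via a deferred-decisions argument: at the moment $\{x, y\}$ is processed, conditional on the adversary's transcript (in particular on the used-color sets $U_x, U_y$), the current block of $\sigma_v$ is distributed almost uniformly as an $s$-subset of $[C] \setminus U_v$, because the \emph{unused} entries of earlier blocks of $\sigma_v$ are never exposed after a block switch (\Cref{alg:ds-rand-blocks} resets $H$ and discards them). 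An anti-concentration bound in the style of \Cref{lem:approx-indep-large-anticoncentration}, applied to the two current blocks at $x$ and $y$, then yields $\Pr[|F_x \cap F_y| = 0] \le 1/n^{\Omega(1)}$ per edge, and a union bound over the $O(n\Delta)$ edges controls total failure probability.

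Finally, to meet the $\tO(n\sqrt{\Delta})$ oracle-random-bit budget, a fully random $\sigma_v$ (costing $\tO(\Delta)$ bits) is too expensive. Instead, I would sample each $\sigma_v$ from an $(\epsilon, k)$-wise independent family via \Cref{lem:fast-permutations}, taking $k = \Theta(s)$ and $\epsilon$ suitably small. Each such $\sigma_v$ is described by $\tO(s) = \tO(\sqrt{\Delta})$ bits, for a total of $\tO(n\sqrt{\Delta})$ oracle bits; the deferred-decisions/anti-concentration step above goes through under this limited independence, following essentially the template used in \Cref{lem:online-bpt-det-va-edgecol}.
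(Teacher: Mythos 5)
There is a genuine gap, and it sits exactly at the step your sketch labels ``the crux.'' Your algorithm picks an \emph{arbitrary} color from $F_x \cap F_y$, and then tries to argue via deferred decisions that, conditioned on the transcript, the current block of each $\sigma_v$ is still an almost-uniform $s$-subset, so anti-concentration gives $F_x \cap F_y \neq \emptyset$. This does not work: once the choice rule is arbitrary, the only randomness left is in the permutations, which the transcript has partially revealed, and the near-uniformity claim is not justified — the removed positions of the \emph{current} block are exposed, and which colors get removed is dictated by the (worst-case) choice rule, not by fresh randomness. Quantitatively, a block at a vertex can absorb up to $s\Delta/C = \Theta(\sqrt{\Delta\log n})$ removals before it is refreshed, while the expected intersection of two current blocks is only $\Theta(s^2/C) = \Theta(\log n)$; so unless the removals are provably spread out over $[C]$, the removals at $x$ and $y$ since their last refresh can entirely wipe out $D_{\{x,y\}}$, the intersection of their current blocks. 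The paper itself makes this point implicitly: with arbitrary choices, \Cref{lem:ea-partial-coloring} only guarantees that a $1/3$ fraction of edges get colored, which is precisely why the deterministic algorithm of \Cref{thm:online-det-ea-formal} needs $O(\log\Delta)$ cascaded layers and pays extra $\log$ factors in the number of colors.

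The paper's proof of \Cref{thm:online-rand-ea-formal} instead makes the per-edge color choice \emph{uniformly random} from $F_x \cap F_y$, and this randomness is the engine of the argument, not a cosmetic detail. The analysis first conditions on high-probability events defined purely in terms of the permutations (every block-pair intersection $D_{\{u,v\}}$ has size $\ge \tfrac12 s^2/C$, and the overlap of $D_{\{u,v\}}$ with the blocks of the recent neighbors $M_{u,v}$ is small), and then runs an induction/martingale argument over the independent uniform color choices: since each earlier color at a neighbor was drawn uniformly from a set of size $\ge s^2/(4C)$, the indicator $W_{x,u}$ that it landed in $D_{\{u,v\}}$ has small expectation, and an Azuma/Chernoff bound shows the total number of colors deleted from $D_{\{u,v\}}$ stays below $s^2/(4C)$, preserving the invariant $|F_u \cap F_v| \ge \tfrac14 s^2/C > 0$. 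Your proposal is missing this mechanism entirely, and without it the per-edge nonemptiness claim (and hence the whole theorem, including your stronger adaptive-adversary assertion, which the paper does not even claim for this algorithm) is unsupported. The peripheral parts of your sketch — the space accounting, the $C/s$ block-capacity check, and using \Cref{lem:fast-permutations} with $(\epsilon,s)$-wise independent permutations to stay within $\tO(n\sqrt{\Delta})$ oracle bits — do match the paper.
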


% Similar to: URN4-51
\begin{proof} We will show that \Cref{alg:rand-ea-edgecol} satisfies the claims of the lemma, if $\Delta = \Omega(\log (n/\delta))$. (For smaller values of $\Delta$, fall back to \Cref{alg:simple-greedy}.) In the following argument, we shall assume that the permutations $(\sigma_v)_{v\in S}$ are $s$-wise independent. The pseudocode states $(\epsilon,s)$-wise independence, since that is attainable per \Cref{lem:fast-permutations} using only $O(s \poly(\log 1/\epsilon, \log s))$ bits of randomness per permutation. This will not affect the validity of the proof, since it at most increases the probabilities of events $H_{\{u,v\}}$ and $J_{u,v}$ defined later by $\epsilon$, which is polynomially smaller than the losses in the argument due to bounding the number of events by $n^2$ instead of $\binom{n}{2}$ or $n^2 - n$. We also assume that $\Delta$ is a power of two; if not, we can increase $\Delta$ to the nearest power of two, and the algorithm will still give an $O(\Delta)$ coloring.

  Each color tracker $F_v$ can be stored using $O(\log \Delta)$ bits for $b$, and $O(s)$ bits for $H$; thus 
  \Cref{alg:rand-ea-edgecol}  will use $O(n (s + \log \Delta)) = O(n \sqrt{\Delta \log (n/\delta)})$ bits in total.
  For $\Delta = O(\log (n/\delta))$, the \Cref{alg:simple-greedy} uses $O(n \Delta)$ bits, which is also $O(n \sqrt{\Delta \log (n/\delta)})$.

  \begin{algorithm}[htb!]
    \caption{Randomized algorithm for $O(\Delta)$ edge coloring for simple graph edge arrival streams\label{alg:rand-ea-edgecol}}
    \begin{algorithmic}[1]
    \Statex \textbf{Input}: Stream of vertex arrivals $n$-vertex graph $G=(A \sqcup B, E)$
    \Statex Assume $\Delta$ is a power of two, and $\Delta = \Omega(\log(n/\delta))$
      \Statex
      \Statex \underline{\textbf{Initialize}:}
      \State Let $C = 128 \Delta$
      \State Let $s$ be the least power of two which is $\ge 128 \sqrt{\Delta \log (n/\delta)}$
      \State Let $\cH$ be an ($\epsilon,s$)-wise independent distribution of permutations on $[C]$, with $\epsilon \le \exp(-s^2/C) \le (\delta/n)^{128}$
      % Note: construction via switching network needs to be proven: Needs extra work to check this -- i.e, for any gate configuration mapping k elements to k elements, the probability of the d*k gates involved in the configuration is exactly that expected from randomness. Then add up over all configurations.
      \For{$v \in B$}
        \State Let $\sigma_v$ be a random permutation from $\cH$
        \State $F_v \gets \textsc{InitFreeTracker}(C, s, \Delta, \sigma_v)$, without reference count tracking
      \EndFor
    \Statex 
      
    \Statex 
    \Statex\underline{\textbf{Process}(edge $\{x,y\}$) $\rightarrow$ \textbf{color}}
      \If{$F_x \cap F_y = \emptyset$}
        \State abort
      \EndIf
      
      \State Let $c$ be chosen uniformly at random from $F_x \cap F_y$.\label{step:rand-ea-edgecol-pick-color}
      \State $F_x.\textsc{RemoveAndUpdate(c)}$
      \State $F_y.\textsc{RemoveAndUpdate(c)}$
      \State \textbf{return} color $c$
    \end{algorithmic}
  \end{algorithm}
  
  For each $v \in V$, $i \in [C/s]$, write $P_{v,i}$ for the set $\sigma[[s] + (i -1) s]$ of the free region tracker $F_v$ for vertex $v$. (See \Cref{alg:ds-rand-blocks}.) Since we are assuming the $\sigma_v$ are $s$-wise independent, the set $P_{v,i}$ will be uniformly distributed over over $\binom{[C]}{s}$.
  
  Consider a fixed input stream $e_1,e_2,\ldots$, where the edges of the stream together form the simple graph $G$. Write $b_{x,\{u,v\}}$ for the value of the counter $b$ inside $F_x$ just before the algorithm processed edge $\{u,v\}$. Let $D_{\{u,v\}} := P_{u, b_{u,\{u,v\}}} \cap P_{v, b_{v,\{u,v\}}}$. Also define $M_{u,v} := \{ x : \{x,u\} \in G \land b_{u,\{x,u\}} = b_{u,\{u,v\}} \land \{x,u\} \prec \{u,v\}\}$; this is the set of vertices which were adjacent to $u$, for which the edge $\{x,u\}$ was added before $\{u,v\}$, and while the value of the counter $b$ inside $F_u$ for vertex $u$ was the same as it was at the time $\{u,v\}$ was added. This is the set of vertices whose color
  choices \emph{might} reduce the size of $F_u \cap F_v$ at the time $\{u,v\}$ is added. 
  
  We will first show that of the following two classes of $m$ events, the probability that any of the events is true is $\le \delta / 2$.
  \begin{align*}
    \forall \{u,v\} \in G: H_{\{u,v\}} & := \left\{ D_{\{u,v\}} \le \frac{1}{2} s^2/C \right\} \\
    \forall (u,v) \text{ where } \{u,v\} \in G: J_{u,v} & := \left\{ \sum_{x \in M_{u,v}} |D_{\{u,v\}} \cap P_{x,b_{x,\{x,u\}}}| \ge  2 \cdot \frac{3}{2} \frac{s^3}{C^2} \frac{\Delta s}{C} \right\} \\
  \end{align*}
  
  To bound the probability of $H_{\{u,v\}}$, we let $X_1,\ldots,X_C$ be indicator random variables where $X_i = 1$ iff $i \in P_{u, b_{u,\{u,v\}}}$.  Since the $X_i$ are negatively associated\cite{JoagDevP83}, the proof of the Chernoff bound holds, and 
  \begin{align*}
    \Pr[ |D_{\{u,v\}}| \le \frac{1}{2} s^2/C ] &= \Pr[ D_{\{u,v\}} \le \frac{1}{2} \EE[D_{\{u,v\}}] ] \\
       &= \Pr[ \sum_{i \in [P_{v, b_{v,\{u,v\}}}]} X_i \le \frac{1}{2} \EE[D_{\{u,v\}}] ] \\
       &\le \exp( - \frac{1}{8} \EE[D_{\{u,v\}}])= \exp(- \frac{s^2}{8 C} )  \\
       &\le \exp(- 16 \log\frac{n}{\delta}) \le \frac{\delta}{2 n^2}
  \end{align*}
  
  To bound the probability of the events $\{J_{u,v}\}$, we will show that $|D_{u,v}|$ is not too large w.h.p, and conditioned on that, the sum $\sum_{x \in M_{u,v}} |D_{\{u,v\}} \cap P_{x,b_{x,\{x,u\}}}|$ is not too large w.h.p. With $\{X_i\}_{i\in[C]}$ as defined above:
  \begin{align*}
    \Pr[ |D_{\{u,v\}}| \ge \frac{3}{2} s^2/C ] &= \Pr[ D_{\{u,v\}} \ge \frac{3}{2} \EE[D_{\{u,v\}}] ] \\
       &\le  \Pr[ \sum_{i \in [P_{v, b_{v,\{u,v\}}}]} X_i \ge \frac{3}{2} \EE[D_{\{u,v\}}] ] \\
       &\le \exp( - \frac{1}{10} \EE[D_{\{u,v\}}]) = \exp(- \frac{s^2}{10 C} ) \\
       &\le \exp(- \frac{128}{10} \log\frac{n}{\delta})\le \frac{\delta}{4 n^2}
  \end{align*}
  The permutations $\{\sigma_x\}_{x \in M_{u,v}}$ are independent of $\sigma_u$ and $\sigma_v$. For each $x \in M_{u,v}$, let $Y_{1,x},\ldots,Y_{C,x}$ be indicator random variables
  where $Y_{i,x}$ is 1 iff $i \in P_{x, b_{x,\{x,u\}}}$, and zero otherwise. Due to the frequency of free color buffer refreshing, $|M_{u,v}| \le s \Delta / C$; and since $|P_{x, b_{x,\{x,u\}}}| = s$, $\EE Y_{i,x} = s/C$. Since the $\{Y_{i,x}\}_{i \in D_{\{u,v\}}, x \in M_{u,v}}$ are negatively associated, we can apply a Chernoff bound. If we assume that $|D_{\{u,v\}}| \le \frac{3 s^2}{2 C}$, then we have:
  \begin{align*}
    \Pr\left[ \sum_{x \in M_{u,v}} |D_{\{u,v\}} \cap P_{x,b_{x,\{x,u\}}}| \ge 2 \frac{\Delta s^2}{C^2} \frac{3 s^2}{2 C}\right]
      &= \Pr\left[ \sum_{i \in D_{\{u,v\}}, x \in M_{u,v}} Y_{i,x} \ge 2 \frac{\Delta s^2}{C^2} \frac{3 s^2}{2 C} \right] \\
      &\le \exp\left( - \frac{1}{8} \frac{\Delta s^2}{C^2} \frac{3 s^2}{2 C} \right) \\
      &\le \exp(- 12 (\log(n/\delta))^2) \le \frac{\delta}{4 n^2}
  \end{align*}
  Thus, the probability that either $|D_{\{u,v\}}| \ge \frac{3 s^2}{2 C}$ or event $J_{u,v}$ does not hold is $\frac{\delta}{2 n^2}$.
    
  For the rest of the proof, we will consider the case where none of the events $J_{u,v}$ or $H_{\{u,v\}}$ holds; this happens with probability $\ge 1 - \delta/2$. Fix values of the $(\sigma_v)_{v\in V}$ satisfying none of the events. The only other random decisions made
  by the algorithm are the choices made on Line \ref{step:rand-ea-edgecol-pick-color}, randomly choosing the edge color $\chi_{\{u,v\}}$ for $\{u,v\}$ from $F_u \cap F_v$. We will prove by induction on
  the number of edges processed that the probability of $|F_u \cap F_v| \le \frac{1}{4} s^2 / C$ holding at the time Line \ref{step:rand-ea-edgecol-pick-color} is executed, in total over all $t$ edges so far is, $\le \delta \cdot t / (2n^2)$.
  
  To do this, we will use the following lower bound:
  \begin{align}
    |F_u \cap F_v| \ge |D_{\{u,v\}}| - \sum_{x \in M_{u,v}} W_{x,u} - \sum_{x \in M_{u,v}} W_{x,v} \label{eq:fuv-lb}
  \end{align}
  Here $W_{x,u}$ is the indicator random variable for the event that the color chosen for
  $\{x,u\}$ was in $D_{\{u,v\}}$. The lower bound overcounts the number of colors in $D_{\{u,v\}}$ that have been removed from $F_u \cap F_v$.
  
  The base case of the induction (0 edges) is immediate. Assume that we are processing
  edge $\{u,v\}$, and that all edges $\{x,y\}$ earlier in the stream, when they were processed,
  had $|F_x \cap F_y| \ge \frac{s^2}{4 C}$. For each $x \in M_{u,v}$, the color $\chi_{\{x,u\}}$ was drawn uniformly at random from \emph{some} set $F_x \cap F_u$, which we assume satisfies $|F_x \cap F_u| \ge \frac{s^2}{4 C}$. For any subset $H$ of $D_{\{x,u\}}$ of size $\frac{s^2}{4 C}$, if $\hat{\chi}$ is chosen u.a.r. from $H$, then 
  \begin{align*}
    \Pr[\hat{\chi} \in D_{\{u,v\}}] \le \frac{|H \cap D_{\{u,v\}}|}{|H|} \le \frac{4 C}{s^2} |P_{x,b_{x,\{x,u\}}} \cap D_{\{u,v\}}|
  \end{align*}
  Conditioned on the color choices of all earlier edges, we thus have $\EE W_{x,u} \le \frac{4 C}{s^2} |P_{x,b_{x,\{x,u\}}} \cap D_{\{u,v\}}|$. Thus
  \begin{align*}
    \EE\left[\sum_{x \in M_{u,v}} W_{x,u} + \sum_{x \in M_{v,u}} W_{x,v}\right] &\le \frac{4 C}{s^2} \left(\sum_{x \in M_{u,v}}  |P_{x,b_{x,\{x,u\}}} \cap D_{\{u,v\}}| + \sum_{x \in M_{v,u}}  |P_{x,b_{x,\{x,v\}}} \cap D_{\{u,v\}}| \right) \\
      &\le \frac{4 C}{s^2} \cdot 2 \frac{\Delta s^2}{C^2} \frac{3 s^2}{2 C} = 12 \frac{\Delta}{C} \frac{s^2}{C} < \frac{s^2}{8 C} & \text{since $\Delta \le C / 128$}
  \end{align*}
  
  By the multiplicative/Chernoff-like formulation of Azuma's inequality,
  \begin{align*}
    \Pr&\left[\sum_{x \in M_{u,v}} W_{x,u} + \sum_{x \in M_{v,u}} W_{x,v} \ge \frac{s^2}{4 C}\right] \\
      &\le \exp( -\frac{1}{3} \frac{s^2}{4 C}) \le \exp\left( - \frac{32}{3} \log\frac{n}{\delta} \right) \le \frac{\delta}{2 n^2}
  \end{align*}
  By a union bound over all edges, the probability that any edge $\{u,v\}$ has $|F_u \cap F_v| \le \frac{s^2}{4 C}$ is $\le \delta / 2$.
  
  We have shown that, in total, the probability of the algorithm aborting because
  $F_u \cap F_v = \emptyset$ is $\le \delta$.
\end{proof}

Algorithm \ref{alg:rand-ea-edgecol} can be generalized to produce $O(\Delta^2 / t)$ edge colorings using $\tO(n \sqrt{t})$ bits of space, by increasing the parameters $C$ and $s$ while ensuring that $s^2 / C = \Omega(\log (n/\delta))$. Then as at most $s \Delta / C$ colors are removed from each free color tracker, it will be possible to store each free color tracker using $\tO(s \Delta / C)$ bits of space. However, further adjustment would be necessary to make the algorithm use $\tO(n \sqrt{t})$ random bits. We suspect that picking $(\epsilon,O(s^2/C))$-wise independent distributions will be sufficient. As proving this would be tedious, and the following \Cref{thm:online-det-ea-edgecol} already provides a color-space tradeoff for the edge arrival setting, we do not do so.

We now introduce a technical lemma which will be useful in the proof of \Cref{thm:online-det-ea-edgecol}

\begin{lemma}\label{lem:offline-ea-coloring}
  Let $V$ be a set of size $n$, $\delta \in (0,1)$, and let $\Delta$ be a power of two, satisfying $\Delta \ge 256 \log \frac{n}{\delta}$. Define $C = 32 \Delta$, and let $s$ be the least power of two which is $\ge 512 \sqrt{\Delta \log \frac{n}{\delta}}$. Let $(\sigma_v)_{v \in V}$ be randomly chosen permutations from an $(\epsilon, s)$-wise independent family, where $\epsilon \le \exp(- s^2/C) \le (\delta/n)^{1024}$. For $i \in [C / s]$, $v \in V$, let $P_{v,i} := \sigma[s (i-1) + [s]]$.
  
  We say that the permutations $(\sigma_v)_{v \in V}$ are \emph{good} if, for all simple graphs $H$ on $V \times [C/s]$ for which, for any $u,w\in V$ and $i \in [C/s]$, there is at most one $j$ for which edge $\{(u,i),(v,j)\}$ is in $H$, and the max degree of $H$ is $\le s \Delta / C$; that the graph $H$ can be list-edge colored where edge $\{(u,i),(v,j)\}$ may only use colors in $P_{u,i} \cap P_{v,j}$.
  
  The probability that the $(\sigma_v)_{v \in V}$ are \emph{good} is $\ge 1 - \delta$.
\end{lemma}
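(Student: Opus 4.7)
The plan is to establish \emph{goodness} by adapting the Hall-condition-plus-anti-concentration strategy from the proof of Lemma~\ref{lem:online-bpt-det-va-edgecol} to the offline setting. At a high level, I would show that, with probability $\ge 1-\delta$ over $(\sigma_v)_{v\in V}$, a suitable local Hall-type inequality holds at every $(u,i) \in V \times [C/s]$ for every admissible local configuration of edges in $H$; it then follows that any admissible $H$ can be list-edge colored by an iterative single-vertex matching procedure. The key benefit of this reduction is that the union bound runs only over local configurations, not over the exponentially many global graphs $H$.

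Concretely, I would fix any order $\pi$ on $V \times [C/s]$ and process its elements sequentially. When $(u,i)$ is processed, I assign colors to every edge $e = \{(u,i),(v,j)\} \in H$ whose partner $(v,j)$ appears later in $\pi$; the edges to earlier partners are already colored and contribute their colors to a ``used'' set $U_{u,i} \subseteq P_{u,i}$. By Hall's theorem, the required matching exists iff for every subset $S$ of the remaining edges at $(u,i)$,
\begin{equation}\label{eq:proof-plan-hall}
\Bigl|\bigcup_{\{(u,i),(v,j)\} \in S} (P_{u,i} \cap P_{v,j}) \setminus U_{u,i} \Bigr| \ge |S|.
\end{equation}

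For a fixed local configuration---consisting of $(u,i)$, a set $\{(v_1,j_1),\ldots,(v_k,j_k)\}$ of partner endpoints, and a set $U_{u,i} \subseteq P_{u,i}$ of already-used colors---I bound the probability that \eqref{eq:proof-plan-hall} fails for some $S$ by partitioning the partners into levels, in direct analogy with the proof of Lemma~\ref{lem:online-bpt-det-va-edgecol}: Lemma~\ref{lem:approx-indep-large-anticoncentration} is invoked for levels where $|P_{u,i} \cap P_{v,j}|$ is large and Lemma~\ref{lem:approx-indep-small-anticoncentration} for the smallest level. Each per-configuration failure probability is at most $\exp(-\Omega(sw))$, where $w$ is the size of the dominant level; the $(\epsilon,s)$-wise independence with $\epsilon \le \exp(-s^2/C)$ contributes negligibly. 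Union-bounding over the $\le nC/s$ choices of $(u,i)$, the $\le (nC/s)^k$ choices of partner endpoints, and the $\le 2^s$ choices of $U_{u,i}$, and using $s = \Omega(\sqrt{\Delta \log (n/\delta)})$ and $\Delta \ge 256\log(n/\delta)$, gives a total failure probability $\le \delta$.

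The main obstacle will be adapting the level decomposition. In Lemma~\ref{lem:online-bpt-det-va-edgecol}, levels were keyed on the edge multiplicity $d_y$ at a partner vertex, whereas here every partner set $P_{v,j}$ has the same cardinality $s$, and the variation among edges comes from the random intersections $P_{u,i} \cap P_{v,j}$ and from how constrained each partner is by colors used at it earlier. The delicate step will be identifying the right ``level'' parameter---plausibly the degree of $(v,j)$ in $H$ itself, or the size of $U_{v,j}$ at the moment $(v,j)$ contributes to $U_{u,i}$---so that the geometric-decay dominance argument from the end of Lemma~\ref{lem:online-bpt-det-va-edgecol} still goes through. Once the level structure is fixed, the rest of the calculation should follow that proof's template essentially verbatim.
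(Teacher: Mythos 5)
There is a genuine gap, and it sits exactly where your sketch defers the difficulty. Your displayed Hall condition removes only $U_{u,i}$, the colors already used at the endpoint currently being processed, but says nothing about the colors already used at the partner endpoints $(v,j)$: in your processing order, when $(u,i)$ colors its edges to later partners, each such partner $(v,j)$ may already carry colors from edges to vertices processed earlier, and those must also be excluded from the list $P_{u,i}\cap P_{v,j}$. This omission is fatal rather than cosmetic, because a vertex of $H$ may use up to $s\Delta/C$ colors over the whole process while a single intersection $P_{v,j}\cap P_{u,i}$ has size only about $s^2/C$ (these are comparable, and for large $\Delta$ the degree is far larger); since the matchings at earlier steps are chosen \emph{arbitrarily}, nothing in your argument prevents all colors used at a partner from landing inside one particular intersection and annihilating an edge's list. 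The paper's proof is organized precisely to prevent this: it maintains the invariant (invariant W) that at most a $1/3$ fraction of each intersection with a not-yet-processed vertex is ever consumed, and it maintains that invariant by choosing at each step, via the probabilistic method, a forbidden set $F\subseteq P_{v_t}$ satisfying Eqs.~\ref{eq:F-backward} and \ref{eq:F-forward}, so that colors assigned now avoid $F$ (protecting all future intersections) while each current neighbor still retains a $1/10$ fraction of its intersection. Your proposal contains no analogue of this mechanism.

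The second problem is the probabilistic statement itself. Even if you corrected the Hall condition, your plan of union-bounding over local configurations that include the used sets cannot close numerically: already the $2^s$ choices of $U_{u,i}$ (or $\binom{s}{s\Delta/C}=\exp(\Theta(s))$ if you restrict to realizable sizes) overwhelm the $\exp(-sw/2^9)$-type bounds of \Cref{lem:approx-indep-large-anticoncentration} and \Cref{lem:approx-indep-small-anticoncentration}; the corresponding step in \Cref{lem:online-bpt-det-va-edgecol} only survives because $Q_y$ is refreshed after $s/2^{17}$ deletions, capping the per-vertex configuration count at $\exp(s/2^{11})$, and there is no such refresh structure in the offline setting. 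The paper avoids the issue by proving a history-free property of the permutations (Property U): the pairwise lower bound $|P_{u,i}\cap P_{v,j}|\ge s^2/(2C)$ together with a strengthened Hall-type condition quantified only over tuples $(u,i,S,T)$ with $T\subseteq P_{u,i}$, $|T|=|S|-1\le s\Delta/C-1$, whose per-configuration failure probability $2^{|S|}\exp(-|S|s^2/(400C))$ beats that union bound; the used colors never enter the probabilistic analysis because the deterministic coloring stage (invariant W plus the set $F$) reduces Hall's condition at each step to Property U. So the "delicate step" you flag — finding the right level parameter to transplant the vertex-arrival decomposition — is not the actual missing piece; what is missing is the two-stage structure (a coloring-history-independent permutation property, plus an invariant-maintaining coloring procedure), without which the approach as proposed does not yield the lemma.
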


\begin{proof}[Proof of \Cref{lem:offline-ea-coloring}]
  We will prove this in two steps. First, define a specific property U 
  that the $(\sigma_v)_{v\in V}$ should satisfy with probability $\ge 1 - \delta$; second, prove that if this property holds, then any graph $H$ can be colored.
  
  The permutations $(\sigma_v)_{v\in V}$ satisfy property U if:
  \begin{itemize}
    \item For all pairs $(u,i),(v,j) \in V \times [C/s]$, with $u \ne v$, we have $|P_{u,i} \cap P_{v,i}| \ge \frac{s^2}{2 C}$.
    \item For each $(u,i) \in V \times [C/s]$, $S \subseteq (V \setminus \{s\}) \times [C/s]$ where $|S| \le s \Delta / C$ and $S$ includes no two vertices $(v,i),(u,j)$ with $v = u$, and all $T \in \binom{P_{u,i}}{|S| - 1}$, there exists some $(x,j) \in S$ for which $|P_{x,j} \cap T| < \frac{1}{10} |P_{x,j} \cap P_{u,i}|$. (This is, in effect, a stronger version of Hall's condition).
  \end{itemize}
  
  For the first part of property $U$, it is straightforward to bound the probability that it does not hold. For a given pair $(u,i),(v,j) \in V \times [C/s]$, $u\ne v$, because the permutations are $(\epsilon,s)$-wise independent, the sets $P_{u,i}$ and $P_{v,j}$ are within $\epsilon$-total-variation distance of being uniformly random subsets of $[C]$ of size $s$, we can apply a Chernoff bound for the number of elements in $P_{u,i}$ that lie in $P_{v,j}$:
  \begin{align}
    \Pr[ P_{u,i} \cap P_{v,j} \le \frac{1}{2} \frac{s^2}{C} ] \le \exp( -\frac{1}{8}  \frac{s^2}{C} ) + \epsilon \le \exp( - 2^{10} \log (n/\delta)) + \epsilon \le \frac{\delta}{2 n^2} \label{eq:puv-size-lb}
  \end{align}
  (The additive factor $\epsilon$ accounts for the maximum difference in probabilities for this event between the case where $P_{u,i}$ is exactly uniform and the case where it is $\epsilon$-far from such.)
  
  For the second part, consider a specific combination $(u,i,S,T)$, and fix $P_{u,i}$. Then the probability that this combination violates property U is:
  \begin{align}
    \Pr\left[ \bigwedge_{(x,j) \in S} \left\{ |P_{x,j} \cap T| \ge \frac{1}{10} |P_{x,j} \cap P_{u,i}\right\} \right] \le \prod_{(x,j) \in S} \Pr_{P_{x,j}}\left[ |P_{x,j} \cap T| \ge \frac{1}{10} |P_{x,j} \cap P_{u,i} | \right] \label{eq:superhall-specific}
  \end{align}
  since the $P_{x,j} \in S$ are all independent, since $S$ contains at most one entry for each $v \in V$. Since $P_{x,j}$ is a uniformly random subset $[C]$,
  $P_{x,j} \cap P_{u,i}$ is symmetrically distributed over $P_{u,i}$. Now let $\hat{T}$ be a uniformly random element of $\binom{P_{u,i}}{s}$, and define indicator random variables $\{Y_k\}_{k \in P_{u,i}}$ so that $Y_k = 1$ iff $k \in \hat{T}$; these are negatively associated and $\EE[Y_k] = \frac{|S| - 1}{s}$. Thus, if we assume $|P_{x,j} \cap P_{u,i}| = h$:
  \begin{align*}
    \Pr_{P_{x,j}}&\left[ |P_{x,j} \cap T| \ge \frac{1}{10} |P_{x,j} \cap P_{u,i} |  \Big| |P_{x,j} \cap P_{u,i}| = h\right] \\
      &\le \Pr_{\hat{T}}\left[ |P_{x,j} \cap \hat{T}| \ge \frac{1}{10} h \Big| |P_{x,j} \cap P_{u,i}| = h \right] + \epsilon \\
      &= \Pr_{\{Y_k\}_{k \in P_{u,i}}}\left[ \sum_{k \in P_{u,i}} Y_k \ge \frac{1}{10} h \Big| |P_{x,j} \cap P_{u,i}| = h | \right] + \epsilon \\
      &\le \exp\left( -2 \left(\frac{1}{10} - \frac{|S| - 1}{s} \right)^2 h \right) + \epsilon \\
      &\le \exp\left( -2 \left(\frac{1}{10} - \frac{\Delta}{C} \right)^2 h \right) + \epsilon \le \exp\left( - h / 200 \right) + \epsilon  \qquad \text{since $C = 32 \Delta$}
  \end{align*}
  This bound is useful only if $h$ is large enough. By the law of total probability, and using the bound from Eq. \ref{eq:puv-size-lb} to handle the case where $h$ is small:
  \begin{align*}
    \Pr_{P_{x,j}}&\left[ |P_{x,j} \cap T| \ge \frac{1}{10} |P_{x,j} \cap P_{u,i} | \right] \\
      &\le \Pr_{P_{x,j}}\left[ |P_{x,j} \cap T| \ge \frac{1}{10} |P_{x,j} \cap P_{u,i} | \Big| |P_{x,j} \cap P_{u,i}| \ge \frac{s^2}{2 C} \right]\Pr\left[|P_{x,j} \cap P_{u,i}| \ge \frac{s^2}{2 C}\right]
        + \Pr\left[|P_{x,j} \cap P_{u,i}| \le \frac{s^2}{2 C}\right] \\
        &\le (\exp\left( - \frac{s^2}{400C} \right) + \epsilon)\cdot 1 + (\exp(-\frac{s^2}{8 C}) + \epsilon)
        \le 2 \exp\left( - \frac{s^2}{400C} \right) \,.
  \end{align*}
  
  Substituting this result into Eq. \ref{eq:superhall-specific} gives:
  \begin{align*}
    \Pr\left[ \bigwedge_{(x,j) \in S} \left\{ |P_{x,j} \cap T| \ge \frac{1}{10} |P_{x,j} \cap P_{u,i}\right\} \right] \le 2^{|S|} \exp\left( - |S| \frac{s^2}{400C} \right) \,.
  \end{align*}
  Taking a union bound over all $(u,i,S,T)$ tuples gives:
  \begin{align*}
    \Pr&[\text{second part of Property U fails}] \\
      &\le \sum_{(u,i) \in V \times [C/s]} \sum_{k=1}^{s \Delta / C} \sum_{\text{valid $S$ with $|S| = k$}} \sum_{T \subseteq \binom{P_{u,i}}{k-1}} 2^{k} \exp\left( - k \frac{s^2}{400C}\right) \\
      &\le n \cdot \sum_{k=1}^{s \Delta / C} \cdot \binom{n - 1}{k} \left(\frac{C}{s}\right)^k \cdot \binom{s}{k - 1} \cdot 2^{k} \exp\left( - k \frac{s^2}{400C}\right) \\
      &\le n \sum_{k=1}^{s \Delta / C} \left( 2 n C s \exp\left(- \frac{s^2}{400 C}\right) \right)^k \\
      &\le 2 n \cdot 2 n C s \exp\left(- \frac{s^2}{400 C}\right) \qquad\qquad \text{for large enough $s^2/C$} \\
      & \le 4 n^2 \cdot (32 n)^2 \exp(- \frac{2048}{100} \log(n/\delta)) \le \frac{\delta}{2} \qquad\qquad \text{since $s \le C = 32 \Delta$}
  \end{align*}
  Combining this with a union bound over Eq. \ref{eq:puv-size-lb} implies that property U fails to hold with probability $\le \delta$.
  
  For the second stage of the proof, we consider the following iterative process to color any graph $H$ satisfying the given conditions. Consider an arbitrary ordering $v_1,\ldots,v_n$ of the vertices in $V \times [C/s]$. For a given vertex $v_t$, let $A(v_t)$ be the set of vertices in $\{v_1,\ldots,v_{t-1}\}$ which are adjacent to $(v_t,i)$, and let $B(v_t,i)$ be the set of vertices in 
  $\{v_{t+1},\ldots,v_{n}\}$ which are adjacent to $(v_t,i)$. For each $z \in A(v_t) \cup B(v_t)$, define $U_{t,v_t,z}$ to be the set of colors in $P_{v_t} \cap P_z$ that were already used by edges to vertices in $A(v_t)$ just \emph{after} step $t$. The color assignment chosen will maintain the invariant W that $|U_{t,z,(v_{t'})}| \le \frac{1}{3} |P_z \cap P_{v_{t'}}|$ for all $t' > t$ and $z \in A(v_{t'})$. In other words, that when it is time to color the edges from a future vertex $(v_{t'})$ to $A(v_{t'}$, only a $\le 1/3$ fraction of the initially possible color options will have been used.
  
  Invariant W automatically holds when $t = 0$, since no edges have been colored. Say the invariant holds at time $t - 1$. Then we are guaranteed that $|U_{t-1,z,v_t}| \le \frac{1}{3} |P_z \cap P_{v_t}|$ for all $z \in A(v_t)$, and want to find color assignments for the edges from $A(v_t)$ to $z$ so that $|U_{t,z,v_t}| \le \frac{1}{3} |P_z \cap P_{v_t}|$ for all $z \in B(v_t)$. To do this, we will first pick a set $F \subseteq P_{v_t}$ that satisfies:
  \begin{align}
    \forall x \in A(v_t):&\quad |P_x \cap P_{v_t} \setminus U_{t-1,x,v_t} \setminus F| \ge \frac{1}{10} |P_x \cap P_{v_t}| \label{eq:F-backward} \\
    \forall x \in B(v_t):&\quad |(P_y \cap P_{v_t}) \setminus F| \le \frac{1}{3} |P_y \cap P_{v_t}| \label{eq:F-forward} 
  \end{align}
  That such a set $F$ exists follows by the probabilistic method; say $F$ were chosen so that each element of $P_{v_t}$ is included u.a.r with probability $\frac{7}{10}$. For $i \in P_{v,t}$, let $X_i$ be the indicator random variable for the event that $i \in F$. Then the probability of Eq. \ref{eq:F-backward} is bounded by:
  \begin{align*}
    \Pr&\left[ |(P_x \cap P_{v_t}) \setminus U_{t-1,x,v_t} \setminus F| \ge \frac{1}{10} |P_x \cap P_{v_t}| \right] \\
      &\le   \Pr\left[ |(P_x \cap P_{v_t}) \setminus U_{t-1,x,v_t} \setminus F|      \ge \frac{3}{20} |(P_x \cap P_{v_t}) \setminus U_{t-1,x,v_t}| \right] \quad \text{ since $|(P_x \cap P_{v_t}) \setminus U_{x,v}| \ge \frac{2}{3} |P_x \cap  P_{v_t}|$ } \\
      &\le \Pr\left[ \sum_{i \in (P_x \cap P_{v_t}) \setminus U_{t-1,x,v_t}} X_i \ge \frac{17}{20} |(P_x \cap P_{v_t}) \setminus U_{t-1,x,v_t}| \right] \\
      &\le \exp\left(- \frac{9}{200} |P_x \cap P_{v_t} \setminus U_{t-1,x,v_t}| \right) \qquad \text{by Chernoff bound, since $\EE$ of LHS is $\frac{14}{20} |(P_x \cap P_{v_t}) \setminus U_{t-1,x,v_t}|$} \\
      &\le \exp\left(- \frac{3}{100} |P_x \cap P_{v_t}| \right) \qquad \text{since $|(P_x \cap P_{v_t}) \setminus U_{t-1,x,v_t}| \ge \frac{2}{3} |P_x \cap P_{v_t}|$} \\
      & \le \exp\left(-\frac{3 s^2}{200 C}\right) = \exp\left(- \frac{3072}{25} \log(n/\delta)\right) < \frac{1}{2 n}
  \end{align*}
  And for Eq. \ref{eq:F-forward}:
  \begin{align*}
    \Pr\left[ |(P_x \cap P_{v_t}) \setminus F| \ge \frac{1}{3} |P_x \cap P_{v_t}| \right] 
      &= \Pr\left[ \sum_{i \in (P_x \cap P_{v_t})} X_i \le \frac{2}{3} |P_x \cap P_{v,t}| \right] \\
      &\le \exp\left( -2 \left(\frac{7}{10} - \frac{2}{3}\right)^2 |P_x \cap P_{v,t}| \right) \\
      &\le \exp\left(-\frac{1}{450} |P_y \cap P_v|\right) = \exp\left(-\frac{s^2}{900 C}\right) = \exp\left(- \frac{2048}{225} \log(n/\delta)\right) < \frac{1}{2n}
  \end{align*}
  Applying a union bound for the complements of Eq. \ref{eq:F-forward} and Eq. \ref{eq:F-backward} over all applicable $z$, we find that both conditions hold with positive probability, so a suitable $F$ exists.
  
  Now that $F$ has been chosen, we will select the colors for the edges from $v_t$ to each $x \in A(v_t)$ from the set $(P_x \cap P_{v_t}) \setminus U_{t-1,x,v_t} \setminus F$. Since no colors in $F$ are chosen, for any $z \in B(v_t)$, $U_{t,z,v_t}$ will not contain any element of $F$; thus $|U_{t,z,v_t}| \le |(P_x \cap P_{v_t}) \setminus F| \le \frac{1}{3} |P_x \cap P_{v_t}|$.
  
  Construct the bipartite graph $J$ between $A(v_t)$ and $P_{v_t}$, where $x \in A(v_t)$ has an edge to each of the colors in $(P_x \cap P_{v_t}) \setminus U_{t-1,x,v_t} \setminus F$. We claim there is an an $A(v_t)$-saturating matching $M$ of $J$; given this matching, we assign to edge $\{x,v_t\}$ its matched color $M(x)$. For all $x \in A(v_t)$, we will have $M(x) \in (P_x \cap P_{v_t}) \setminus U_{t-1,x,v_t} \setminus F$; since $M(x) \notin U_{t-1,x,v_t}$, the edge color for $\{x,v_t\}$ will not have been used before by any edge adjacent to $x$.
  As the matching assigns a unique color to each edge, the edge coloring constraint will also be satisfied for $v_t$.
  
  To prove there exists a matching $M$ in $J$, we verify that Hall's condition holds. For any subset $S$ of the vertices in $A(v_t)$, we want to show that
  \begin{align*}
    \left| \bigcup_{x \in S} ((P_x \cap P_{v_t}) \setminus U_{t-1,x,v_t} \setminus F) \right| \ge |S|
  \end{align*}
  The construction of $H$ ensures that $A(v_t)$ and all subsets thereof satisfy the conditions for the second part of Property U (Specifically, $|A(v_t)| \le s \Delta / C$, and that for any $b \in V$ there is at most one $j$ for which $(b,j) \in A(v,t)$.) By this property, we are guaranteed that for all $T \subseteq P_{v_t}$ of size $k-1$, that there exists some $x \in S$ for which $|P_x \cap T| \le \frac{1}{10} |P_x \cap P_{v_t}$. If Hall's condition does not hold for $S$, then there must exist some $T \subseteq P_{v,t}$ for which:
  \begin{align*}
  &\bigcup_{x \in S} ((P_x \cap P_{v_t}) \setminus U_{t-1,x,v_t} \setminus F) \subseteq T \\
  \implies \quad \forall x \in S:\quad & (P_x \cap P_{v_t}) \setminus U_{t-1,x,v_t} \setminus F \subseteq T \\
  \implies \quad \forall x \in S:\quad & (P_x \cap P_{v_t}) \setminus U_{t-1,x,v_t} \setminus F \subseteq T \cap P_x \\
  \implies \quad \forall x \in S:\quad & |(P_x \cap P_{v_t}) \setminus U_{t-1,x,v_t} \setminus F| \le |T \cap P_x| \\
  \implies \quad \forall x \in S: \quad& \frac{1}{10} |P_x \cap P_{v_t}| \le |T \cap P_x| \qquad \text{by Eq. \ref{eq:F-backward}}
  \end{align*}
  But by the second part of Property U, there must exist an $x \in S$ for which $\frac{1}{10} |P_x \cap P_{v_t}| > |T \cap P_x|$; this is a contradiction, so it follows that Hall's condition does hold for $S$. Since Hall's condition holds for all $S \subseteq A(v_t)$, $J$ will contain a matching, and step $t$ will ensure invariant W holds for step $t+1$.

  By induction, it follows that invariant W holds for all $t \in [n]$, and thus that the process to color the edges of the graph will always work.
\end{proof}

At the core of the algorithm used by \Cref{thm:online-det-ea-edgecol} will be an algorithm for partial coloring of input streams. We will prove that this inner algorithm \Cref{alg:det-ea-partial} works
for a specific class of edge arrival streams. These are categorized by a Property Z, which is 
closely linked to the way the free color tracker (\Cref{alg:ds-rand-blocks}) refreshes its
pool of colors.

\begin{definition}  For each edge $\{u,v\}$ that arrives at time $t$,
  let $d_{u,t}$ and $d_{v,t}$ be the degrees of $u$ and $v$ respectively in the multigraph
  formed by all stream edges up to $t$. The edges adjacent to each $x \in V$ are assigned
  to blocks depending on the degree of $x$ after they were added; thus edge $\{x,y\}$ arriving
  at time $t$ is assigned to block number $b_{x,t} := \ceil{d_{x,t} \cdot \frac{C}{s \Delta}}$.
  Note that $b_{x,t} \in [C/s]$. The stream satisfies Property Z when, for all
  $v \in V$, $i \in [C/s]$, and $w \in V \setminus \{v\}$, the stream contains at most
  one edge $\{v,w\}$, added at time $t$, for which $b_{v,t} = i$.
\end{definition}

\begin{algorithm}[htb!]
  \caption{Partial coloring algorithm: $(1/3)$-partial $O(\Delta)$ edge coloring for graph edge arrival streams satisfying Property Z, plus reference counting\label{alg:det-ea-partial}}
  \begin{algorithmic}[1]
  \Statex \textbf{Input}: Stream of edge arrivals $n$-vertex graph $G=(V,E)$. 
    \Statex Assume $\Delta$ is a power of two % (todo: convention? Or let $\hat{\Delta}$ be the next power of two)

    \Statex
    \Statex \underline{\textbf{Initialize($\ell$,$\xi$,$\Delta$,$C$,$(\sigma_v)_{v \in V}$, $s$, $R$):}}
      \Statex Input $\Delta$ is the maximum degree of the input graph stream
      \Statex Input $C$ the number of colors this sketch will use
      \Statex Input $s$ is block size parameter, and $(\sigma_v)_{v \in V}$ are $s$-wise almost independent permutations
      \Statex Input $R$ is a reference counted pool of edges
      \Statex Each edge $e \in R$ will have associated counter $M^\pell_{e} \in [2^\ell]$ and color class $\chi^\pell_{e} \in \{0,\ldots,\ceil{\log_{3/2}{\Delta^\pell}}\} \times [C^\pell]$

      \For{$v \in V$}
        \State $F_v \gets \textsc{InitFreeTracker}(C,s,\Delta,\sigma_v)$. \Comment{Also referred to as: $F^\pellxi_v$}
      \EndFor
    
  \Statex 
  \Statex\underline{\textbf{Process}(edge $\{x,y\}$) $\rightarrow$ \textbf{Option<color>} $\in \{\bot\} \cup [C]$}
    \If{$F_x \cap F_y \ne \emptyset$}\label{step:layer-color-test}
      \State Choose $c \in F_x \cap F_y$ arbitrarily\label{step:layer-color-choice}
      \State $F_x.\textsc{RemoveAndUpdate(c, $\{x,y\}$)}$ \Comment{This will increase $\{x,y\}$'s refcount in $R$}
      \State $F_y.\textsc{RemoveAndUpdate(c, $\{x,y\}$)}$
      \State $\chi^\pell_{\{x,y\}} \gets (\xi,c)$
      \State $M^\pell_{\{x,y\}} \gets 1$
      \State \textbf{return} color $c$
    \Else
      \State \textbf{return} $\bot$
    \EndIf
  \end{algorithmic}
\end{algorithm}
  
\begin{lemma}\label{lem:ea-partial-coloring}
   \Cref{alg:det-ea-partial} properly edge-colors a $\ge 1/3$ fraction of the edges in its input stream, if the permutations it is given are good according to \Cref{lem:offline-ea-coloring}, and the input stream satisfies Property Z. 
\end{lemma}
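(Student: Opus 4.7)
The plan has two parts: verify properness, then establish the $1/3$ fraction. For properness, the argument is immediate: whenever the algorithm returns a color $c$ for $\{x,y\}$, the two \textsc{RemoveAndUpdate} calls delete $c$ from $F_x$ and $F_y$. Because $F_v$ at any moment is contained in the current block $P_{v,\tilde b_v}$ of $v$, and distinct blocks of $v$ correspond to disjoint subsets of $[C]$ (via $\sigma_v$), no later edge incident to $v$ can ever see $c$ again in its candidate set. Hence the output is a proper partial edge coloring of the processed edges.

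For the fraction bound, the plan is to compare the algorithm's trajectory to the offline structure guaranteed by Lemma~\ref{lem:offline-ea-coloring}. Build the ``degree-block graph'' $H^{*}$ on vertex set $V\times[C/s]$ by placing an edge $\{(u,b_{u,t}),(v,b_{v,t})\}$ for each stream edge $\{u,v\}$ arriving at time $t$. By Property Z there is at most one $j$ for each triple $(u,i,w)$, and since each degree block spans $s\Delta/C$ consecutive degrees, each vertex of $H^{*}$ has degree $\le s\Delta/C$. Thus $H^{*}$ satisfies the hypotheses of Lemma~\ref{lem:offline-ea-coloring}, and goodness yields both the lower bound $|P_{u,i}\cap P_{v,j}|\ge s^2/(2C)$ (part 1 of Property U in that lemma's proof) and the strong Hall condition (part 2).

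Now perform a per-block charging. Whenever edge $\{x,y\}$ fails at time $t$, the emptiness of $F_x\cap F_y$ together with the lower bound on $|P_{x,\tilde b_x}\cap P_{y,\tilde b_y}|$ gives $|U_{x,t}|+|U_{y,t}|\ge s^2/(2C)$; charge the failed edge to whichever endpoint has larger $|U|$, so the charged endpoint satisfies $|U|\ge s^2/(4C)$ at that moment. Within a single algorithm block $(v,k)$, the counter $|U_v|$ rises monotonically from $0$ to at most $s\Delta/C$ (at which point the block refreshes), with each unit increase corresponding to a colored edge. The goal is to show that the failed edges charged to block $(v,k)$ number at most twice the colored edges in that block, i.e., at most $2\,s\Delta/C$, yielding the desired $2{:}1$ failed-to-colored ratio and hence a $\ge 1/3$ colored fraction.

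The main obstacle is the last step: a naive argument on $|U_v|$ alone is insufficient, since once $|U_v|\ge s^2/(4C)$ many edges can in principle fail without advancing $|U_v|$. The plan is to resolve this via the strong Hall-type condition from part~2 of Property~U applied within block $(v,k)$: if too many distinct neighbors $w$ (with current blocks $\tilde b_w$) had $P_{v,k}\cap P_{w,\tilde b_w}\subseteq U_v\cup U_w$, then the set of such $(w,\tilde b_w)$ would violate the strengthened Hall condition on $P_{v,k}$, contradicting goodness. This gives the required per-block bound, and summing over all $(v,k)$ yields $\sum_v f_v\le 2\sum_v c_v$, which translates to at least $1/3$ of the stream edges being successfully colored.
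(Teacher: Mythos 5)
Your properness argument and the setup (building the degree-block graph on $V\times[C/s]$, checking via Property Z and the block width $s\Delta/C$ that it meets the hypotheses of \Cref{lem:offline-ea-coloring}) match the paper. But the heart of the lemma --- the $1/3$ fraction --- is not proved in your proposal, and the route you sketch has two genuine problems. First, the hypothesis of the lemma is only that the permutations are \emph{good} in the sense of \Cref{lem:offline-ea-coloring}, i.e.\ that every admissible block graph admits a list edge coloring from the sets $P_{u,i}\cap P_{v,j}$. Goodness does \emph{not} give you ``Property U'' (the intersection lower bound $|P_{u,i}\cap P_{v,j}|\ge s^2/(2C)$ and the strengthened Hall condition); those are internal devices in the \emph{proof} of that lemma, sufficient but not necessary for goodness, so your charging argument is built on facts the hypothesis does not supply. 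Second, even granting Property U, the key step is left as a plan rather than an argument, and it faces concrete obstacles: a failure of $\{v,w\}$ means $P_{v,k}\cap P_{w,j_w}\subseteq U_v\cup U_w$, where the blocking sets $U_w$ differ for every neighbor and live inside $P_{w,j_w}$, whereas the strengthened Hall condition only quantifies over a \emph{single} set $T\subseteq P_{v,k}$ of size $|S|-1$; there is no evident way to package the many sets $U_v\cup U_w$ into one such $T$. Moreover, the algorithm's internal block for $v$ advances only on successful colorings, so it can span several degree-based blocks, the same neighbor $w$ can be charged to it several times (breaking the ``one entry per vertex'' requirement on $S$), and a block in which only about $s^2/(4C)$ edges have been colored could in principle absorb up to order $s\Delta/C$ failures, which is far more than twice the colored count when $\Delta$ is large. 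So the claimed per-block $2{:}1$ ratio is neither proved nor obviously true.

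For comparison, the paper's proof does not attempt any per-block accounting. It uses goodness exactly as stated: it fixes the offline list coloring $\chi$ of the block graph, calls an edge ``good'' at a given time if it is still uncolored and its color $\chi(e)$ has not been used by any incident edge, and observes that (i) every processed edge ceases to be good (a failed edge was already not good, since otherwise $\chi(e)$ would have been available), and (ii) each successfully colored edge destroys the goodness of at most $3$ edges (itself, plus at most one edge per endpoint whose $\chi$-color it stole, by properness of $\chi$). Since all edges start good and none remain, at least $|E|/3$ edges are colored. If you want to salvage your approach, you would either have to strengthen the lemma's hypothesis to Property U and then supply the missing per-block bound, or switch to this potential-function argument, which needs only the colorability guarantee.
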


\begin{proof}[Proof of \Cref{lem:ea-partial-coloring}]
  % todo: for easier explanation, might be better to introduce an inner-inner algorithm, which just does partial coloring?
  Say that the input stream for \Cref{alg:det-ea-partial} satisfies Property Z. To each edge $\{x,y\}$, arriving at time $t$, we can associate a set of possible colors $P_{x,b_{x,t}} \cap P_{y,b_{y,t}}$, where $P_{x,i} := \sigma_x[(i-1)s + [s]$ indicates the $i$th set of colors used by the free color tracker $F_x$. Of course, as the algorithm progresses some of the colors in $P_{x,b_{x,t}} \cap P_{y,b_{y,t}}$ may be used by other edges adjacent to $x$ and $y$; the color trackers
  $F_x$ and $F_y$ precisely record these.
  
  Let $H$ be the simple graph on $V \times [C/s]$ formed
  by mapping each edge $\{x,y\}$ arriving in the input stream at time $t$ to the edge $\{(x,b_{x,t}),(y,b_{y,t})\}$. Because the stream satisfies Property Z, for any $u,i,v$, there is at most one $j$ for which $\{(u,i),(v,j)\}$ is in $H$. Thus, by \Cref{lem:offline-ea-coloring}, with probability $\ge 1 - \delta$ over randomly chosen advice, the permutations $(\sigma_v)_{v \in V}$ are good, and there exists an edge coloring $\chi$ of $H$ where each edge $\{(x,i),(y,j)\}$ is given a color  from $P_{x,i} \cap P_{y,j}$. This implies that, \emph{if}
  the color chosen at Line \ref{step:layer-color-choice} were to exactly match the color from $\chi$ at each step, it would be possible to for the first layer to assign a color to every edge.
  
  However, Line \ref{step:layer-color-choice}, when processing edge $\{x,y\}$ at time $t$, chooses a color arbitrarily from the set of available colors in $P_{x,b_{x,t}} \cap P_{y,b_{y,t}}$. As a result, the algorithm may select colors so that at some point, a given edge has $F_x \cap F_y = \emptyset$ on Line \ref{step:layer-color-test}, and cannot be colored. We claim that nevertheless, it will color a $\ge \frac{1}{3}$ fraction of all input edges. Consider any fixed input graph stream of length $T$ for \Cref{alg:det-ea-partial}, whose edges form multiset $E$. Consider a run of this algorithm on the stream. At each time $t$, let $\rho_t : E \mapsto [C] \cup \{\bot\}$ indicate the partial coloring produced by the stream after $t$ edges were processed. Let $\chi : E \mapsto [C]$ be the coloring produced by  \Cref{lem:offline-ea-coloring}. Call an edge $e$ "good" at time $t$ if $\rho_t(e) = \bot$ and it is possible to assign color $\chi(e)$ to $e$. (In other words, there is no edge $f$ incident on one of $e$'s endpoints for which $\rho_{t-1}(f) = \chi(e)$.) Initially, all edges in $E$ are good. Each time $t$ that the algorithm processes an edge $\{u,v\}$, it will either fail to color the edge, or set $\rho_t(\{u,v\}) = c$ for some color $c$. If $c = \chi(\{u,v\}$, then the
  number of "good" edges will be reduced by 1, because $\chi$ is a valid edge
  coloring. If $c \ne \chi(\{u,v\}$, then the number of "good" edges will be reduced by 3; $\{u,v\}$ will no longer be good, and there are at most two
  edges $f$ that are incident to either $u$ or $v$ and have $\chi(f) = c$. If
  the algorithm fails to color edge $\{u,v\}$, then this means $\chi(\{u,v\})$
  was not available (because some edge incident on $u$ or $v$ used that color);
  so in all cases, after the algorithm processes an edge, it will no longer be "good". Thus, at the end of the stream, there will be no "good" edges remaining;
  and since the number of "good" edges is reduced by at most 3 per edge that was
  colored, the number of colored edges must be at least $|E|/3$.
\end{proof}

Finally, we state and prove the formal version of \Cref{thm:online-det-ea-edgecol}.

\begin{theorem}[Formal version of \Cref{thm:online-det-ea-edgecol}]\label{thm:online-det-ea-formal}
  There is a deterministic algorithm for online $O(\Delta (\log \Delta)^2)$ edge coloring in edge arrival streams for multigraphs, using $O(n \sqrt{\Delta} (\log n)^{2.5} (\log \Delta)^3)$ bits of space, and $\tO(n \sqrt{\Delta})$ bits of advice. (By picking a uniformly random advice string, the same algorithm can alternatively be used as a robust algorithm with $1/\poly(n)$ error; the advice can also be verified and computed in exponential time.)
\end{theorem}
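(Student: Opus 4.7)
The plan is to stack $L = O(\log \Delta)$ outer layers of the partial coloring primitive \Cref{alg:det-ea-partial}, each outer layer built from $K = O(\log \Delta)$ independent sub-layer instances with disjoint $O(\Delta)$-color palettes, for a total of $O(\log^2 \Delta)$ sub-layers and $O(\Delta \log^2 \Delta)$ colors. An incoming edge $\{x,y\}$ is offered to layer $1$, sub-layer $1$; whenever the targeted sub-layer would violate Property Z, or its invocation of \Cref{alg:det-ea-partial} returns $\bot$, the edge is bumped to the next sub-layer, then to the next outer layer, until it receives a color. The per-layer reference-counted pool $R$ and the counter $M^\pell_{\{x,y\}}$ together with the color class $\chi^\pell_{\{x,y\}}$ that already appear in the pseudocode are exactly what is needed to detect would-be Property Z violations cheaply and to know when to cascade upward.

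For the advice, I would, for each $(\ell,\xi) \in [L] \times [K]$, sample an $(\epsilon,s)$-wise almost independent family of permutations via \Cref{lem:fast-permutations} at cost $\tO(n\sqrt{\Delta})$ bits per family. A union bound over the $O(\log^2 \Delta)$ sub-layers combined with \Cref{lem:offline-ea-coloring} then shows that, with probability $\ge 1 - 1/\poly(n)$, all families are simultaneously \emph{good}; fixing any such advice (found in exponential time by exhaustive search, or sampled uniformly for the robust variant) makes the algorithm deterministic.

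The analytical heart is the dispatch argument. Within an outer layer the dispatch rule guarantees that each sub-layer sees a sub-stream satisfying Property Z, by forwarding edges whose placement would violate it, so \Cref{lem:ea-partial-coloring} applies and at least a $1/3$ fraction of its sub-stream gets colored. An amortized tally across the $K = O(\log \Delta)$ sub-layers of a single outer layer shows that the layer colors all but a vanishing fraction of the edges entering it, and iterating $L = O(\log \Delta)$ outer layers accounts for every remaining edge. Summing $\tO(n\sqrt{\Delta})$ per sub-layer together with the polylogarithmic overhead required to store and evaluate the almost-independent permutations of \Cref{lem:fast-permutations} yields the claimed $O(n\sqrt{\Delta}(\log n)^{2.5}(\log \Delta)^{3})$ space bound. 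The main obstacle is precisely this dispatch analysis: certifying that bumping both preserves Property Z on each sub-stream \emph{and} loses sufficiently few edges that the $O(\log^2 \Delta)$ total budget of sub-layers actually suffices. This is where the counter $M^\pell_{\{x,y\}} \in [2^\ell]$ plays a crucial role, as it bounds how many times a single edge can be bumped before its block indices escape the contested region and the edge is guaranteed either to succeed at a sub-layer or to be promoted to a higher outer layer.
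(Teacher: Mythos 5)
Your architecture is essentially the paper's: $O(\log\Delta)$ levels times $O(\log\Delta)$ layers of \Cref{alg:det-ea-partial}, advice consisting of permutations that are \emph{good} in the sense of \Cref{lem:offline-ea-coloring} (sampled via \Cref{lem:fast-permutations}, union bound over instances, exhaustive search for the deterministic variant), and space accounting by summing $\tO(n\sqrt{\Delta})$ per instance. But the dispatch rule you propose -- ``bump on Property Z violation \emph{or} on $\bot$, first through sub-layers, then to the next outer layer, until it receives a color'' -- has a genuine gap, and it is exactly the part you flag as the main obstacle. First, there is no termination guarantee: after the last sub-layer of the last outer layer fails, you have no mechanism left, yet the theorem requires every edge to be colored. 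The paper does \emph{not} promote $\bot$-failures across levels; within a level, after $\ceil{\log_{3/2}\Delta^{(\ell)}}$ layers only a $\le 1/\Delta^{(\ell)}$ fraction of that level's edges remain uncolored, and these ($O(n\log n)$ of them) are stored in a set $D^{(\ell)}$ and greedily colored from a reserved palette. Second, promoting $\bot$-failures upward breaks the invariant that the level-$\ell$ instance sees a stream of maximum degree $\Delta/2^{\ell}$. In the paper that invariant is what makes \Cref{lem:offline-ea-coloring} applicable at level $\ell$ (with $\Delta^{(\ell)}=\Delta/2^{\ell}$, $C^{(\ell)}=32\Delta^{(\ell)}$), and it holds only because an edge reaches level $\ell$ \emph{solely} after $2^{\ell}$ copies of it have arrived since it was last dropped from $R$; ordinary failures of simple (low-multiplicity) edges never climb levels.

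Relatedly, your treatment of multiplicity is underspecified in a way that matters. In the paper, a repeated copy that arrives while $\{x,y\}$ is still in $R$ with $M^{(\ell)}_{\{x,y\}}<2^{\ell}$ is \emph{not} sent to any partial-coloring instance at all: the single color class $\chi^{(\ell)}_{\{x,y\}}$ won at level $\ell$ is expanded into a block of $2^{\ell}$ concrete colors, $(j-1)2^{\ell}+M^{(\ell)}_{\{x,y\}}$, and copies consume that block until it is full, at which point the edge is promoted. This geometric block expansion is what lets the counter $M^{(\ell)}\in[2^{\ell}]$ simultaneously (i) bound the number of levels by $\log\Delta$, (ii) enforce the degree bound $\Delta/2^{\ell}$ at level $\ell$, and (iii) keep Property Z for the sub-stream each instance actually receives. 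If instead each Property-Z-violating copy is individually ``bumped to the next sub-layer'' and colored by a fresh instance, an edge of multiplicity $\Theta(\Delta)$ needs far more than $O(\log^2\Delta)$ instance slots, and your $O(\Delta\log^2\Delta)$ color budget is not obviously sufficient. So the missing ideas are concrete: the per-level color-block expansion by a factor $2^{\ell}$ (with repeated copies colored directly from the block), and the $D^{(\ell)}$ greedy fallback for the residual $1/\Delta^{(\ell)}$ fraction, rather than cross-level promotion of $\bot$-failures.
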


% URN4-52, URN4-53, and URN4-57
\begin{proof} We claim that \Cref{alg:det-ea-edgecol} satisfies the conditions of the theorem. This algorithm uses $O(n \sqrt{\Delta})$ bits of advice, for which we do not know of an efficient polynomial time construction. $\delta \in (0,1)$ is a parameter which gives, if the advice string is chosen uniformly at random, an upper bound on the probability that the advice string does not work for all possible inputs. If we
ran this algorithm with a random advice string, it would be robust to adversarially generated inputs, with failure probability $\le \delta$.
  
  \Cref{alg:det-ea-edgecol} runs $O(\log \Delta)$ instances of an inner algorithm, \Cref{alg:det-ea-inner}, which is designed to give correct edge colorings
  for graph streams with a specific low-repetition guarantee: that for any vertex $v \in V$, if one considers the sequence of edges adjacent to $v$ in stream order,
  and splits them into $O(\sqrt{\Delta / \log n})$ contiguous lists, that no lists
  will include a given edge more than once. To handle edges that are more
  commonly repeated, \Cref{alg:det-ea-edgecol} will keep track of all edges which,
  if added, might violate the guarantee, and send them to another instance of
  \Cref{alg:det-ea-inner} which handles graph streams where no contiguous lists
  of edges adjacent to a vertex includes a given edge twice; and if an edge
  in the stream might violate that condition, the algorithm sends it to another
  copy of \Cref{alg:det-ea-inner}, and so on.

  \begin{algorithm}[ht!]
    \caption{Inner algorithm:  $O(\Delta \log \Delta)$ edge coloring for graph edge arrival streams which have certain substreams satisfying Property Z, plus reference counting\label{alg:det-ea-inner}}
    \begin{algorithmic}[1]
    \Statex \textbf{Input}: Stream of edge arrivals $n$-vertex graph $G=(V,E)$. 
      \Statex Assume $\Delta$ is a power of two
      \Statex Superscript $\cdot^\pell$ indicates \emph{level} of algorithm
      
      \Statex
      \Statex \underline{\textbf{Initialize($\ell$, $\Delta^\pell$, $R$)}:}
        \Statex Input $\Delta^\pell$ is the maximum degree of the input graph
        \Statex Let $C^\pell = 32 \Delta^\pell$
        \State $D^\pell \gets \emptyset$ be a set of $O(n \log n)$ "overflow" edges % or technically edge class?
        
        \If{ $\Delta^\pell \ge 256 \log(n/\delta))$}
          \Comment{Condition for \Cref{lem:offline-ea-coloring} to apply}
          \State Let $s^\pell$ satisfy constraints of \Cref{lem:offline-ea-coloring}
          \State Advice: $\{\sigma_v^\pell\}_{v \in V}$ are permutations over $[C^\pell]$, "good" for \Cref{lem:offline-ea-coloring}
          
          \For{each \emph{layer} $\xi \in [\ceil{\log_{3/2}{\Delta^\pell}}]$}
            \State $\mathfrak{I}^\pellxi \gets$ \textsc{Initialize}($\ell$,$\xi$,$\Delta$,$C$,$s$,$\{\sigma_v^\pell\}_{v\in V}$,$R$) from \Cref{alg:det-ea-partial}
          \EndFor
        \EndIf
      
    \Statex 
    \Statex\underline{\textbf{Process}(edge $\{x,y\}$) $\rightarrow$ \textbf{color} $\in \left[\ceil{\log_{3/2}{\Delta^\pell}}\right] \times \left[C^\pell\right]$}
        \If{$\Delta^\pell \ge 256 \log(n/\delta))$}
          \For{$\xi \in \left[\ceil{\log_{3/2}{\Delta^\pell}}\right]$}\label{step:loop-over-layers}
            \State Let $c \gets \mathfrak{I}^\pellxi.\textsc{Process}(\{x,y\})$ from \Cref{alg:det-ea-partial}
            \If{$c \ne \bot$}
              \State \textbf{return} color $(\xi, c)$
            \EndIf
          \EndFor
        \EndIf
        \State $D^\pell \gets D^\pell \cup \{\{x,y\}\}$\label{step:record-unpartialcolored-edge}
        \State Increase reference count for $\{x,y\}$ in $R$
        \State Greedily pick a color class $c \in [C^\pell]$ not used by any edge in $D^\pell$ adjacent to $x$ or $y$
        \State $\chi^\pell_{\{x,y\}} \gets (0,c)$
        \State $M^\pell_{\{x,y\}} \gets 1$
        \State \textbf{return} color $(0,  c)$
    \end{algorithmic}
  \end{algorithm}
  
  We will first prove that the inner algorithm, \Cref{alg:det-ea-inner}, works.
  This algorithm runs a number of instances of \Cref{alg:det-ea-partial}, which perform a greedy partial coloring of the input stream, with constraints on the set of colors that it can use for any edge, as per \Cref{lem:offline-ea-coloring}. This greedy-coloring can be performed
  in $O(n \sqrt{\Delta})$ space, and by \Cref{lem:ea-partial-coloring} is guaranteed 
  to color at least $1/3$ of
  the edges in the input stream. By sending all edges which the greedy procedure
  did not color to an independent greedy coloring instance, the number of uncolored
  edges can be reduced further; after $O(\log \Delta)$ iterations of this, an $O(1/\Delta)$ fraction of the input stream has not been colored; this part of the stream can be stored entirely and colored with $2 \Delta - 1$ colors.
  
  In \Cref{alg:det-ea-inner}, the for loop at Line \ref{step:loop-over-layers} sends the edge $\{x,y\}$ being processed to each of the $\ceil{\log_{3/2} \Delta^\pell}$ instances of \Cref{alg:det-ea-partial}, until either one of them assigns a color to the edge, or all the instances fail to color the edge. Since each instance is guaranteed to color a $\ge 1/3$ fraction of the edges it processes, only a $(2/3)^\ceil{\log_{3/2} \Delta^\pell} \le \frac{1}{\Delta^\pell}$ fraction of the edges received by \Cref{alg:det-ea-inner} will reach Line \ref{step:record-unpartialcolored-edge} of the algorithm and be stored in $D^\pell$; since there are only $O(n)$ such edges, storing them does not significantly affect the space usage of the algorithm. These edges will be greedily colored using a fresh set of colors.
  
  \begin{algorithm}[ht!]
    \caption{Deterministic algorithm for $O(\Delta (\log \Delta)^2)$ edge coloring for multigraph edge arrival streams\label{alg:det-ea-edgecol}}
    \begin{algorithmic}[1]
    \Statex \textbf{Input}: Stream of edge arrivals $n$-vertex graph $G=(V,E)$
      \Statex Assume $\Delta$ is a power of two
      \Statex
      \Statex \underline{\textbf{Initialize}:}
      
      \State Let $R \gets \emptyset$ be a reference counted pool of edges. (This will be a set of "recent" edges for each layer in each level, including edges that are either in $D^\pell$ or have a reference from one of the $F_x^{(\ell,\xi)}$
      \State Each edge $e \in R$ will have, per level $\ell$, one associated counter $M^\pell_{e} \in [0, 2^\ell]$ and color class $\chi^\pell_{e} \in \{0,\ldots,\ceil{\log_{3/2}{\Delta^\pell}}\} \times [C^\pell]$. When an edge is added to the pool, $M^\pell_{e} = 0$ for all layers.
      
      \For{each \emph{level} $\ell$ in $0,\ldots,\log \Delta$}
        \State Define $\Delta^\pell = \Delta / 2^\ell$
        \State $\mathfrak{K}^\pell \gets$ \textsc{Initialize}($\ell$,$\Delta^\pell$) from \Cref{alg:det-ea-inner}
      \EndFor
      
    \Statex 
    \Statex\underline{\textbf{Process}(edge $\{x,y\}$) $\rightarrow$ \textbf{color}}
      
      \For{$\ell$ in $0,\ldots,\log \Delta$}\label{step:det-ea-edgecol-outer-loop}
        \If{$\{x,y\}$ is in $R$ and $M^\pell_{\{x,y\}} > 0$}
          \If{$M^\pell_{\{x,y\}} = 2^\ell$}
            \State \textbf{continue}\label{step:skip-to-next-level}
          \Else
            \State $M^\pell_{\{x,y\}} \gets M^\pell_{\{x,y\}} + 1$
            \State Let $(i,j) \gets \chi^\pell_{\{x,y\}}$
            \State \textbf{return} color $\left(\ell, i,  (j - 1) 2^\ell + M^\pell_{\{x,y\}}\right)$
          \EndIf
        \Else
          \State Let $(\xi, c) \gets \mathfrak{K}^\pell.\textsc{Process}$($\{x,y\}$) for level $\ell$ algorithm
          \State \textbf{return} color $\left(\ell, \xi,  (c - 1) 2^\ell + 1\right)$
        \EndIf
      \EndFor
    \end{algorithmic}
  \end{algorithm}
  
  To handle general multigraphs, \Cref{alg:det-ea-edgecol} divides the stream into
  a series of levels, for $\ell$ from $0$ to $\log \Delta$. The higher levels
  process very common edges, which are assigned blocks of $2^\ell$ colors for layer $\ell$. Each level uses an instance of \Cref{alg:det-ea-inner} to assign color
  blocks for the edges it receives. If a copy $e$ of a given edge $\{u,v\}$ arrives,
  and the color block for $\{u,v\}$ is not full, then the $e$ will be assigned the
  next available color in the block. The specific scheme, as we shall show, ensures that every edge is either colored from an existing color block, or passed to an instance of \Cref{alg:det-ea-inner}; and in the latter case, ensures that property
  Z holds for the stream sent to \Cref{alg:det-ea-inner}.
  
  \Cref{alg:det-ea-edgecol} maintains a global reference counted pool $R$, which
  keeps track of every edge $\{x,y\}$ that arrives for some amount of time. Each
  level $\ell$ can provide references for the edge; $\{x,y\}$ will only be dropped
  from $R$ if no levels have a reference. At level $\ell$, we have two cases,
  depending on how $\{x,y\}$ was processed by  \Cref{alg:det-ea-inner}. 
  If $\{x,y\}$ was not colored by any layer $\xi$, it will be stored in $D^\pell$ for the rest of the stream, and any further copies of that edge will not be sent by \Cref{alg:det-ea-edgecol} to the level $\ell$ instance of \Cref{alg:det-ea-inner}. If $\{x,y\}$ was successfully colored by layer $\xi$, the edge will be recorded until both of the free color trackers $F_x^\pellxi$ and $F_y^\pellxi$
  have been refreshed. This only happens if the block numbers $b_{x,t}$ and $b_{y,t}$ of $\{x,y\}$ with respect to the substream received by the layer $\xi$ instance of \Cref{alg:det-ea-partial} have increased. Consequently, that substream will satisfy Property Z -- if \Cref{alg:det-ea-edgecol} receives a second copy of edge $\{x,y\}$ at time $t'$ while either $b_{x,t'} = b_{x,t}$ or $b_{y,t} = b_{y,t}$, because $\{x,y\}$ will be still be in $R$, \Cref{alg:det-ea-edgecol} will not send the second copy to the level $\ell$ instance of \Cref{alg:det-ea-inner}.
  
  We now check that the level $\ell$ instance of \Cref{alg:det-ea-inner} does not receive a graph stream of degree more than $\Delta^\pell$. When a copy of an edge $\{u,v\}$ arrives, the loop at Line \ref{step:det-ea-edgecol-outer-loop} only continues from level $\ell$ to to level $\ell + 1$ if $M_{\{u,v\}}^\pell = 2^\ell$ was true. Thus, for an edge $e$ to be processed by the level $\ell$ instance of \Cref{alg:det-ea-inner}, the edge $e$ must have arrived at least $2^0 + 2^1 \ldots + 2^{\ell-1} = 2^{\ell} - 1$ times before in the stream, and $2^\ell$ times in total, since the last time $e$ was dropped from $R$. Since the maximum degree of the graph is $\Delta$, and an edge must be added $\ge 2^\ell$ times for each time that
  it is sent to the level $\ell$ instance of \Cref{alg:det-ea-inner}, the level $\ell$ instance will
  receive at most $\Delta/2^\ell$ edges adjacent to any given vertex.

  When $\ell = \log \Delta$ in the for loop, Line \ref{step:skip-to-next-level} will not be executed;
  because for $M_{\{x,y\}}^\pell$ to equal $2^\ell = \Delta$, this must be the $\Delta + 1$st copy of
  edge $\{x,y\}$ to arrive. Thus \Cref{alg:det-ea-edgecol} will assign a color to every edge in
  the stream.
  
  The total space usage of \Cref{alg:det-ea-edgecol} is dominated by the sets $D^\pell$, free color trackers $F^\pellxi_v$, and the pool $R$ (along with its linked properties $\chi^\pell_e,M^\pell_e$, for each $\ell \in \{0,\ldots,\log \Delta\}$.) Over all levels $\ell$, layers $\xi$, and vertices in $V$, there are
   $O((\log \Delta)^2 n)$ color trackers, each of which uses $O(s^\pell \log n+ \log \Delta) = O(\sqrt{\Delta \log(n/\delta)} \log n)$ bits of space to store colors and $O(\log n)$-bit references to edges.
  Each $D^\pell$ is guaranteed to contain $O(n \log n)$ edges, at most, and needs
  $O(n (\log n)^2)$ bits of space. Finally, the $\ell$th level references at most  $|D^\pell| + O(n s^\pell \log \Delta)$ edges in $R$, so in total $R$ will have $O( n \log n + n \sqrt{\Delta \log(n/\delta)} \log \Delta )$ edges. Each each needs $O(\log n)$ bits to identify, and there are
  $O((\log \Delta)^2)$ bits of associated information in the $(\chi^\pell_e,M^\pell_e)_{\ell}$.
  Thus, in total, \Cref{alg:det-ea-edgecol} uses:
  \begin{align*}
    O(n \sqrt{\Delta \log(n /\delta)} (\log \Delta)^2  \log n \log(n\Delta))
  \end{align*}
  bits of space.
  
  If the advice $(\sigma_v^\pell)_{v \in V, \ell \in \{0,\ldots,\log \Delta\}}$ was chosen randomly
  using \Cref{lem:fast-permutations}, then $O(s \poly(\log\Delta,\log n))$ truly random bits per permutation
  would be needed for a level of accuracy ($\le 1/\poly(n)$ total variation variation distance from 
  uniformity) under which the proof of \Cref{lem:offline-ea-coloring} works. At $\delta = 1/2$,
  this is $O(n \sqrt{\Delta} \poly(\log\Delta,\log n)$ bits. Given exponential time, the
  advice can also be computed on demand, since checking that Property U from the proof of \Cref{lem:offline-ea-coloring} can be done in exponential time. 
  
\end{proof}

% It may be possible to improve the simple-to-multigraph overhead to $O(\log \log \Delta)$ by adjusting the smaller levels to use proportionally larger $s^\ell$.}

\section{Lower bounds for deterministic edge coloring}

% URN4-42
\begin{lemma}\label{lem:lb-rand-part-matching}
  Let $B$ be a set of $n$ vertices; let $\Delta$ be an integer, and let $C \in [\Delta, 2 \Delta - 1]$ be another integer. Define $\beta := C / \Delta$. Consider the case where $(2 - \beta) n \ge 32 C$.
  
  For each $v \in B$, say we have a nonempty set $S_v \subseteq [C]$ of possible colors, where $\sum_{i \in B} |S_i| \le \beta n$. If $G$ is a uniformly random bipartite graph between a set $A$ of size $\floor{n/\Delta}$, then the probability $p(n, \Delta, \beta)$ that $G$ has a valid edge coloring where each edge $(a,b)$ is given a color from $S_b$ is:
  \begin{align}
      p(n, \Delta, \beta) \le \exp(- \frac{1}{2^{13}} (2-\beta)^3 n) \label{eq:rgmprob}
  \end{align}
\end{lemma}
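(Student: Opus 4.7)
The plan is to exhibit a simple combinatorial obstruction to list edge-colorability---a ``singleton collision''---argue that it occurs with constant probability in the neighborhood of any fixed $a\in A$, and amplify by independence across $A$. The first step is to count singletons: since every $S_v$ is nonempty, $\sum_{v \in B} |S_v| \ge n$, and combining with $\sum_v |S_v| \le \beta n$ shows that at least $(2-\beta)n$ of the $S_v$'s are singletons. Writing $N_c := \{v : S_v = \{c\}\}$ and $n_c := |N_c|$, we get $\sum_{c \in [C]} n_c \ge (2-\beta) n$, and Cauchy--Schwarz gives $\sum_{c \in [C]} n_c^2 \ge (2-\beta)^2 n^2 / C$. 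If any $a \in A$ is adjacent to two distinct vertices $b_1, b_2 \in N_c$ for some $c$, then Hall's condition for a list edge coloring at $a$ already fails on $\{b_1, b_2\}$ (their combined allowed color set is just $\{c\}$), so $G$ admits no valid coloring. Hence it suffices to upper bound $\Pr\!\bigl[\bigcap_{a \in A} B_a\bigr]$, where $B_a := \{\forall c:\; |N(a) \cap N_c| \le 1\}$.

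Fix $a \in A$ and let $Y_{a,c} := |N(a) \cap N_c|$. The variables $(Y_{a,c})_{c}$ are negatively associated---they arise by partitioning a random $\Delta$-subset of $B$ across the disjoint classes $N_c$---so
\begin{align*}
\Pr[B_a] \;\le\; \prod_{c} \Pr[Y_{a,c} \le 1] \;\le\; \exp\!\Bigl(-\Omega\Bigl(\tfrac{\Delta^2}{n^2}\textstyle\sum_c n_c^2\Bigr)\Bigr) \;\le\; \exp\!\Bigl(-\Omega\Bigl(\tfrac{\Delta(2-\beta)^2}{\beta}\Bigr)\Bigr).
\end{align*}
The inner bound uses $\Pr[Y_{a,c} \ge 2] \gtrsim \Delta^2 n_c^2 / n^2$ in the ``birthday regime'' $\Delta n_c / n \lesssim 1$, while for larger $n_c$ a direct hypergeometric tail bound gives $\Pr[Y_{a,c} \le 1] \le \exp(-\Omega(\Delta n_c/n))$, which is even stronger. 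Because the neighborhoods $(N(a))_{a\in A}$ are independent under the intended random graph model (each $a$ choosing its $\Delta$ neighbors independently of the others), the events $B_a$ are independent, and so
\begin{align*}
p(n,\Delta,\beta) \;\le\; \prod_{a \in A} \Pr[B_a] \;\le\; \exp\!\Bigl(-\Omega\bigl((2-\beta)^2 n / \beta\bigr)\Bigr) \;\le\; \exp\!\Bigl(-\tfrac{1}{2^{13}} (2-\beta)^3 n\Bigr),
\end{align*}
where the last inequality uses $\beta < 2$ and $(2-\beta) \le 1$, so that the $(2-\beta)^2$ exponent dominates the advertised $(2-\beta)^3$ bound.

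The main obstacle is establishing the bound $\Pr[Y_{a,c} \le 1] \le \exp(-\Omega(\Delta^2 n_c^2 / n^2))$ uniformly in $n_c$: the quadratic (birthday) approximation is only meaningful when $\Delta n_c / n$ is small, so the argument must split by the size of $n_c$ and stitch together a birthday estimate for small $n_c$ with a Chernoff/hypergeometric tail bound for large $n_c$, verifying in the latter case that the stronger exponent absorbs into the same form. The remaining work---justifying negative association of the $Y_{a,c}$'s, checking that the hypothesis $(2-\beta) n \ge 32 C$ lets the $(2-\beta)^2 n / \beta$ exponent dominate all constants as well as the rounding losses from $|A| = \floor{n/\Delta}$, and pinning down the precise interpretation of ``uniformly random bipartite graph''---is routine.
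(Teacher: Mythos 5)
Your singleton-collision obstruction and the per-class hypergeometric/birthday estimates are essentially the same ingredients the paper uses, but the amplification step contains a genuine gap: you multiply the per-vertex bounds $\Pr[B_a]$ over all $a \in A$ by asserting that the neighborhoods $(N(a))_{a \in A}$ are independent. In the model this lemma is actually invoked with (and the one the paper's proof adopts), $G$ is a uniformly random $(\Delta,1)$-biregular graph: every $b \in B$ has degree at most one, so the neighborhoods of distinct vertices of $A$ are the disjoint blocks of a uniformly random partition of $B$ into $\floor{n/\Delta}$ sets of size $\Delta$. These blocks are strongly dependent, and unconditional per-block bounds cannot simply be multiplied. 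The paper instead uses the chain rule $p \le \prod_i \Pr[C_i \mid C_1,\ldots,C_{i-1}]$ and --- this is the step your argument is missing --- only takes the product over the first $\ceil{\tfrac{1}{2}(2-\beta) n/\Delta}$ blocks, because after conditioning on earlier blocks the residual vertex set $T_i$ may have lost most of its singleton-list vertices; restricting to this prefix guarantees singleton density at least $(2-\beta)/2$ in $T_i$ no matter how the earlier blocks fell, and each conditional probability is then bounded on $T_i$, not on $B$. This loss of a factor $(2-\beta)$ is exactly why the paper's exponent is $(2-\beta)^3 n$ rather than the $(2-\beta)^2 n$ you claim; the stronger bound is an artifact of the unwarranted independence assumption, and under the partition model your argument as written does not go through without redoing the per-block estimate on the residual set and paying this factor.

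Two smaller points. First, the displayed inequality $\prod_c \Pr[Y_{a,c} \le 1] \le \exp(-\Omega(\tfrac{\Delta^2}{n^2}\sum_c n_c^2))$ is false as a uniform statement: for classes with $\Delta n_c / n \gg 1$ only an exponent linear in $\Delta n_c/n$ is available, which is \emph{weaker}, not stronger, than the quadratic one; you flag the needed case split, and the split does salvage a bound of the form $\exp(-\Omega(\min\{(2-\beta)\Delta,\,(2-\beta)^2\Delta/\beta\}))$, but the claim "which is even stronger" is backwards. Second, classes with $n_c = 1$ contribute nothing to a collision, so before applying Cauchy--Schwarz you must discard up to $C$ singleton classes of size one; this is where the hypothesis $(2-\beta)n \ge 32C$ is needed. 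The paper sidesteps both issues by grouping singleton vertices of the residual set into $C$ classes $L_1,\ldots,L_C$ each of size at least $(2-\beta)|T_i|/(4C)$ and bounding each $\Pr[Y_k \le 1]$ by an explicit hypergeometric computation, with negative association handling the product within a block.
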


\begin{proof}[Proof of \Cref{lem:lb-rand-part-matching}]
  The graph $G$ can be interpreted as a random partition of $B$ into sets $P_1,\ldots,P_{\floor{n/\Delta}}$, plus a possible $P_{\text{remainder}}$ set, where all sets $P_i$ except for the remainder have size $\Delta$. Note that $P_1$ is a uniformly random subset of size $\Delta$ in $B$, $P_2$ is a uniformly random subset of size $\Delta$ in $B \setminus P_1$, and so on.
  For each $i \in \floor{n/\Delta}$, let $C_i$ be the event that there is a $P_i$-saturating
  matching in the bipartite graph between $P_i$ and $[C]$, where each $v \in P$ is adjacent to all $c \in S_v$. Let $\gamma = 2 - \beta$. Then we have:
  \begin{align}
    p(n, \Delta, \beta) \le \prod_{i = 1}^{\ceil{\frac{1}{2} \gamma n / \Delta}} \Pr[C_i | C_1,\ldots,C_{i-1}] \label{eq:strat-pndb-ub}
  \end{align}
  We will prove Eq. \ref{eq:rgmprob} by proving a upper bounds on $\Pr[C_i | C_1,\ldots,C_{i-1}]$, for all $i \in [\ceil{\frac{1}{2} \gamma n / \Delta}]$, and then applying Eq. \ref{eq:strat-pndb-ub}.
  
  By Markov's inequality, the fraction of vertices in $B$ for which $|S_v| - 1 \ge 1$ is $\le \beta - 1 = 1 - \gamma$, so $\Pr_{v \sim B}[|S_v| = 1] \ge \gamma$. For each $i \in [\floor{n/\Delta}]$, let $T_i = B \setminus \bigcup_{j < i} P_j$. Then for all $i \le \ceil{\frac{1}{2} \gamma n / \Delta}$:
  \begin{align*}
    \frac{|\{v \in T_i : |S_v| = 1\}|}{|T_i|} \ge \frac{|\{v \in B : |S_v| = 1\}| - \Delta (i-1)}{|B| - \Delta i} \ge \frac{\gamma n - (i-1) \Delta}{n - \Delta (i-1)} \ge \frac{\gamma n - \frac{1}{2} \gamma n}{n - \frac{1}{2} \gamma n} \ge \frac{1}{2} \gamma
  \end{align*}
  Consequently, conditioned on $P_1,\ldots,P_{i-1}$, the set $P_i$ will be drawn uniformly at random from a set $T_i$ of vertices for which at least a $\gamma / 2$ fraction have singleton color sets (have $|S_v| = 1$).
  
  For a given $i$, we remark that if $P_i$ contains two vertices $v,w$ for which $|S_v| = |S_w| = 1$ and $S_v = S_w$, then it is not possible to match the vertices in $P_i$ to colors, as $v$ and $w$ would conflict. Let us bound the probability that this occurs. Let $\hatn = |T_i|$; note that this is $\ge n / 2$. To make the distribution of singleton sets drawn from $\{S_v\}_{v \in T_i}$ appear more uniform, we construct $C$ disjoint sets $L_1,\ldots,L_C$ in $T_i$, so that for each $L_i$, the associated color sets are all a singleton $|\bigcup_{j\in L_i} S_j| = 1$; and for which $|L_i| \ge \gamma \hatn / 4 C$. (If each singleton color set were equally likely, we could get $|L_i| \ge \gamma \hatn / 2 C$ -- but it is possible that $\{1\}$ is rare, while $\{2\}$ more common than average. One way to construct the $L_1,\ldots,L_C$ is by iteratively removing sets of $\gamma \hatn / 4 C$ vertices from $T_i$ whose associated color sets are all the same singleton set.)
  
  For each $j \in T_i$, let $X_j$ be the indicator random variable for the event that $j \in P_i$. For each $k \in [C]$, let $Y_k = \sum_{j \in L_k} X_j$. Since the random variables $\{X_j\}_{j \in T_i}$ are negatively associated, sums of disjoint sets of them, the $\{Y_k\}_{k \in [C]}$, are also negatively associated. (See \cite{JoagDevP83}.) We have:
  \begin{align*}
    \Pr&[P_i \text{ has no two elements from same $L_k$}] \\
      &\le \Pr[\bigwedge_{k \in [C]} \{ Y_k \le 1 \} ] \\
      &\le \prod_{k \in [C]} \Pr[Y_k \le 1] \qquad \text{since negative association $\implies$ negative orthant dependence}
  \end{align*}
  We calculate $\Pr[Y_k \le 1]$ exactly, and then prove an upper bound on it.
  \begin{align*}
    \Pr[Y_k \le 1]
      &= \frac{\binom{\hatn - |L_k|}{\Delta} + |L_k| \binom{\hatn - |L_k|}{\Delta - 1}}{\binom{\hatn}{\Delta}} \\
      &= \frac{\frac{\hatn - |L_k| - \Delta + 1}{\Delta} \binom{\hatn - |L_k|}{\Delta - 1} + |L_k| \binom{\hatn - |L_k|}{\Delta - 1}}{\frac{\hatn}{\Delta}\binom{\hatn - 1}{\Delta - 1}} \\
      &= \frac{\frac{\hatn - |L_k| - \Delta + 1}{\Delta} + |L_k| }{\hatn/\Delta} \cdot \frac{\binom{\hatn - |L_k|}{\Delta - 1}}{\binom{\hatn-1}{\Delta -1}} \\
      &\le \frac{\frac{\hatn - |L_k|- \Delta + 1}{\Delta} + |L_k| }{\hatn/\Delta} \cdot \left( \frac{\hatn - |L_k|}{\hatn-1} \right)^{\Delta -1} \qquad \text{since $\binom{a}{c}/\binom{b}{c} \le (a/b)^c$ if $c \le a \le b$}\\ 
      &= \left(1 + \frac{|L_k| - 1}{\hatn} (\Delta - 1)\right) \left(1 - \frac{|L_k| -1}{\hatn - 1}\right)^{\Delta - 1} \\
      &\le \exp\left( \ln\left(1 + \frac{|L_k| - 1}{\hatn} (\Delta - 1) \right)   - \frac{|L_k| -1}{\hatn - 1} (\Delta - 1) \right) \qquad \text{since $1-x \le \exp(x)$} \\
      &\le \exp\left( \frac{|L_k| - 1}{\hatn} (\Delta - 1) - \frac{1}{4} \left(\frac{|L_k| - 1}{\hatn} (\Delta - 1)\right)^2  - \frac{|L_k| -1}{\hatn - 1} (\Delta - 1) \right) \qquad \text{since $\ln(1+x) \le x - x^2/4$ for $x \le 1$} \\
      &\le \exp\left( -\frac{(|L_k|-1)(\Delta - 1)}{\hatn (\hatn - 1)} - \frac{1}{4} \left(\frac{|L_k| - 1}{\hatn} (\Delta - 1)\right)^2 \right) \\
      &\le \exp\left( - \frac{1}{4} \frac{\gamma^2}{2^{10}} \right) = \exp(- \gamma^2 / 2^{12})
  \end{align*}
  In the last inequality, we used the fact that $(|L_k| -1)(\Delta - 1) / \hatn \ge (\gamma \hatn / 4 C - 1)(\Delta - 1) / \hatn \ge \gamma / 32 C$.
  % Note: L_k indirection is still required for this derivation because the ln(1 + x) <= x - x^2/4
  % inequality will break if (|L_k| -1)(\Delta - 1)/n is Ω(1). Working around this is not worth the time.
  
  Having bounded $\Pr[Y_k \le 1]$, it follows that:
  \begin{align*}
    p(n, \Delta, \beta) &\le \prod_{i = 1}^{\ceil{\frac{1}{2} \gamma n / \Delta}} \Pr[C_i | C_1,\ldots,C_{i-1}] \le \left( \exp\left(- \gamma^2 / 2^{12}\right)^C \right)^{\ceil{\frac{1}{2} \gamma n / \Delta}} \\
      &\le  \exp\left(- \gamma^2 / 2^{12} \cdot C \cdot \ceil{\frac{1}{2} \gamma n / \Delta} \right) \le \exp(-\gamma^3 n / 2^{13})
  \end{align*}
\end{proof}

We now formally restate \Cref{thm:det-online-va-lb} and prove it.
\begin{theorem}[Formal version of \Cref{thm:det-online-va-lb}]\label{thm:det-online-va-lb-formal}
  For all $\beta \in (1,2)$, and integers $n,\Delta$ satisfying $\Delta \le n  (2-\beta) / (32 \beta)$, every deterministic online streaming algorithm for edge-coloring that uses $\beta \Delta$ colors requires $\Omega((2-\beta)^3 n)$ bits of space.
\end{theorem}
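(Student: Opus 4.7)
The plan is to apply \Cref{lem:lb-rand-part-matching} through a Yao-style counting/union-bound argument over algorithm states. Fix any deterministic online edge-coloring algorithm $\mathcal{A}$ using $C = \beta\Delta$ colors and $s$ bits of state. I will work in the vertex-arrival model on a vertex set $V = A \sqcup B$ with $|A| = \floor{n/\Delta}$ and $|B| = n$. The theorem's hypothesis $\Delta \le n(2-\beta)/(32\beta)$ is precisely the precondition $(2-\beta) n \ge 32 C$ needed to invoke the lemma.

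The construction proceeds in two phases. First, I would feed $\mathcal{A}$ an adversarial \emph{pre-stream} $X$ designed to drive it into a state $\sigma = \sigma(X)$ in which its future color choices at each $v \in B$ are tightly constrained. Because $\mathcal{A}$ is deterministic, the state $\sigma$ determines, for every potential next edge $(a,v)$, a unique color $\chi_\sigma(a,v)$ that $\mathcal{A}$ would assign, so $\sigma$ induces an effective list $S_v(\sigma) \subseteq [C]$ at each $v \in B$. Second, I would present the uniformly random bipartite graph $G$ of \Cref{lem:lb-rand-part-matching} in vertex-arrival order, so that $\mathcal{A}$'s test-phase execution is exactly a list edge-coloring of $G$ with lists $(S_v(\sigma))_{v \in B}$.

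For the counting step: by \Cref{lem:lb-rand-part-matching}, for each fixed $\sigma$ satisfying $|S_v(\sigma)| \ge 1$ and $\sum_v |S_v(\sigma)| \le \beta n$, the probability over random $G$ that any valid list coloring exists is at most $p := \exp(-(2-\beta)^3 n / 2^{13})$. Since $\mathcal{A}$ has at most $2^s$ reachable states $\sigma$, a union bound gives that the fraction of $G$'s for which $\mathcal{A}$ can succeed from \emph{some} reachable state is at most $2^s p$. Because $\mathcal{A}$ must deterministically succeed on every input, we need $2^s p \ge 1$, which rearranges to $s \ge \log_2(1/p) = \Omega((2-\beta)^3 n)$, as claimed.

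The main obstacle is the pre-stream design: one must ensure that every reachable state $\sigma(X)$ induces a list assignment meeting the cumulative bound $\sum_v |S_v(\sigma)| \le \beta n$. Na\"ively setting $S_v = [C] \setminus U_v(\sigma)$ to be the unused colors at $v$ after the pre-stream fails because $|S_v| \ge (\beta-1)\Delta$ and $\sum |S_v|$ exceeds $\beta n$ by a factor of $\Delta$. The remedy is to exploit determinism more carefully: since $\chi_\sigma(a,v)$ is a fixed function of $(\sigma,a,v)$, one restricts $S_v(\sigma)$ to the (at most $|A|$) colors $\mathcal{A}$ could actually assign at $v$ in the test phase, collapsing $\sum_v |S_v|$ to a quantity bounded by the total number of test edges $|A|\Delta = n \le \beta n$. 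Making this restriction rigorous --- and verifying that the resulting $S_v$ are nonempty with the desired cumulative bound for every reachable state --- is the most delicate piece of the argument.
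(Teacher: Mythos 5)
There is a genuine gap, and you have in fact located it yourself: the whole argument hinges on producing, for each reachable state $\sigma$, a list assignment $(S_v(\sigma))_{v\in B}$ that simultaneously (a) contains every color the algorithm might assign at $v$ during the test phase and (b) satisfies $\sum_v |S_v(\sigma)| \le \beta n$ with each $S_v$ nonempty. Your proposed remedy does not deliver this. First, the premise that the test-phase color $\chi_\sigma(a,v)$ is a fixed function of $(\sigma,a,v)$ is false: in the vertex-arrival test phase the color given to the edge $\{a,v\}$ depends on the whole neighborhood $M_a$ presented with $a$ and on the algorithm's evolving state from earlier test-phase arrivals, so the set of colors that "could actually be assigned at $v$" must be taken as a union over all test-phase histories and can be as large as $\Omega(C)$ per vertex. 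Second, even granting that premise, the count is wrong: you would get $|S_v(\sigma)| \le |A| = \floor{n/\Delta}$ per vertex, hence $\sum_v |S_v(\sigma)| \le n^2/\Delta$, not $|A|\Delta = n$ (you are conflating the number of edges in one test graph with the size of the union of lists over all test graphs). In the regime $\Delta \ll n$ of the theorem this vastly exceeds $\beta n$, so \Cref{lem:lb-rand-part-matching} cannot be invoked. There is also a secondary issue you never address: the pre-stream consumes degree at the vertices of $B$, and the combined pre-stream plus test graph must stay within max degree $\Delta$, so the pre-stream cannot be made long enough to "use up" colors by brute force either — a single phase of $\le n$ pre-stream edges per unit of degree can only commit on the order of $\beta n$ (vertex,color) pairs, while you need about $\beta(\Delta-1)n$ pairs committed before small lists are forced.

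The paper's proof resolves exactly this tension by replacing your single pre-stream-plus-union-bound with a $\Delta$-player one-way communication game built from the streaming algorithm. Each player presents a $(\Delta,1)$-regular bipartite graph from a fresh $A_i$ of size $n/\Delta$ to $B$ (so the degree budget is spent exactly once per round per $B$-vertex), and for each message $m$ one defines $S_{m,v}$ as the set of colors that \emph{some} execution consistent with $m$ could have used at $v$ — colors that all later players are forced to avoid. \Cref{lem:lb-rand-part-matching} is then applied once per round (with a union bound over the at most $2^S$ messages, playing the role of your union bound over states) to show that some message $m_i^\star$ commits at least $\beta n$ \emph{new} (vertex,color) pairs; iterating adaptively over $\Delta-1$ rounds, conditioning on $m_1^\star,\ldots,m_{i-1}^\star$, accumulates $\ge \beta(\Delta-1)n$ committed pairs, so the residual lists $R_v$ for the last player have total size $\le \beta n$, and one final application of the lemma defeats player $P_\Delta$. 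This round-by-round adaptive selection of messages is the key idea missing from your proposal; without it (or some substitute mechanism for driving $\sum_v|S_v|$ down to $\beta n$), the counting step cannot be made to go through.
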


% URN4-26
\begin{proof} Say we have an algorithm $\cA$ to provide an online $(\beta \Delta)$ edge-coloring of an input stream, presented in one-sided vertex arrival order, using $S$ bits of space. We assume
  that $\Delta | n$; if this is not the case, we can reduce $n$ to the nearest multiple
  of $\Delta$, weakening our final lower bound by at most a factor of $2$.
  With the algorithm, we can implement a protocol for a $\Delta$-player one-way
  communication game in which each message uses $\le S$ bits. We will then prove a communication lower bound
  for this game.
  
  Specifically, let $P_1,\ldots,P_\Delta$ be the players of the game. Let $A_1,\ldots,A_\Delta$ and $B$ be sets of vertices, where for each $i\in[\Delta]$,
  $|A_i| = n/\Delta$, and $|B| = n$. For each $i\in[\Delta]$, the player $P_i$ is given a regular bipartite graph $G_i$ from $A_i$ to $B$, in which each vertex in $A_i$ has degree $\Delta$, and each vertex in $B_i$ has degree $1$. Player $P_1$ starts the communication game by
  outputing an edge coloring $\chi_1$ of $G_1$, using colors in $[\beta \Delta]$; and
  then it sends a message $m_1$ to Player $P_2$. For each $i \in \{2,\ldots,\Delta\}$,
  the player $P_i$ will receive a message $m_{i-1}$ from its predecessor, output an edge
  coloring $\chi_i$ of $G_1$ which is compatible with the edge colorings $\chi_1,\ldots,\chi_{i-1}$ made by the earlier players, and then (if $i < \Delta$) send a message $m_i$
  to the next player.
  
  The conversion from an algorithm $\cA$ to a protocol for this game is straightforward;
  $P_1$ initializes an instance $A$ of $\cA$, uses it to process $G_1$ in arbitrary order, and reports the colors the algorithm output; then it encodes the state of the instance $A$ as an $S$-bit
  message $m_1$. $P_2$ receives this message, and uses it to continue running the instance
  $A$, this time having it process $G_2$; $P_2$ outputs the results, and sends the new state
  of $A$ to $P_3$ as $m_2$. The players continue in this way until $P_\Delta$ produces output.
  
  For each $i \in \{1,\ldots,\Delta-1\}$, and message $m_i$, we define $\cS_{m_i} = (S_{m_i,v})_{v \in B}$. Here $S_{m_i,v}$ is the set of all colors which players $P_1,\ldots,P_i$ could 
  have assigned to edges incident on $v$ for executions of the protocol in which $P_i$ sent message $m_i$. If player $P_{i+1}$ receives message $m_i$, the coloring $\chi_{i+1}$ that it
  outputs must be disjoint from $\cS_{m_i}$; specifically, if we view $\chi_{i+1}$ as a vector
  in $[\beta \Delta]^B$ whose $v$th entry gives the color assigned to the edge incident to vertex $v$, then $\forall v \in B : \chi_{i+1, v} \notin S_{m_i,v}$. If this were not the case, and
  there was a vertex $x$ for which $\chi_{i+1, x} \in S_{m_i,x}$, then there would be an execution of the protocol on which some player $P_j$ output color $\chi_{i+1, x}$ for 
  an edge incident on $x$, later $P_i$ sent message $m_i$, and now $P_{i+1}$'s assignment of
  $\chi_{i+1, x}$ to the edge incident on $x$ violates the edge coloring constraint.
  
  Let $p(n, \Delta, \beta)$ be the probability from \Cref{lem:lb-rand-part-matching}.
  We claim that there exists an input for which some player must send a message with more than
  $\log 1/p(n, \Delta, \beta)$ bits. If this is not the case, then we shall construct
  an input on which the protocol must give an incorrect output, a contradiction.
   
  Let $M_1$ be the set of all messages that player $P_1$ can send. Let $H$
  be chosen uniformly at random from the set of $(\Delta,1)$-regular bipartite graphs
  from $A_1$ to $B$, and let $m_1(H)$ be the message $P_1$ would send if $G_1 = H$.
  Then, if for all $m \in M_1$, we were to have $\sum_{v\in B} |S_{m_i,v}| \le \beta n$, 
  \begin{align*}
    1 = \sum_{m \in M_1} \Pr[m = m_1(H)] \le |M_1| p(n, \Delta, \beta) 
  \end{align*}
  But since we have assumed messages in $M_1$ need $< \log 1/p(n, \Delta, \beta)$
  bits, and hence $|M_1| < 1 / p(n, \Delta, \beta)$, the above equation would
  imply $1 < 1$; thus there must be some $m_1^\star \in M_1$ for which
  $\sum_{v\in B} {S_{m_i,v}} \ge \beta n$. Let $\cG_1$ be the set of graphs
  for which $H \in \cG_1 \iff m_1(H) = m_1^\star$; then on being given 
  any graph in $\cG_1$, player $P_1$ will output $m_1^\star$.
  
  We will now iterate over $i \in \{2,\ldots,\Delta - 1\}$ and build a sequence
  of messages $m_1^\star, m_2^\star, \ldots, m_\Delta^\star$, along with sets
  of input graphs $\cG_1,\ldots,\cG_{\Delta-1}$ on which the protocol will send
  these messages. For each $i \in [\Delta]$, define $\cT_{m,i,i} = (T_{v, m, i})_{v \in B}$,
  where $T_{v,m,i} := S_m \setminus S_{m_{i-1}^\star,i}$. Any coloring $\chi_i$ 
  that $P_i$ outputs which is compatible with all inputs leading to $m_i^\star$ 
  will satisfy $\chi_{i,v} \in T_{v,m,i}$. (If $\chi_{i,v} \in S_{m_{i-1}^\star,i}$,
  then as noted above there is a set of inputs where this will violate
  the edge coloring constraint for $v$.) As argued for $M_1$, there must be some
  message $m_i^\star \in M_i$ for which $\sum_{v \in B} |T_{v, m, i}| \ge \beta n$.
  
  Finally, for each $v \in B$, define $R_v = [\beta \Delta] \setminus S_{m_{\Delta-1}^\star,v}$.
  Since $S_{m_{\Delta-1}^\star,v} = \sqcup_{i = 1}^{\Delta - 1} T_{v, m_i^\star, i}$
  we will have $|R_v| \le \beta \Delta - \beta (\Delta - 1) = \beta$. On receiving
  $m_{\Delta -1}^\star$, player $P_\Delta$ can only assign edge colors so that
  edges incident on $v$ use colors in $R_v$; for any other color, there is a
  input which uses it and which makes $P_{\Delta - 1}$ send $m_{\Delta -1}^\star$.
  If $G_\Delta$ were chosen uniformly at random from its set of possible values,
  then the probability that $G_\Delta$ has an edge coloring compatible with
  $\{R_v\}_{v\in m}$ is $\le p(n, \Delta, \beta)$. Since this is $< 1$, there must
  exist a specific graph $G_\Delta^\dagger$ on which the protocol uses a color
  not in $R_v$ for some $v \in B$. We have thus shown that if the protocol always
  uses fewer than $\log 1/p(n, \Delta, \beta)$ bits for its messages, it
  will give incorrect outputs for some input.
  
  We conclude:
  \begin{align*}
    S \ge \log \frac{1}{p(n, \Delta, \beta)} = \Omega((2-\beta)^3 n)
  \end{align*}
\end{proof}

\section{Supporting lemmas}

\begin{corollary}[Practical high-rate-distance-product binary codes, via \cite{SipserS96}]\label{cor:practical-binary-codes}
  For sufficiently large $t$, there is a binary code
  of length $t$ with rate $\ge \frac{1}{4} t$, and distance $\ge \frac{1}{400} t$. The code can be implemented with
  $poly(t)$ initial setup time and space, and $O(t^2)$ encoding time.
  
  % todo, if time permits: look for some paper giving explicit minimum threshold, or even a "for all" guarantee
\end{corollary}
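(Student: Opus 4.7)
The plan is to instantiate the expander codes of Sipser and Spielman \cite{SipserS96} with suitable constant parameters. Their construction takes a $(d,c)$-biregular bipartite graph $H$ on $t$ ``message'' vertices (left-degree $d$) and $m = t d / c$ ``parity-check'' vertices (right-degree $c$), and defines a linear code $\cC \subseteq \b^t$ as the kernel of the bipartite adjacency matrix over $\FF_2$. This code has dimension at least $t - m$, and, whenever $H$ is an $(\alpha t, (1-\eta)d)$-vertex expander with $\eta < 1/2$, minimum distance at least $\alpha t$.

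First I would fix absolute constants $d, c, \alpha$ with $d/c \le 3/4$ (so the dimension is $\ge t/4$) and $\alpha \ge 1/400$, chosen so that $(d,c)$ admits an $(\alpha t, (1/2 + \Omega(1))d)$-expander on $t$ left vertices for all sufficiently large $t$. Since Sipser-Spielman codes achieve asymptotically good rate-distance pairs over a wide range of constant parameters, the two numerical targets (rate $\ge 1/4$ and relative distance $\ge 1/400$) are simultaneously achievable; checking this compatibility is the only real numerical work. An explicit bipartite expander family meeting these constraints (e.g., via Capalbo-Reingold-Vadhan-Wigderson unique-neighbor expanders, or via a Ramanujan graph followed by a suitable bipartite reduction) is constructible in $\poly(t)$ time and space.

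Second, once the parity-check matrix $H_\cC \in \FF_2^{m \times t}$ is in hand, I would compute a generator matrix $G \in \FF_2^{k \times t}$ with $k \ge t/4$ by Gaussian elimination in $\poly(t)$ time. Encoding a message $x \in \FF_2^k$ is then the matrix-vector product $x G$, requiring $O(k t) = O(t^2)$ bit operations. The main obstacle is purely bookkeeping: translating the constants in the cited theorem to match the $1/4$ rate and $1/400$ relative-distance targets, and ensuring the chosen explicit expander family actually provides the required expansion at set-size $\alpha t$ for all large enough $t$.
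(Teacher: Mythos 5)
Your route is not the paper's: the paper simply invokes Theorem~19 of \cite{SipserS96}, which already packages a polynomial-time constructible family (Tanner-style expander codes built from Ramanujan graphs with brute-forced inner codes) with the clean parameters rate $=1-2H(\epsilon)$ and relative distance approaching $\epsilon^2$; choosing $\epsilon=\sqrt{2}/20$ gives rate $\approx 0.263\ge 1/4$ and relative distance approaching $1/200$, hence $\ge 1/400$ for large $t$, and the $\poly(t)$ construction and $O(t^2)$ encoding are immediate because the code is linear. Your proposal instead builds the plain parity-check expander code (kernel of the bipartite adjacency matrix) and derives distance from vertex expansion exceeding $d/2$. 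That is a legitimate alternative in principle, but it is exactly the route that \cite{SipserS96} could \emph{not} instantiate explicitly, and your sketch leaves the decisive step unverified.

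Concretely, two problems. First, your suggested instantiation ``via a Ramanujan graph followed by a suitable bipartite reduction'' does not work: spectral bounds (and Kahale's result) cannot certify vertex expansion beyond a factor of roughly $d/2$, whereas your distance argument needs expansion $(1/2+\Omega(1))d$ on sets of size up to $\alpha t$. This is precisely why Sipser and Spielman resorted to the Tanner-code construction. The only explicit families known to meet your expansion requirement are lossless-expander constructions in the style of Capalbo--Reingold--Vadhan--Wigderson, and there the second problem bites: you must extract from such a construction simultaneous guarantees ``$d/c\le 3/4$ and expansion $> d/2$ on all sets up to size $t/400$,'' i.e.\ the constant $\alpha$ delivered by the explicit family must itself be $\ge 1/400$. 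You flag this as ``bookkeeping,'' but it is the entire content of the corollary's quantitative claim, and the constants available from lossless-expander constructions are parameter-dependent and not obviously $\ge 1/400$; nothing in your write-up establishes them. By contrast, the paper's citation makes this verification a one-line numerical check. To repair your argument you would either have to carry out that constant-tracking for a specific explicit lossless-expander family, or fall back to the paper's approach of quoting the rate/distance formula of \cite{SipserS96} directly. (Your rate bound $t-m\ge t/4$ and the $O(t^2)$ encoding via a generator matrix obtained by Gaussian elimination are fine.)
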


\begin{proof}[Proof of \Cref{cor:practical-binary-codes}]
  Let $\epsilon = \sqrt{2}/20$. By \cite{SipserS96} Theorem 19, there is a polynomial time constructible family of codes of rate $1 - 2 H(\epsilon) = 0.2628\ldots \ge 1/4$, and relative distance approaching $\epsilon^2 = 1/200$. In particular, for sufficiently large code length $t$, the rate will be $\ge 1/4$ and the relative distance will be $\ge 1/400$.
  
  The expander codes described by \cite{SipserS96}, at code length $t$, require $\poly(t)$ time to construct the expander graph used, and can
  be encoded in $O(t^2)$ time.
\end{proof}

\begin{lemma}\label{lem:moment-gen-func}
   Let $W$ be a nonnegative integral random variable where, for all $k \in \NN$, $\Pr[W \ge k] \le 1/2^k$. Then $\EE[W] \le 1$; and furthermore, for all $t$ for which $e^t \in [1,2)$:
   \begin{align*}
        \EE[e^{t W}] \le \frac{1}{2 - e^t}
   \end{align*}
\end{lemma}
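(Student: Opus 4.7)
The plan is to treat both claims uniformly by comparing $W$ to the geometric-like random variable whose tail is exactly $2^{-k}$. Since we are told only $\Pr[W \ge k] \le 2^{-k}$, a clean tool is stochastic domination together with the tail-sum representation of expectation (and of the MGF).

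First I would handle $\EE[W] \le 1$ via the standard layer-cake identity: for a nonnegative integer-valued random variable,
\begin{align*}
    \EE[W] \;=\; \sum_{k=1}^{\infty} \Pr[W \ge k] \;\le\; \sum_{k=1}^{\infty} \frac{1}{2^k} \;=\; 1.
\end{align*}
This uses only the hypothesis and monotone convergence.

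For the MGF bound, I would let $G$ be a geometric random variable supported on $\{0,1,2,\ldots\}$ with $\Pr[G = k] = 2^{-(k+1)}$, so that $\Pr[G \ge k] = 2^{-k} \ge \Pr[W \ge k]$ for every $k \ge 0$. Hence $W$ is stochastically dominated by $G$, and since $x \mapsto e^{tx}$ is nondecreasing for $t \ge 0$, we get $\EE[e^{tW}] \le \EE[e^{tG}]$. Explicitly, by tail summation applied to the nonnegative integer-valued random variable $e^{tW}$ (or more transparently, by Abel summation: $\EE[e^{tW}] = 1 + \sum_{k=1}^{\infty}(e^{tk}-e^{t(k-1)})\Pr[W \ge k]$, with the boundary term vanishing as $e^t<2$), and bounding each $\Pr[W\ge k]$ by $2^{-k}$, one computes
\begin{align*}
    \EE[e^{tW}] \;\le\; \EE[e^{tG}] \;=\; \sum_{k=0}^{\infty} \frac{e^{tk}}{2^{k+1}} \;=\; \frac{1}{2}\cdot \frac{1}{1-e^{t}/2} \;=\; \frac{1}{2-e^{t}},
\end{align*}
where the geometric series converges precisely because $e^t < 2$.

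There is essentially no obstacle; the only delicate point is checking convergence at the upper endpoint. The assumption $e^t \in [1,2)$ ensures both that the Abel-summation boundary term $e^{tN}\Pr[W \ge N+1] \le (e^t/2)^{N+1}\cdot 2 \to 0$ vanishes and that the geometric series $\sum_k (e^t/2)^k$ converges, so the bound is tight (attained when $W = G$).
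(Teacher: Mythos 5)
Your proof is correct and follows essentially the same route as the paper: both arguments bound $\EE[e^{tW}]$ by the moment generating function of the geometric variable with $\Pr[G=k]=2^{-(k+1)}$, and the paper's "inequality step" is exactly the Abel-summation/stochastic-domination comparison you describe, followed by the same geometric series evaluation $\tfrac12\cdot\tfrac{1}{1-e^t/2}=\tfrac{1}{2-e^t}$. Your tail-sum bound $\EE[W]\le\sum_{k\ge1}2^{-k}=1$ likewise matches the paper's first step (and is stated a bit more cleanly).
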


\begin{proof}[Proof of \Cref{lem:moment-gen-func}]
  First,
  \begin{align*}
    \EE[W] = \sum_{k=0}^{\infty} \Pr[W \ge k] \le \sum_{k=0} 1/2^{k + 1} = 1
  \end{align*}
  Next,
  \begin{align*}
    \EE[e^{t W}] &= \sum_{k=0}^{\infty} e^{t k} \Pr[W = k] \le \sum_{k=0}^{\infty} e^{t k} \frac{1}{2^{k+1}} = \frac{1}{2} \sum_{k=0}^{\infty} \left(\frac{e^t}{2}\right)^k = \frac{1}{2}\cdot\frac{1}{1 - \frac{1}{2}e^t} = \frac{1}{2 - e^t}
  \end{align*}
  The inequality step follows because:
  % proof approach is a standard sequence manipulation trick
  \begin{align*}
     \sum_{k=0}^{\infty} e^{t k} \left(\frac{1}{2^{k+1}} - \Pr[W = k]\right)
        &= \sum_{k=0}^{\infty} e^{t k} \left((\frac{1}{2^{k}} - \frac{1}{2^{k+1}}) -(\Pr[W \ge k] - \Pr[W \ge k - 1])\right) \\
        &= \sum_{k=0}^{\infty} e^{t k} \left(\frac{1}{2^{k}} - \Pr[W \ge k]\right)
          - \sum_{k=0}^{\infty} e^{t k} \left(\frac{1}{2^{k + 1}} - \Pr[W \ge k - 1]\right) \\
        &= \sum_{k=0}^{\infty} e^{t k} \left(\frac{1}{2^{k}} - \Pr[W \ge k]\right)
          - \sum_{k=1}^{\infty} e^{t (k - 1)} \left(\frac{1}{2^{k}} - \Pr[W \ge k]\right) \\
        &= \left(\frac{1}{2^{0} - \Pr[W \ge 0]}\right) + \sum_{k=1}^{\infty} (e^t - e^{t (k - 1)}) \left( \frac{1}{2^{k} - \Pr[W \ge k]}\right) \\
        &= 0 + \sum_{k=1}^{\infty} e^{t (k -1)}(e^t - 1) \left(\frac{1}{2^{k}} - \Pr[W \ge k]\right) \ge 0
  \end{align*}
\end{proof}

By \cite{Morris13} plus some algebra, for any $\epsilon > 0$, a sequence of $O\left(d^3 + d \ln(\frac{1}{\epsilon}) \right)$ Thorp shuffle steps will produce a permutation on $[2^d]$ whose distribution has total variation distance at most $\epsilon$ away from the uniform distribution. % constant approx is good enough
\footnote{While there exist more efficient switching networks that also permute sets whose sizes are not powers of two, we use the result of \cite{Morris13} here because it is simple to work with. In particular, see the results claimed by \cite{Czumaj15}, although we could not find the full version of that paper.}

%\mscomment{TODO: use state of the art switching networks here? Need approx 'depth'*k-wise independence, with this construction; best might be O(log n log log n) for depth, vs a less uniform O(n log n} by gate count}

\begin{lemma}[Random permutations through switching networks]\label{lem:fast-permutations}
  For any $C$ which is a power of 2, there is an explicit construction of an $(\epsilon,s)$-wise independent
  random permutation, using $r = O(s (\log C)^4 \log \frac{1}{\epsilon})$ bits. Furthermore, we can evaluate $\sigma(i)$ and $\sigma^{-1}(i)$ in $O(s (\log C)^4 \log \frac{1}{\epsilon} \log C)$ time.
\end{lemma}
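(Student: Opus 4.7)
The plan is to instantiate the Thorp shuffle on $[C] = [2^d]$ for $d := \log C$ and $T = O(d^3 + d \log \tfrac{1}{\epsilon})$ rounds, using derandomized per-round coin flips. By \cite{Morris13}, $T$ rounds of the Thorp shuffle with truly uniform coin flips produce a permutation whose distribution lies within total variation $\epsilon$ of uniform on $S_C$; marginalization then yields $(\epsilon, s)$-wise independence for every $s$. The naive cost is $T \cdot 2^{d-1}$ random bits, which is far too many.

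The derandomization exploits the following observation: to verify $(\epsilon, s)$-wise independence it suffices to understand the joint distribution of the images of any fixed set of $s$ starting positions. At each round, the $s$ tracked positions lie in at most $s$ distinct Thorp pairs, so the round's effect on the trajectories depends on at most $s$ coin flips. I would therefore, independently at each of the $T$ rounds, draw the coin-flip vector in $\{0,1\}^{2^{d-1}}$ from a (genuinely) $s$-wise independent distribution, using only $O(sd)$ seed bits per round via the standard degree-$(s-1)$ polynomial over $\mathrm{GF}(2^d)$ construction. A straightforward induction over the $T$ rounds then shows: conditional on the trajectories produced by the first $t-1$ rounds, the $\le s$ coin flips that round $t$ queries are (exactly) uniform i.i.d., because round $t$'s seed is independent of the past and the round's bits are $s$-wise uniform. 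Hence the joint distribution of the tracked trajectories is \emph{identical} to the distribution under truly uniform coin flips, and so enjoys the same $\epsilon$ total-variation bound by Morris' theorem.

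The resulting total seed length is $T \cdot O(sd) = O(sd^4 + sd^2 \log \tfrac{1}{\epsilon})$, which is absorbed into $O(sd^4 \log \tfrac{1}{\epsilon})$. To evaluate $\sigma(i)$, one applies the $T$ Thorp rounds to $i$ in order; each round needs a single coin flip, computable in $O(sd^2)$ time by evaluating the round's degree-$(s-1)$ polynomial at one point of $\mathrm{GF}(2^d)$. For $\sigma^{-1}(i)$, one runs the same $T$ rounds in reverse order, which is valid because each Thorp round is a product of disjoint pair transpositions (or identities) and so is self-inverting given its coin flips. The total evaluation cost is $T \cdot O(sd^2) = O(sd^5 \log \tfrac{1}{\epsilon})$, matching the claim.

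The main obstacle is the inductive step in the argument that per-round $s$-wise independence suffices: one has to make precise that the index set of coin flips read at round $t$ is a deterministic function of the past coin flips, so the round-$t$ seed (independent of the past) delivers uniform i.i.d. bits on that specific size-$\le s$ index set, matching exactly what truly random bits would deliver. The remaining details (pinning down the form in which Morris' bound is applied to $[2^d]$, and the polynomial-evaluation cost over $\mathrm{GF}(2^d)$) are standard.
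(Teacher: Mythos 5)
Your proposal is correct, and it follows the same overall strategy as the paper (Thorp shuffle on $[2^d]$ run for $O(d^3 + d\log\frac{1}{\epsilon})$ rounds, Morris's mixing bound, then limited independence to cut the randomness from $T\cdot 2^{d-1}$ bits down to the claimed seed length), but the derandomization step is genuinely different. The paper draws \emph{all} $kC/2$ gate bits of the switching network from a single $ks$-wise independent hash family and argues via restrictions: the paths of any $s$ inputs through the depth-$k$ network touch at most $ks$ gates, so the probability of any path configuration is the same as under fully independent bits. You instead use fresh, mutually independent seeds per round, each only $s$-wise independent, and argue by induction that, conditioned on the trajectories so far, the (at most $s$) coin flips queried in the current round are exactly uniform because the queried index set is determined by the past and the round's seed is independent of it. Both arguments are sound and give the same seed length up to the slack already present in the stated bound; your per-round version has the mild advantage that each evaluation only requires degree-$(s-1)$ polynomial evaluations rather than degree-$(ks-1)$ ones, which makes the time accounting cleaner, while the paper's single-hash version keeps the random string monolithic, which is slightly more convenient when treating it as oracle randomness/advice. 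One small inaccuracy: a Thorp round is \emph{not} self-inverting, since besides the disjoint pairwise switches it applies a fixed interleaving (perfect-shuffle-type) wiring, which is not an involution. This does not affect your conclusion --- each round is trivially invertible given its coin flips (undo the wiring, then apply the same switch decisions), so $\sigma^{-1}(i)$ is still computed by running the rounds in reverse with their inverses, exactly as in the paper's "reverse the layers" remark --- but the justification should say that rather than claim involutivity.
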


\begin{proof}[Proof of \Cref{lem:fast-permutations}]
  Let $k = O(d^3 + d \ln(1/\epsilon))$ be the constant for which $k$ Thorp shuffle steps would permute $[C] = [2^d]$ within total variation distance of $\epsilon$ of the uniform distribution over permutations on $[2^d]$.
  
  % issue: would need to define switching network
  
  The switching network $\cN$ corresponding to the $k$ Thorp shuffle steps has depth $k$ and uses exactly $k C / 2$ gates. Assign each gate a unique number in $[kC/2]$. Then given a uniformly random bit vector $x \in \{0,1\}^{k C / 2}$, we evaluate the switching network by having the gate numbered $i$ switch its inputs iff $x_i = 1$. A key property of switching networks is that one can evaluate their action on a single
  input by only evaluating one gate per layer -- for this network, only $k$ gates. Reversing the order in which the layers are applied will produce the inverse of the original permutation. Thus, one can evaluate
  $\cN(i)$ by reading only $k$ entries of $x$, and similarly for and $\cN^{-1}{i}$.
  
  Now, say the bits of $x$ are the output of a hash function drawn from a $k s$-wise independent hash family. (For example, using a family of \cite{WegmanC81}, let $h = \ceil{\log_2{k C / 2}}$, take the family of random polynomials of degree $k s - 1$ inside $\FF_{2^{h}}$, and output the least bit of the output. The polynomial coefficients can be encoded using $k s h = O(k s \log(C))$ bits, and the polynomials evaluated at any point in $O(k s h^2 ) = O(k s (\log C)^2 )$ time.)
  
  If $\pi$ is a uniformly random permutation on $[C]$, then for all lists of distinct $h_1,\ldots,h_s$, and all lists of distinct $j_1,\ldots,j_s$, we have
  \begin{align*}
    \Pr[\bigwedge_{i\in[s]} \pi(h_i) = j_i] = \frac{1}{\prod_{i \in [s]} (C - i + 1)}
  \end{align*}
  
  Now, let $f : \{0,1\}^{k C/2} \times [C] \mapsto [C]^k$ be the function which maps the gate-controlling vector $x \in \{0,1\}^{k C / 2}$ and an input $a \in [C]$ to the path $b_1,\ldots,b_k$ that $a$ takes through the switching network $\cN$ if its gates are configured according to $x$. The last node of this path, $f(x, a)_k$ is the output of $\cN$ given $x$ and $a$. Each path $P = (a,b_1,\ldots,b_k)$ through the switching network corresponds to a restriction $R_P \in \{0,1,\star\}^{k C / 2}$ which has value $\star$ on gates not touched by the path, and for each gate traversed by the path assigns either $0$ or $1$ depending on whether a straight or switched configuration of the gate is compatible with $P$. Since all paths through the network have length $k$, $R_P$ only sets $k$ coordinates. Now, for each pair $(a,b) \in [C]^2$, let 
  \begin{align*}
    \cF_{a,b} = \{R_P : P = (a,b_1,\ldots,b_{k-1},b) \text{ is a possible path}\}
  \end{align*}
  % define restriction refinement?
  Then, for lists $(h_1,\ldots,h_s)$, $(j_1,\ldots,j_s)$, define 
  \begin{align*}
    \cK_{h_1,\ldots,h_s,j_1,\ldots,j_s} = \{ R \in \{0,1,\star\}^{k C / 2} : \forall i \in [k], \exists T \in \cF_{h_i,j_i} \text{where $R$ is a minimal refinement of $T$}  \}\,,
  \end{align*}
  i.e., the set of minimal restrictions for vectors in $\{0,1\}^{k C / 2}$ which
  completely determine the paths through the switching network of inputs $(h_1,\ldots,h_s)$. 
  
  Now, let $Y \in \{0,1\}^{k C / 2}$ be $k s$-wise independent, and $X \in \{0,1\}^{k C / 2}$ be fully independent. We have:
  \begin{align*}
    \Pr[\bigwedge_{i\in[s]} f(Y, h_i)_k = j_i] &= \sum_{R \in \cK_{h_1,\ldots,h_s,j_1,\ldots,j_s}} \Pr[Y \text{ compatible with $R$}] \\
    &= \sum_{R \in \cK_{h_1,\ldots,h_s,j_1,\ldots,j_s}} \Pr[X \text{ compatible with $R$}] \\
    &= \Pr[\bigwedge_{i\in[s]} f(X, h_i)_k = j_i]
  \end{align*}
  where the second inequality follows because each restriction $R \in \cK_{h_1,\ldots,h_s,j_1,\ldots,j_s}$ only constrains $s k / 2$ coordinates
  corresponding to the gates on the paths in the switching network from $h_1,\ldots,h_s$ to $j_1,\ldots,j_s$ that it fixes.
  
  Thus,
  \begin{align*}
      &\frac{1}{2} \sum_{\text{distinct $b_1,\ldots,b_k$ in $[C]$,}} \left| \Pr\left[\bigwedge_{i \in [s]} \{ f(X, h_i)_k = j_i \} \right] - \frac{1}{\prod_{i \in [s]} (C - i + 1)} \right| = \\
      &\frac{1}{2} \sum_{\text{distinct $b_1,\ldots,b_k$ in $[C]$,}} \left| \Pr\left[\bigwedge_{i \in [s]} \{ f(Y, h_i)_k = j_i \} \right] - \frac{1}{\prod_{i \in [s]} (C - i + 1)} \right| \le \epsilon
  \end{align*}
  which proves that the switching network evaluated on $X$ produces outputs that are $(\epsilon,s)$-wise independent.
\end{proof}

%%%%%%%%%%%%%%%%%%%%%%%%%%%%%%%%%%%%%%%%%%%%%%%%%%%%%%%%%%%%%%%
%%%%%%%%%%%%%%%%%%%%%%%%%%%%%%%%%%%%%%%%%%%%%%%%%%%%%%%%%%%%%%%
%%%%%%%%%%%%%%%%%%%%%%%%%%%%%%%%%%%%%%%%%%%%%%%%%%%%%%%%%%%%%%%

\section{Regarding implementation}

Several of the algorithms in this paper rely on the availability of oracle randomness (i.e, having a long read only random string) in order to avoid
the space penalty of explicitly storing many independent random permutations.
In practice (where we assume cryptographic pseudo-random number generators exist),
it is straightforward to generate the bits of the oracle random string on demand,
ensuring that computationally bounded systems essentially cannot
produce hard inputs for the algorithm.

The randomized algorithms \Cref{alg:bpt-rand-va-edgecol} and \Cref{alg:rand-ea-edgecol} both use the same idea of trying and discarding (making unavailable for future use), either immediately or periodically, a set of fresh colors chosen by iterating over a random permutation. This construction has the downside that, since many colors are discarded, the total number $C$ of colors in the algorithm might use must be large. Instead of discarding colors,
a possibly more efficient approach is to retain, for each vertex, a pool of all
the colors that were tried but not used; this ensures that colors are only removed from consideration when they have been used. The downside of retaining unused colors
is an increased space usage that is harder to prove upper bounds for. For the following two algorithms, \Cref{alg:conj-va} and \Cref{alg:conj-ea}, for one-sided vertex arrival, and edge arrival streams,
we conjecture that they require $\tO(n)$ and $\tO(n\sqrt{\Delta})$ bits of space
with high probability, but have not been able to prove this. The second algorithm
in particular is rather similar to an online edge coloring algorithm conjectured to
use $\Delta + O(\sqrt{\Delta} \log n)$ colors by \cite{BarNoyMN92}, in which each
edge is assigned a uniformly random color from the set of colors that no edges
incident to its endpoints have used.

\begin{algorithm}[htb!]
  \caption{A simple randomized algorithm for $(2\Delta-1)$-edge coloring in the one-sided vertex arrival model which is conjectured to use $O(n \log \Delta)$ space w.h.p.; and uses $\tO(n\Delta)$ oracle random bits\label{alg:conj-va}}
  \begin{algorithmic}[1]
  \Statex \textbf{Input}: Stream of one-sided vertex arrivals on $n$-vertex graph $G=(A \sqcup B)$. 
    \Statex Let $C := 2 \Delta - 1$.
    \Statex
    \Statex \underline{\textbf{Initialize():}}
      \For{$z \in B$}
        \State $\sigma_z \gets$ uniformly randomly chosen permutation over $[C]$
        \State $h_z \gets 1$
        \State $F_z \gets \emptyset$
      \EndFor
    
  \Statex 
  \Statex\underline{\textbf{Process}(vertex $a \in A$, adjacent edges $M_a$)}
    \State $S \gets \emptyset$
    \For{$\{a,b\} \in M_a$, in random order}
      \While{$F_b \subseteq S$}
        \State $F_b \gets F_b \cup \sigma_b[h_b]$
        \State $h_b \gets h_b + 1$
      \EndWhile
      \State Let $c$ be random color from $F_b \setminus S$
      \State Assign color $c$ to edge $\{a,b\}$
      \State $F_b \gets F_b \setminus \{c\}$
      \State $S \gets S \cup \{c\}$
    \EndFor
  \end{algorithmic}
\end{algorithm}

\begin{algorithm}[htb!]
  \caption{A simple randomized algorithm for $(2\Delta-1)$-edge coloring in the edge arrival model which is conjectured to use $O(n \sqrt{\Delta} \log \Delta)$ space w.h.p.; and uses $\tO(n\Delta)$ oracle random bits\label{alg:conj-ea}}
  \begin{algorithmic}[1]
  \Statex \textbf{Input}: Stream of edge arrivals on $n$-vertex graph $G=(V, E)$. 
    \Statex Let $C := 2 \Delta - 1$.
    \Statex
    \Statex \underline{\textbf{Initialize:}}
      \For{$v \in B$}
        \State $\sigma_v \gets$ uniformly randomly chosen permutation over $[C]$
        \State $h_v \gets 1$
        \State $F_v \gets \emptyset$
      \EndFor
    
  \Statex 
  \Statex\underline{\textbf{Process}(edge $\{x,y\}$):}
    \While{$F_x \cap F_y = \emptyset$}
        \State $F_x \gets F_x \cup \sigma_x[h_y]$
        \State $h_x \gets h_x + 1$
        \State $F_y \gets F_y \cup \sigma_y[h_y]$
        \State $h_y \gets h_y + 1$
    \EndWhile
    \State Let $c$ be random color from $F_x \cap F_y$
    \State Assign color $c$ to edge $\{x,y\}$
    \State $F_x \gets F_x \setminus \{c\}$
    \State $F_y \gets F_y \setminus \{c\}$
    \State $S \gets \emptyset$
  \end{algorithmic}
\end{algorithm}

\bibliographystyle{alpha}
\bibliography{refs}

\end{document}